\definecolor{darkgreen}{rgb}{0,0.5,0}
\definecolor{purple}{rgb}{1,0,1}
\newcommand{\kibitz}[2]{\ifnum\Comments=0\textcolor{#1}{#2}\fi}
\theoremstyle{plain}
\newtheorem{thm}{Theorem}[section] % reset theorem numbering for each chapter
\theoremstyle{definition}
\newtheorem{lem}[thm]{Lemma}
\newtheorem{prop}[thm]{Proposition}
\newtheorem{remark}[thm]{Remark}
\newtheorem{cor}[thm]{Corollary}
\newtheorem{example}[thm]{Example}
\begin{document}
%
% paper title
% can use linebreaks \\ within to get better formatting as desired
% Do not put math or special symbols in the title.
\title{Scaling the Kalman filter for large-scale\\traffic estimation}
%
%
% author names and IEEE memberships
% note positions of commas and nonbreaking spaces ( ~ ) LaTeX will not break
% a structure at a ~ so this keeps an author's name from being broken across
% two lines.
% use \thanks{} to gain access to the first footnote area
% a separate \thanks must be used for each paragraph as LaTeX2e's \thanks
% was not built to handle multiple paragraphs
%

\author{Ye~Sun and~Daniel~B.~Work% <-this % stops a space
%\thanks{This material is based upon work supported by the National Science Foundation under Grant No. CMMI-1351717.}% <-this % stops a space
\thanks{The authors are with the Department of Civil and Environmental Engineering and Coordinated Science Laboratory, University of Illinois at Urbana-Champaign, Urbana,
IL, 61801 USA (e-mail: yesun@illinois.edu; dbwork@illinois.edu).}% <-this % stops a space
% <-this % stops a space
\thanks{}}

% note the % following the last \IEEEmembership and also \thanks -
% these prevent an unwanted space from occurring between the last author name
% and the end of the author line. i.e., if you had this:
%
% \author{....lastname \thanks{...} \thanks{...} }
%                     ^------------^------------^----Do not want these spaces!
%
% a space would be appended to the last name and could cause every name on that
% line to be shifted left slightly. This is one of those "LaTeX things". For
% instance, "\textbf{A} \textbf{B}" will typeset as "A B" not "AB". To get
% "AB" then you have to do: "\textbf{A}\textbf{B}"
% \thanks is no different in this regard, so shield the last } of each \thanks
% that ends a line with a % and do not let a space in before the next \thanks.
% Spaces after \IEEEmembership other than the last one are OK (and needed) as
% you are supposed to have spaces between the names. For what it is worth,
% this is a minor point as most people would not even notice if the said evil
% space somehow managed to creep in.

% The paper headers
\markboth{}%
{}
% The only time the second header will appear is for the odd numbered pages
% after the title page when using the twoside option.
%
% *** Note that you probably will NOT want to include the author's ***
% *** name in the headers of peer review papers.                   ***
% You can use \ifCLASSOPTIONpeerreview for conditional compilation here if
% you desire.

% If you want to put a publisher's ID mark on the page you can do it like
% this:
%\IEEEpubid{0000--0000/00\$00.00~\copyright~2012 IEEE}
% Remember, if you use this you must call \IEEEpubidadjcol in the second
% column for its text to clear the IEEEpubid mark.

% use for special paper notices
%\IEEEspecialpapernotice{(Invited Paper)}

% make the title area
\maketitle

% As a general rule, do not put math, special symbols or citations
% in the abstract or keywords.
\begin{abstract}
This work introduces a scalable filtering algorithm for multi-agent traffic estimation. Large-scale networks are spatially partitioned into overlapping road sections. The traffic dynamics of each section is given by the \emph{switching mode model} (SMM) using a conservation principle, and the traffic state in each section is estimated by a local agent. In the proposed filter, a consensus term is applied to promote inter-agent agreement on overlapping sections. The new filter, termed a \emph{(spatially) distributed local Kalman consensus filter} (DLKCF), is shown to maintain \emph{globally asymptotically stable} (GAS) mean error dynamics when all sections switch among observable modes. When a section is unobservable, we show that the mean estimate of each state variable in the section is ultimately bounded, which is achieved by exploring the interaction between the properties of the traffic model and the measurement feedback of the filter. Based on the above results, the boundedness of the mean estimation error of the DLKCF under switching sequences with observable and unobservable modes is established to address the overall performance of the filter. Numerical experiments show the ability of the DLKCF to promote consensus, increase estimation accuracy compared to a local filter, and reduce the computational load compared to a centralized approach. %Supplementary source code can be downloaded at https://github.com/yesun/DLKCF.
\end{abstract}

% Note that keywords are not normally used for peerreview papers.
\begin{IEEEkeywords}
 Transportation networks, distributed Kalman filter, consensus filter.
\end{IEEEkeywords}

% For peer review papers, you can put extra information on the cover
% page as needed:
% \ifCLASSOPTIONpeerreview
% \begin{center} \bfseries EDICS Category: 3-BBND \end{center}
% \fi
%
% For peerreview papers, this IEEEtran command inserts a page break and
% creates the second title. It will be ignored for other modes.
\IEEEpeerreviewmaketitle

\section{Introduction}
% The very first letter is a 2 line initial drop letter followed
% by the rest of the first word in caps.
%
% form to use if the first word consists of a single letter:
% \IEEEPARstart{A}{demo} file is ....
%
% form to use if you need the single drop letter followed by
% normal text (unknown if ever used by IEEE):
% \IEEEPARstart{A}{}demo file is ....
%
% Some journals put the first two words in caps:
% \IEEEPARstart{T}{his demo} file is ....
%
% Here we have the typical use of a "T" for an initial drop letter
% and "HIS" in caps to complete the first word.

% You must have at least 2 lines in the paragraph with the drop letter
% (should never be an issue)
%\subsection{Motivation and related work}\label{sec:motivation}
\IEEEPARstart{D}{espite} important advances in sensing and computation, real-time traffic estimation problems are still open to a number of critical issues including: (\textit{i}) the entire state of the transportation network is too large (usually of order at least $10^5$) for the estimators to scale in real time; (\textit{ii}) few results are available that provide a theoretical analysis of the performance of traffic estimation algorithms, and (\textit{iii}) the non-observability of the traffic model is inevitable due to the existence of shocks and the sparsity of sensor measurements. This work aims at designing a scalable distributed traffic estimation algorithm to address issues (\textit{i}) and (\textit{ii}) with specific care of issue (\textit{iii}). The large-scale network is partitioned into overlapping sections, and the traffic density on each section evolves according to a conservation law traffic model. The density is estimated by a cheap commodity computer (referred hereafter as an agent) associated with each section. However, without coordination among agents, estimates provided by different agents inevitably disagree
on the shared boundaries due to model and measurement errors. This potentially leads to problems where applications computed based on traffic estimates (e.g., navigation, traffic control) produce disparate results depending on which agents provide the estimates. To promote agreement between neighboring agents on their shared states, each agent shares sensor data and estimates with its neighbors, and a consensus term is introduced. The filter trades global optimality in favor of scalability both in terms of communication and computation, thus the proposed filter is suboptimal. Regardless, the proposed filter has performance guarantees when traffic state is observable or unobservable, as well as when the system switches between observable and unobservable modes. Specifically, in unobservable scenarios, the physical properties of the traffic model (i.e., mass conservation and a flow-density relationship) are combined with the measurement feedback in the correction step of the filter to analyse the theoretical performance of the filter.
%For a transportation network at the scale of a megacity, a centralized estimator that tracks the entire state of the network may require large and expensive computing resources to meet real-time constraints. An alternative is to partition large networks into a few local regions, with each region estimated by a cheap commodity computer (e.g., an agent), thus distributing the computational burden. However, without coordination between adjacent or overlapping partitions, estimates provided by different agents may disagree on the shared boundaries. This motivates the introduction of information sharing among agents to compensate for the lack of a central estimator, thus enhancing estimation consistency while also enabling computational scalability.

Research on collaborative information processing is driven by the broad applications of multi-agent systems. A complete communication network with all-to-all links is required in the \emph{decentralized Kalman filter} \cite{Rao1993}, or relaxed in the \emph{channel filter} \cite{grime1994data} for the fixed tree communication topology. Recently, the application of consensus strategies in distributed estimation is widely studied to promote agreement on estimates among agents \cite{olfati2007distributed,Olfati-SaberCDC2009,demetriou2013adaptive}, and/or to reconstruct sensor data not directly accessible through purely sharing measurements with neighbors \cite{bai2011distributed,Kamal2013,battistelli2015consensus}, thus approximating the central estimator. To ensure the stability of the estimators, each local system is assumed to be observable (or detectable) in \cite{olfati2007distributed,Olfati-SaberCDC2009,demetriou2013adaptive}, or the full system observability is only achieved given all the sensor data in the network \cite{bai2011distributed,Kamal2013,battistelli2015consensus}. A common feature of \cite{Rao1993,grime1994data,olfati2007distributed,Olfati-SaberCDC2009,demetriou2013adaptive,bai2011distributed,Kamal2013,battistelli2015consensus} is that all agents estimate the same full state of dimension $n$, which may not scale in large-scale traffic networks since the complexity of the \textit{Kalman filter} (KF) is $O(n^3)$. Moreover, the non-observability of the traffic model cannot be resolved even if all measurements throughout the network are fused.

There are also notable works on scalable distributed estimation algorithms where each agent estimates (or performs computation on) a small subset of the full state. Specifically in \cite{khan2008distributing,stankovic2009consensus}, the large-scale state vector is partitioned into overlapping local states of dimension $n_l\ll n$, and the computation task is distributed across local agents.  In \cite{khan2008distributing}, the cross-correlation of neighboring agents is incorporated in the estimation error covariance at the expense of requiring a $O(n_l^4)$ complexity at each local agent. However, the stability of the proposed estimator is not analyzed. In \cite{stankovic2009consensus}, a consensus term is designed to help each local agent reconstruct the estimates of other local states, and is analyzed only when all local filters are detectable and have achieved a steady state. Other relevant treatments include moving-horizon estimation \cite{farina2010moving} and distributed Kriged Kalman filtering \cite{cortes2009distributed}. However, they either require extensive communication, or rely on the statistics of random fields which are not directly applicable for traffic dynamics. Moreover, the estimators~\cite{farina2010moving,cortes2009distributed} are not analysed when the model is unobservable.

A number of sequential estimation algorithms have been applied for traffic monitoring. %\cite{wang2005rtf,Mihaylova2007,Hoogendoorn2012}.
Due to the non-linearity and non-differentiability \cite{Blandin2012OnSequential} of the nonlinear hyperbolic conservation law used to describe traffic, few results exist which rigorously prove the performance of the proposed estimators. In \cite{Munoz2006TRR}, the discretized conservation law is transformed to a switched linear system known as the \emph{switching mode model} (SMM), and the observability of each mode is analyzed. The properties of the error dynamics of a Luenberger observer in various modes of the SMM is given in \cite{Carlos2011}, which inspires this work and is extended in~\cite{CarlosConstrainedCTM,zeroual2015calibration}. An recent overview of sequential estimation for scalar traffic models is given in \cite{Blandin2012OnSequential}. Another interesting line of work focuses on designing estimators and associated numerical schemes directly for conservation laws, see \cite{banks2012estimation} and references therein.

%\subsection{Contributions and outline of the article}
The main contribution of this article is the design and analysis of a (spatially) \emph{distributed local Kalman consensus filter} (DLKCF) (Section~\ref{sec:DLKCF}) with provable performance and neighbor consistency. The DLKCF is proposed to estimate traffic densities on large freeways, with the system dynamics described by the SMM (Section \ref{sec:MacroModeling}). We analyse the performance of the DLKCF under various observability scenarios, yielding three main results: (\textit{i}) the dynamics of the mean estimation error is \emph{globally asymptotically stable} (GAS) when all sections switch among observable modes of the SMM (Section \ref{sub:ObservableConvergence}); (\textit{ii}) when a section switches among unobservable modes, the mean estimate is ultimately bounded inside a physically meaningful interval (Section \ref{sub:UnobservableBound}); and (\textit{iii}) the mean estimation error is upper bounded for sections that switch among observable and unobservable modes, provided a minimum residence time in the observable mode(s) is satisfied (Section \ref{section:switching}). The above results focus on the mean estimate and are derived based on the stability (or bounded partitions) of the estimation error covariances (given in the lemmas proceeding the propositions). Numerical results (Section \ref{sec: NumericalExperiments}) show the effect of the consensus term on reducing disagreement between estimates given by neighboring agents (with \textasciitilde $50\%$ reduction), and that the DLKCF outperforms a purely local KF on estimation accuracy. 

Compared to our preliminary work \cite{SunWorkCDC14}, the main extension is to prove the overall performance of the DLKCF under switches among observable and unobservable modes. The DLKCF is also modified to be scalable both in the sense of computation (i.e., with cubic computational complexity in the local dimension) and communication (i.e., each agent only communicates with its one-hop neighbors, and the global communication topology is not needed). Specifically in Proposition \ref{Prop:ObservableGUAS}, the upper bound of the scaling factor in the consensus gain is modified to depend only on the information provided by one-hop neighbors. % (rather than analysing the infinite time horizon behaviours of the filter in observable and unobservable scenarios individually).
%(\textit{iii}) in numerical experiments, we add comparisons between the DLKCF and two other scalable estimators, namely the Luenberger observer and independent local KFs.

%Section \ref{sec:MacroModeling} of this work summarizes the SMM, and Section \ref{sec:DLKCF} introduces the DLKCF. In Section \ref{sub:ObservableConvergence}, we prove that the DLKCF is GAS under the observable modes of the SMM. For an unobservable section, we prove in Section \ref{sub:UltimateBoundSubSec} that the mean of state estimates are ultimately bounded, and Section \ref{sub:MassConservation} proves the convergence of the sum of the mean state errors. Numerical experiments are presented in Section \ref{sec: NumericalExperiments}.

%\subsubsection*{Notations}
%$I_n$ and $\bm{0}_{n,m}$ denote the $n\times n$ identity matrix and the $n\times m$ zero matrix, respectively. Denote as $\bm{1}$ the vector of all ones whose dimension will be clear from the context. Finally, the bold font $\bm{x}$ denotes the mean of random vector $x$.
\section{Scalar macroscopic traffic modeling}
\label{sec:MacroModeling}

\subsection{Cell transmission model}\label{sec:CTM}

The classical conservation law describing the evolution of traffic density $\rho(t,x)$ on a road at location $x$ and time $t$ is the \emph{Lighthill-Whitham-Richards} \emph{partial differential equation} (LWR PDE) \cite{richards1956swh,Lighthill1955}:
\begin{align}
\partial_t\rho+\partial_x\mathfrak{F}(\rho)=0.\label{eq:LWRPDE}
\end{align}
The function $\mathfrak{F}(\rho)=\rho \mathfrak{v}(\rho)$ is called the flux function, where $\mathfrak{v}(\rho)$ is an empirical velocity function used to close the model. The triangular flux function \cite{DaganzoCell95} used in this work is given by
\begin{align}
\mathfrak{F}(\rho)= \left\{ \begin{array}{ll}
\rho v_{\text{m}}& \textrm{if $\rho \in [0,\varrho_{\text{c}}]$}\\
w(\varrho_{\text{m}}-\rho)& \textrm{if $\rho \in [\varrho_{\text{c}},\varrho_{\text{m}}]$,}\
\end{array} \right.\label{eq:FluxTraingular}
\end{align}
where $w=\frac{\varrho_{\text{c}} v_{\text{m}}}{\varrho_{\text{m}}-\varrho_{\text{c}}}$, $v_{\text{m}}$ denotes the \emph{freeflow speed} and $\varrho_{\text{m}}$ denotes the \emph{maximum density}. The variable $\varrho_{\text{c}}$ is the \emph{critical density} at which the maximum flux is realized. For the triangular fundamental diagram, the flux function has different slopes in \emph{freeflow} ($0<\rho \leq\varrho_{\text{c}}$) and \emph{congestion} ($\varrho_{\text{c}}<\rho \le \varrho_{\text{m}}$). In freeflow, the slope is $v_{\text{m}}$, and in congestion, it is $w$.

The \emph{cell transmission model} (CTM) \cite{DaganzoCell95} is a discretization of \eqref{eq:LWRPDE} and \eqref{eq:FluxTraingular} using a Godunov scheme. Consider a discretization grid defined by a space step $\Delta x >0$ and a time step $\Delta t >0$. Let $l$ index the cell defined by $x\in [l\Delta x, (l+1)\Delta x)$, and denote as $\rho^l_{k}$ the density at time $k\Delta t$ in cell $l$, where $k\in \mathbb{N}$ and $l\in \mathbb{N}^{+}$. The discretized model \eqref{eq:LWRPDE} becomes
\begin{align}
\rho^l_{k+1}=\rho^l_{k}+\frac{\Delta t}{\Delta x}\left(\mathfrak{f}(\rho^{l-1}_{k},\rho^l_{k})-\mathfrak{f}(\rho^l_{k},\rho^{l+1}_{k})\right),\label{eq:DiscretizedLWR}
\end{align}
where $\mathfrak{f}(\rho^{l-1}_{k},\rho^l_{k})$ is the flux between cell $l-1$ and $l$:
\begin{align}
\mathfrak{f}(\rho^{l-1}_{k},\rho^l_{k})=\min \{v_{\text{m}}\rho^{l-1}_{k}, w(\varrho_{\text{m}}-\rho^l_{k}), q_{\text{m}}\},\label{eq:qflow}
\end{align}
where $q_{\text{m}}$ is the maximum flow given by $q_{\text{m}}=v_{\text{m}}\varrho_{\text{c}}$. Note that if the \emph{Courant–-Friedrichs–-Lewy} (CFL) condition is satisfied, the solution of the CTM converges in $L^1$ to the weak solution of the LWR PDE as $\Delta x\rightarrow 0$.

\subsection{Switching mode model}\label{subsec:SMM}

In the SMM \cite{Munoz2006TRR}, \eqref{eq:DiscretizedLWR} is written as a hybrid linear system whose system dynamics switches among different modes depending on the state of the boundary cells.

Consider a freeway section with $n$ cells with the state variable at time step $k\in \mathbb{N}$ defined as $\rho_k=\left(\rho^1_{k},\cdots,\rho^n_{k}\right)^T$. The SMM is derived from~\eqref{eq:LWRPDE} under the main assumption that there is at most one transition between freeflow and congestion in each section. From an estimation point of view, the SMM also assumes the road network is partitioned into sections with sensors located in the first and last cell, such that the densities $\rho^1_{k}$ and $\rho^n_{k}$ are directly measured. Finally, the SMM assumes the boundary density measurements are sufficiently accurate to distinguish between four of the five modes described next, but they cannot determine the precise location or direction of a shock.

Given the assumption of at most one transition in a section, the SMM may switch between the following five modes: (\textit{i}) \textit{freeflow--freeflow} (FF), in which all cells in the section are in freeflow; (\textit{ii}) \textit{congestion--congestion} (CC), in which all cells in the section are in congestion; (\textit{iii}) \textit{congestion--freeflow} (CF), in which the cells in the upstream part of the section (i.e., the cells in the upstream side of the transition between freeflow and congestion based on the direction of travel) are congested, and the cells in the downstream part are in freeflow; (\textit{iv}) \textit{freeflow--congestion 1} (FC1), in which the upstream part of the section is in freeflow, the downstream part is in congestion, and the shock has positive velocity or is stationary; and (\textit{v}) \textit{freeflow--congestion 2} (FC2), in which the upstream part of the section is in freeflow, the downstream part is in congestion, and the shock has negative velocity. Note the boundary sensors cannot distinguish between modes (\textit{iv}) and (\textit{v}).

In each mode stated above, the traffic state $\rho_k$ evolves with linear dynamics, forming a hybrid system:
\begin{align}
\rho_{k+1}=A_{\sigma(k)}^{s(k)}\rho_k+B^{\rho,s(k)}_{\sigma(k)}\bm{1}\varrho_{\text{m}}+B^{q,s(k)}_{\sigma(k)}\bm{1}q_{\text{m}}, \label{eq:SMMeq}
\end{align}
where $\bm{1}$ is the vector of all ones, and $A_{\sigma(k)}^{s(k)}$, $B^{\rho,s(k)}_{\sigma(k)}$, $B^{q,s(k)}_{\sigma(k)}\in \mathbb{R}^{n\times n}$ are to be defined precisely later. The index $\sigma(k)\in\mathcal{S}$ where $\mathcal{S}=\{\text{FF, CC, CF, FC1, FC2}\}$ is the set of the five modes, and $s(k)\in\{1,\cdots,n-1\}$ is the index introduced to precisely locate the transition between freeflow and congestion when it exists. We say $s(k)=l$ when the transition occurs between cell $l$ and $l+1$.

For all $p\in\{1,2,\cdots,n-1\}$, define $\Theta_p\in\mathbb{R}^{p\times p}$ and $\Delta_p\in\mathbb{R}^{p\times p}$ by their $(i,j)^{\textrm{th}}$ entries as
\begin{align}
\Theta_p(i,j)= \left\{ \begin{array}{ll}
1-\frac{v_{\text{m}}\Delta t}{\Delta x}& \textrm{if $i=j$}\\
\frac{v_{\text{m}}\Delta t}{\Delta x}& \textrm{if $i=j+1$}\\
0& \textrm{otherwise,}
\end{array} \right.\notag
\end{align}
\begin{align}
\Delta_p(i,j)= \left\{ \begin{array}{ll}
1-\frac{w\Delta t}{\Delta x}& \textrm{if $i=j$}\\
\frac{w\Delta t}{\Delta x}& \textrm{if $i=j-1$}\\
0& \textrm{otherwise.}
\end{array} \right.\notag
\end{align}

In the FF mode, the mode index $\sigma=\text{FF}$, and the transition does not exist. The explicit forms of $A_{\sigma}^s$, $B^{\rho,s}_{\sigma}$, and $B^{q,s}_{\sigma}$ are:
\begin{align}
A_{\text{FF}}=\left(
  \begin{array}{cc}
  1&\bm{0}_{1,n-1}\\
  \left(
  \begin{array}{c}
  \frac{v_{\text{m}}\Delta t}{\Delta x}\\
  \bm{0}_{n-2,1}
  \end{array}
\right)&\Theta_{n-1}
  \end{array}
\right), B^{\rho}_{\text{FF}}=B^{q}_{\text{FF}}=\bm{0}_{n,n}, \notag
\end{align}
where $\bm{0}_{n,m} \in \mathbb{R}^{n\times m}$ which is zero everywhere. In the CC mode, the transition also does not exist, and
\begin{align}
A_{\text{CC}}=\left(
  \begin{array}{cc}
  \Delta_{n-1}&\left(
  \begin{array}{c}
  \bm{0}_{n-2,1}\\
  \frac{w\Delta t}{\Delta x}
  \end{array}
\right)\\
  \bm{0}_{1,n-1}&1
  \end{array}
\right), B^{\rho}_{\text{CC}}=B^{q}_{\text{CC}}=\bm{0}_{n,n}.  \notag
\end{align}
The FF (resp. CC) mode is observable given density measurement of the downstream (resp. upstream) cell.

In the CF mode, the mode index $\sigma=\text{CF}$, and
\begin{align}
A_{\text{CF}}^{s}=\left(
  \begin{array}{cc}
  \Delta_{s}&\bm{0}_{s,n-s}\\
  \bm{0}_{n-s,s}&\Theta_{n-s}
  \end{array}
\right), B^{\rho,s}_{\text{CF}}=\bm{0}_{n,n}+\frac{w\Delta t}{\Delta x}E_{s,s}, \notag
\end{align}
\begin{align}
B^{q,s}_{\text{CF}}=\bm{0}_{n,n}-\frac{\Delta t}{\Delta x}E_{s,s+1}+\frac{\Delta t}{\Delta x}E_{s+1,s+1}, \notag
\end{align}
where $E_{i,j}$ are matrices that are zero everywhere but the $(i,j)^{\textrm{th}}$ entry, which is one. Note that $s$ may take any value in $\{1,\cdots,n-1\}$, depending on the location of the center of the expansion fan connecting the congested and freeflow states. The CF mode is observable given density measurements of the upstream and downstream cells.

In the two FC modes, define $\hat{\Theta}_p$ and $\hat{\Delta}_p$ as follows:
\begin{align}
\hat{\Theta}_p=\left\{ \begin{array}{ll}
\left(
  \begin{array}{cc}
    1&\bm{0}_{1,p}\\
    \left(
  \begin{array}{c}
  \frac{v_{\text{m}}\Delta t}{\Delta x}\\
  \bm{0}_{p-1,1}
  \end{array}
\right)&\Theta_p
  \end{array}
\right)&\textrm{if $p \in\{1,\cdots,n-1\}$,} \\
1&\textrm{if $p=0$,}
\end{array} \right.\notag
\end{align}
and
\begin{align}
\hat{\Delta}_p=\left\{ \begin{array}{ll}
\left(
  \begin{array}{cc}
    \Delta_p& \left(
  \begin{array}{c}
  \bm{0}_{1,p-1}\\
  \frac{w\Delta t}{\Delta x}
  \end{array}
\right)\\
   \bm{0}_{1,p}&1
  \end{array}
\right)&\textrm{if $p \in\{1,\cdots,n-1\}$,} \\
1&\textrm{if $p=0$.}
\end{array} \right.\notag
\end{align}
When $\sigma=\text{FC1}$ and $s\in\{1,\cdots,n-2\}$, or $\sigma=\text{FC2}$ and $s\in\{2,\cdots,n-1\}$, the matrices $A_{\sigma}^s$, $B^{\rho,s}_{\sigma}$, and $B^{q,s}_{\sigma}$ read:
\begin{align}
A_{\sigma}^s=\left(
  \begin{array}{ccc}
\hat{\Theta}_{\tilde{s}-1}&\bm{0}_{\tilde{s},1}&\bm{0}_{\tilde{s},\bar{s}}\\
\left(
  \begin{array}{cc}
  \bm{0}_{1,\tilde{s}-1}&\frac{v_{\text{m}}\Delta t}{\Delta x}
  \end{array}
\right)&1&\left(
  \begin{array}{cc}
  \frac{w\Delta t}{\Delta x}&\bm{0}_{1,\bar{s}-1}
  \end{array}
\right)\\
\bm{0}_{\bar{s},\tilde{s}}&\bm{0}_{\bar{s},1}&\hat{\Delta}_{\bar{s}-1}
\end{array}
\right),  \notag
\end{align}
\begin{align}
B^{\rho,s}_{\sigma}=\left(
  \begin{array}{cc}
  \bm{0}_{\tilde{s}+1,\tilde{s}+1}&\left(
  \begin{array}{cc}
  \bm{0}_{\tilde{s},1}&\bm{0}_{\tilde{s},\bar{s}-1}\\
  -\frac{w\Delta t}{\Delta x}&\bm{0}_{1,\bar{s}-1}
  \end{array}
\right)\\
  \bm{0}_{\bar{s},\tilde{s}+1}&\bm{0}_{\bar{s},\bar{s}}
  \end{array}
\right),  B^{q,s}_{\sigma}=\bm{0}, \notag
\end{align}
where for $\sigma=\text{FC1}$ we have $\tilde{s}=s$ and $\bar{s}=n-s-1$, and for $\sigma=\text{FC2}$ we have $\tilde{s}=s-1$ and $\bar{s}=n-s$. When $\sigma=\text{FC1}$ and $s=n-1$, we have $A_{\sigma}^s=\textrm{diag}(\hat{\Theta}_{n-2},1)$ (i.e., with $\hat{\Theta}_{n-2}$ and 1 on the diagonal), and $A_{\sigma}^s=\textrm{diag}(1,\hat{\Delta}_{n-2})$ when $\sigma=\text{FC2}$ and $s=1$. For both cases, $B^{\rho,s}_{\sigma}=B^{q,s}_{\sigma}=\bm{0}_{n,n}$. The two FC modes are not observable unless density measurements of all the cells are available, which does not occur in practical discretizations of road networks. We classify the state transition matrices according to the observability of the SMM. Define the matrix set with state transition matrices associated with the observable and unobservable modes as $\mathcal{A}_{\text{O}}=\left\{A_{\text{FF}},A_{\text{CC}},A_{\text{CF}}^s\left|s\in\left\{1,2,\cdots,n-1\right\}\right.\right\}$ and $\mathcal{A}_{\text{U}}=\left\{A_{\text{FC1}}^s,A_{\text{FC2}}^s\left|s\in\left\{1,2,\cdots,n-1
\right\}\right.\right\}$, respectively. The set of all state transition matrices is thus defined as $\mathcal{A}=\mathcal{A}_{\text{O}}\cup \mathcal{A}_{\text{U}}$. 

For consistency with the shock dynamics in~\eqref{eq:LWRPDE}, the allowed mode transitions are enumerated in the graph constrained-SMM \cite{CarlosConstrainedCTM}. The results in this article hold for the graph constrained and more general switching sequences.

The observability results of the SMM for individual modes can be derived directly from computing the rank of the observability matrix for each mode given \eqref{eq:SMMeq} and the observation equation $z_k=H_k\rho_k$, where $z_k$ is the measurement, and $H_k$ is the appropriate output matrix. From a physical viewpoint, the non-observability of the SMM is due to the irreversibility of the LWR PDE given the available sensor measurements in the presence of shocks, and is not due to the discretization.

%\begin{remark}\label{RM:SystemInput}
%In the SMM proposed in \cite{Sun2003,Munoz2006TRR}, an additional assumption requires the precise inflow and outflow of the section as inputs of the system. Here we instead assign constant dynamics for the boundary cells subject to uncertainty. It is assumed that boundary measurements will be available and will be integrated through the update equation within the filter, thus the filter is robust to model errors at the boundary cells given adequate sensor data. This treatment is made to ensure that the prediction step of the filter does not require communication between neighbors, which is a practical treatment to handle potential communication loss. Note that all results in this article hold for either formulation.
%\end{remark}

\section{Distributed local Kalman consensus filter}\label{sec:DLKCF}
\subsection{Kalman filter}\label{subsec:KF}

In this subsection, we briefly review the KF and introduce notations needed later in the proposed filter. Consider the linear time-varying system
\begin{align*}
\rho_{k+1}&=A_k\rho_k+\omega_k \textrm{, } \rho_k\in \mathbb{R}^n,\\
z_k&=H_k\rho_k+v_k \textrm{, }z_k\in \mathbb{R}^{m},
\end{align*}
where $\omega_k\sim \mathcal{N}(0,Q_k)$ and $v_k\sim \mathcal{N}(0,R_k)$ are the white Gaussian model and measurement noise. Given the sensor data up to time $k$ denoted by $\mathcal{Z}_k=\{z_0,\cdots,z_k\}$, the \emph{prior estimate} and \emph{posterior estimate} of the state can be expressed as $\rho_{k|k-1}=\mathbb{E}[\rho_k|\mathcal{Z}_{k-1}]$ and $\rho_{k|k}=\mathbb{E}[\rho_k|\mathcal{Z}_k]$, respectively. Let $\eta_{k|k-1}=\rho_{k|k-1}-\rho_k$ and $\eta_{k|k}=\rho_{k|k}-\rho_k$ denote the prior and posterior estimation errors. The estimation error covariance matrices associated with $\rho_{k|k-1}$ and $\rho_{k|k}$ are given by $\Gamma_{k|k-1}=\mathbb{E}[\eta_{k|k-1}\eta_{k|k-1}^T|\mathcal{Z}_{k-1}]$ and $\Gamma_{k|k}=\mathbb{E}[\eta_{k|k}\eta_{k|k}^T|\mathcal{Z}_k]$. The KF sequentially computes $\rho_{k|k}$ from $\rho_{k-1|k-1}$ as follows:
\begin{align}
&\textrm{Prediction:} \left\{ \begin{array}{lc}
\rho_{k|k-1}=A_{k-1}\rho_{k-1|k-1}\\
\Gamma_{k|k-1}=A_{k-1}\Gamma_{k-1|k-1}A_{k-1}^{T}+Q_{k-1},
\end{array}\right.\notag
\end{align}
\begin{align}
&\textrm{Correction:} \left\{ \begin{array}{lc}
\rho_{k|k}=\rho_{k|k-1}+K_{k}(z_{k}-H_{k}\rho_{k|k-1})\\
\Gamma_{k|k}=\Gamma_{k|k-1}-K_kH_k\Gamma_{k|k-1}\\
K_k=\Gamma_{k|k-1}H_{k}^{T}(R_{k}+H_{k}\Gamma_{k|k-1}H_{k}^{T})^{-1}.
\end{array}\right.\notag
\end{align}

\subsection{Distributed local Kalman consensus filter}\label{subsec:DLKCF}
In the DLKCF, the discretized freeway network is spatially partitioned into $N$ overlapping sections, with each section estimated by its own agent. Neighboring agents are allowed to exchange measurements and state estimates to reduce disagreement on shared cells. For the one-dimensional freeway, the set of neighbors of section $i$ is given by
\begin{align}
\mathcal{N}_{i}= \left\{ \begin{array}{ll}
\{i+1\}& \textrm{if $i=1$}\\
\{i-1,i+1\}& \textrm{if $i\neq1$, and $i\neq N$}\\
\{i-1\}& \textrm{if $i=N$}.
\end{array} \right.\label{eq:DLKCFneighbor}
\end{align}
Hence, the Laplacian associated with the communication topology is a tridiagonal matrix. The reader is referred to Figure \ref{Fig:DLKCF_DLKCF_network}d for an illustration of the partitioning of a roadway into overlapping sections. In Figure \ref{Fig:DLKCF_DLKCF_network}d, the freeway is partitioned into seven sections with 28 cells and four sensors in each section, and there are 10 cells in each overlapping region between neighboring sections. Except for the agents associated with the first and last sections, each agent obtains direct measurements from the two boundary sensors in the section. For the other two sensors, their measurements are collected by the neighbors and sent to the agent.

Given the SMM, the system dynamics of section $i$ reads
\begin{align}
\rho_{i,k+1}=A_{i,k}\rho_{i,k}+B^{\rho}_{i,k}\bm{1}\varrho_{\text{m}}+B^{q}_{i,k}\bm{1}q_{\text{m}}+\omega_{i,k}, \label{eq:DLKCF_Dynamics}
\end{align}
where $A_{i,k}\in \mathcal{A}$, $\rho_{i,k}\in \mathbb{R}^{n_i}$ and $\omega_{i,k}\sim \mathcal{N}(0,Q_{i,k})$ is the white Gaussian model noise. Note that in \eqref{eq:DLKCF_Dynamics} and for the remainder of the article subscript $k$ for $A$, $B^{\rho}$, and $B^{q}$ combines the effect of $\sigma(k)$ and $s(k)$, and subscript $i\in \{1,2,\cdots,N\}$ is the section index. The sensors are spatially distributed in the road network and measure the traffic density at their locations. Hence, if the $p^{\textrm{th}}$ sensor directly connected to agent $i$ is located at the $l^{\textrm{th}}$ cell in section $i$, the $p^{\textrm{th}}$ row of $H^{i}_{i,k}$ is given by $(0,\cdots,0,1,0,\cdots,0)$ where the $l^{\textrm{th}}$ element is 1. The observation equation modeled at agent $i$ that corresponds to the sensor data obtained by the sensors directly connected to agent $j$ is given by:
\begin{align}
z^{i}_{j,k}=H^{i}_{j,k}\rho_{i,k}+v^{i}_{j,k}, \text{ } z^{i}_{j,k}\in \mathbb{R}^{m_j^i}, \text{} j\in\mathcal{J}_i=\mathcal{N}_i\bigcup\{i\}, \label{eq:sensorModel}
\end{align}
where $v^{i}_{j,k}\sim \mathcal{N}(0,R^{i}_{j,k})$. Note that the sensor data $z_{j,k}^i$ for $j\in \mathcal{N}_i$ is obtained through receiving measurements from agent $j$. Consequently, through communication each agent possesses columnized sensor data $z_{i,k}=\textrm{col}_{j\in \mathcal{J}_i}(z^{i}_{j,k})$ with noise $v_{i,k}=\textrm{col}_{j\in \mathcal{J}_i}(v^{i}_{j,k})$ and a corresponding columnized output matrix $H_{i,k}=\textrm{col}_{j\in \mathcal{J}_i}(H^{i}_{j,k})$, as well as a block diagonal measurement error covariance $R_{i,k}=\textrm{diag}_{j\in \mathcal{J}_i}(R^{i}_{j,k})$.

For $j\in \mathcal{N}_i$, denote the dimension of the overlap between section $i$ and $j$ as $n_{i,j}$, and define the projection $\hat{I}_{i,j}$ as
\begin{align}
\hat{I}_{i,j}= \left\{ \begin{array}{ll}
\left(
  \begin{array}{cc}
    I_{n_{i,j}}&\bm{0}_{n_{i,j},n_i-n_{i,j}}
  \end{array}
\right)& \textrm{if $j=i-1$}\\
\left(
  \begin{array}{cc}
    \bm{0}_{n_{i,j},n_i-n_{i,j}}&I_{n_{i,j}}
  \end{array}
\right)& \textrm{if $j=i+1$,}
\end{array} \right.\label{eq:I_hat_ij}
\end{align}
where $I_{n_{i,j}} \in \mathbb{R}^{n_{i,j}}$ is the identity matrix. The quantity $\hat{I}_{i,j}\rho_{i,k}$ selects the state of section $i$ that overlaps with section $j$. A consensus term is added to the correction step of the DLKCF to promote agreement on estimates among neighboring agents on their shared overlapping regions. The prediction and correction steps of the DLKCF for agent $i$ reads
\begin{flalign}
&\left\{ \begin{array}{l}
\rho_{i, k|k-1}=A_{i,k-1}\rho_{i,k-1|k-1}\\
\Gamma_{i,k|k-1}=A_{i,k-1}\Gamma_{i,k-1|k-1}A_{i,k-1}^{T}+Q_{i,k-1}
\end{array}\right.\label{eq:DLKCFforecast}\\
&\left\{ \begin{array}{l}
\rho_{i,k|k}=\rho_{i,k|k-1}+K_{i,k}\left(z_{i,k}-H_{i,k}\rho_{i,k|k-1}\right)\\
\quad+\sum_{j\in \mathcal{N}_i}C^{j}_{i,k}\left(\hat{I}_{j,i}\rho_{j,k|k-1}-\hat{I}_{i,j}\rho_{i,k|k-1}\right)\\
\Gamma_{i,k|k}=\Gamma_{i,k|k-1}-K_{i,k}H_{i,k}\Gamma_{i,k|k-1}\\
K_{i,k}=\Gamma_{i,k|k-1}H_{i,k}^{T}(R_{i,k}+H_{i,k}\Gamma_{i,k|k-1}H_{i,k}^{T})^{-1},
\end{array}\right.\label{eq:DLKCFAnalysis}
\end{flalign}
where $C^{j}_{i,k}$ is the \emph{consensus gain} of agent $i$ associated with neighbor $j$ at time step $k$, and for simplicity we drop the middle two terms in \eqref{eq:DLKCF_Dynamics} which are deterministic. Our choice of the consensus gain is given by:
\begin{align}
C^{j}_{i,k}= \left\{ \begin{array}{ll}
\gamma^j_{i,k}\Gamma_{i,k|k-1}\hat{I}_{i,j}^T & \sigma(k)\in \{\text{FF, CC, CF}\}\\
\bm{0}_{n_i,n_{i,j}}& \sigma(k)\in\{\text{FC1, FC2}\},
\end{array} \right.\label{eq:DLKCFconsensusgain}
\end{align}
where $\gamma^j_{i,k}=\gamma^i_{j,k}$ is a sufficiently small scaling factor, with $\gamma^j_{i,k}<\gamma_{i,k}^{*}$ for all $i$, $j\in\mathcal{N}_i$ and $k$. The explicit form of $\gamma_{i,k}^{*}$ will be given in Proposition \ref{Prop:ObservableGUAS} to ensure the unbiasedness of the DLKCF. Under unobservable modes, the consensus term is turned off. According to \eqref{eq:DLKCFconsensusgain}, the consensus term is designed based on the belief of the current estimation accuracy and the disparity among neighbors on the prior estimate, thus promoting agreement on the state estimates. Although an arbitrary convex combination of the estimates between neighboring agents may considerably reduce disagreement, it may largely increase the estimation error. Hence, the scaling factor needs to be carefully designed to ensure stability of the DLKCF.

\begin{remark}\label{RM:Optimality}
Given the consensus gain \eqref{eq:DLKCFconsensusgain}, one may derive the optimal Kalman gain $K_{i,k}$ through minimizing $\textrm{tr}(\Gamma_{i,k|k})$ in a similar way as Theorem 1 in \cite{Olfati-SaberCDC2009}, thus yielding an optimal DLKCF which incorporates the cross-correlations among different agents in the estimation error covariance. However, the optimal DLKCF has large communication requirements (i.e., the cross covariance $\Gamma^j_{i,k|k}$ between section $i$ and $j$ needs to be computed by agent $i$ for all $j\in\{1,\cdots,N\}$) that conflicts the goal of designing a scalable traffic estimation algorithm. Moreover, when cross-correlation terms are included, a section which is always observable can have an unbounded error covariance if the neighboring section is unobservable, as detailed in Appendix \ref{ap:crosscov_unstb}. Instead, the Kalman gain $K_{i,k}$ in the DLKCF is non-interacting, resulting in a suboptimal filter. Nevertheless, it is shown in Proposition 1 that the GAS property of the error dynamics is not affected by neglecting the cross-correlation terms. The consistency of the DLKCF is validated through exploring the average normalized (state) estimation error squared (NEES) measure \cite{BarShalom2001} in Section \ref{sec: NumericalExperiments}.
\end{remark}
Before proving the properties of the estimator, the following assumptions are made for the DLKCF: (\textit{i}) the state dimension $n_i \ge 2$ for all $i$ since at least two boundary cells exist in each freeway section; (\textit{ii}) the noise models satisfy $q_1 I <Q_{i,k}<q_2 I$ and $r_1 I<R_{i,k}<r_2 I$ for all $i$ and $k$, where $q_1$, $q_2$, $r_1$ and $r_2$ are positive constants; and (\textit{iii}) the scaling factor satisfies $\gamma^j_{i,k}\le \hat{\gamma}_{i,k}^{j}=\hat{c}\left|\mathcal{N}_i\right|^{-1}\|\Gamma_{i,k|k-1}\hat{I}_{i,j}^Tu_{i,k}^j\|^{-1}$ in addition to $\gamma^j_{i,k}<\gamma_{i,k}^{*}$ for all $i$, $j\in\mathcal{N}_i$ and $k$. Here $|\mathcal{N}_i|$ is the number of neighbors of agent $i$, and $\hat{c}>0$ is a constant predefined to set an upper bound for the magnitude of the consensus term. Also note that the upper bound $\hat{\gamma}_{i,k}^{j}$ can be computed locally and online by each agent. In this case, the 2-norm\footnote{For the remainder of this article, we denote as $\|\cdot\|$ the 2-norm of a matrix or a vector.} of the consensus term is upper bounded as follows:
\begin{align}\label{eq:bound_consensus}
\begin{array}{l}
\left\|\sum_{j\in \mathcal{N}_i}\gamma^j_{i,k}\Gamma_{i,k|k-1}\hat{I}_{i,j}^Tu_{i,k}^j\right\|\le \hat{c}, \quad\text{for all $i$ and $k$.}
\end{array}
\end{align}

In practice, to run the DLKCF each agent needs to use $\hat{A}_{i,k}$ (i.e., the estimated $A_{i,k}$ obtained based on the state estimate and sensor data) instead of $A_{i,k}$ in \eqref{eq:DLKCFforecast}. In observable modes, the matrix $A_{i,k}$ can be correctly reconstructed by the local agent. However, in the FC modes $\hat{A}_{i,k}$ and $A_{i,k}$ are unlikely to be the same since the agent also needs to estimate the location and direction of the shock. As a related note, using the constrained-CTM \cite{CarlosConstrainedCTM} can improve the estimation accuracy of $A_{i,k}$. Also note that all the theoretical performance analysis of the DLKCF regarding the unobservable scenarios in the next section hold even if $\hat{A}_{i,k}$ and $A_{i,k}$ differ.

\section{Stability and performance analysis of the DLKCF for traffic estimation}\label{sec:StabilityPerformance}

%In this section, we show that for a network where all sections switch among observable modes, the error dynamic of the DLKCF is GAS. For an unobservable section, we show that despite the lack of knowledge on the precise form of the state equations, the estimate of the state is physically meaningful. Furthermore, under some frequently satisfied conditions, the one-step change in the estimated total number of vehicles in the unobservable section converges to the true one-step change.

\subsection{Asymptotic stability of mean error in observable modes}\label{sub:ObservableConvergence}

Define the prior and posterior estimation error for section $i$ as $\eta_{i,k|k-1}=\rho_{i,k|k-1}-\rho_{i,k}$ and $\eta_{i,k|k}=\rho_{i,k|k}-\rho_{i,k}$, and define the neighbor disagreement on the shared estimates as:
\begin{align}
u_{i,k}^j=\hat{I}_{j,i}\eta_{j,k|k-1}-\hat{I}_{i,j}\eta_{i,k|k-1}.\label{eq:DLCKFu}
\end{align}
Note that this is a different notion of disagreement from Corollary 1 of \cite{Olfati-SaberCDC2009}, which measures the disagreement of an agent's estimate with respect to the mean estimate over all the agents.
The global estimation error $\eta_{1:N,k|k}$ is constructed by $\eta_{1:N,k|k}=\textrm{col}(\eta_{1,k|k},\cdots,\eta_{N,k|k})$. Let the bold font $\bm{x}$ denote the mean of random vector $x$ (i.e., $\bm{x}=\mathbb{E}[x]$). The mean of the estimation error in section $i$ evolves as follows:
\begin{align}
\bm{\eta}_{i,k|k}=F_{i,k}A_{i,k-1}\bm{\eta}_{i,k-1|k-1}+\sum_{j\in \mathcal{N}_i}C^{j}_{i,k}\bm{u}_{i,k}^j,\label{eq:DLKCFerrorDynamics}
\end{align}
where $F_{i,k}=I-K_{i,k}H_{i,k}$. We choose a \emph{common Lyapunov function} candidate which reads
\begin{align}
V_k=\sum_{i=1}^{N}\bm{\eta}_{i,k|k}^T\Gamma_{i,k|k}^{-1}\bm{\eta}_{i,k|k}, \label{eq:DLCKFKLyap}
\end{align}
and compute its one-step change $\Delta V_k=V_{k}-V_{k-1}$ by applying \eqref{eq:DLKCFerrorDynamics} as follows:
\begin{align}\label{eq:Lyap_one_step}
\arraycolsep=1.5pt\def\arraystretch{1.5}
\begin{array}{rl}
\Delta V_k=&\sum_{i=1}^{N}\bm{\eta}_{i,k-1|k-1}^T\left(A_{i,k-1}^TF_{i,k}^T\Gamma_{i,k|k}^{-1}F_{i,k}A_{i,k-1}-\Gamma_{i,k-1|k-1}^{-1}\right)\bm{\eta}_{i,k-1|k-1}\\
&+2\sum_{i=1}^{N}\left(\bm{\eta}_{i,k|k-1}^TF_{i,k}^T\Gamma_{i,k|k}^{-1}\sum_{j\in \mathcal{N}_i}C^{j}_{i,k}\bm{u}_{i,k}^j\right)\\
&+\sum_{i=1}^{N}\left(\sum_{j\in \mathcal{N}_i}C^{j}_{i,k}\bm{u}_{i,k}^j\right)^T\Gamma_{i,k|k}^{-1}\left(\sum_{j\in \mathcal{N}_i}C^{j}_{i,k}\bm{u}_{i,k}^j\right).
\end{array}
\end{align}
\subsubsection{Radial unboundedness of the common Lyapunov function}
In order to ensure that the common Lyapunov function \eqref{eq:DLCKFKLyap} is \emph{radially unbounded}, we need to show that $\Gamma_{i,k|k}^{-1}$ is upper and lower bounded for all $i$ and $k$ when all freeway sections switch among the observable modes of the SMM. The derivation of the bounds is divided into the following three parts. For notational simplicity, we drop the section index $i$ in the first two parts.
\begin{enumerate}
 \item In Lemma \ref{lem:GammaBound}, the upper and lower bounds of $\Gamma_{k|k}^{-1}$ for a freeway section are derived for $k\ge \max\left\{1,n-2\right\}$, which are independent of the initial condition $\Gamma_{0|0}$.
 \item In Lemma \ref{lem:GammaBound1}, the upper and lower bounds of $\Gamma_{k|k}^{-1}$ for the freeway section are derived for $0\le k < \max\left\{1,n-2\right\}$, which are functions of the initial condition $\Gamma_{0|0}$.
 \item The results in Lemma \ref{lem:GammaBound} and Lemma \ref{lem:GammaBound1} are combined together in Lemma \ref{prop:GammaBound} to express the upper and lower bounds for $\Gamma_{i,k|k}^{-1}$ for each freeway section (indexed by $i$), and the derived bounds are uniform across all time steps $k\ge 0$.
\end{enumerate}
The next lemma shows that the upper and lower bounds of $\Gamma_{k|k}^{-1}$ for a freeway section are independent of the initial condition $\Gamma_{0|0}$ when $k\ge \max\left\{1,n-2\right\}$.

\begin{lem}\label{lem:GammaBound}
Consider a freeway section of dimension $n\ge 2$ that switches among observable modes. Let $H_\text{b}$ be the output matrix associated with the boundary measurements:
\begin{equation}\label{eq:h_b}
\begin{split}
H_{\text{b}}=&\left(
  \begin{array}{ccccc}
    1 & 0 & \cdots & 0 & 0\\
    0 & 0 & \cdots & 0 & 1
  \end{array}
\right)\in \mathbb{R}^{2\times n}.
\end{split}
\end{equation}
Define $T_1=\max\left\{1,n-2\right\}$, and\footnote{When $T_1=1$, the definitions of $a_{\mathcal{C}}$ and $b_{\mathcal{C}}$ are given by $a_{\mathcal{C}}=q_1$ and $b_{\mathcal{C}}=q_2$, respectively.}
\begin{equation}\label{eq:a_b_I}
\begin{split}
&a_{\mathcal{I}}=r_2^{-1}\min_{M_{\kappa} \in \mathcal{A}_{\text{O}}}\left\{\lambda_{\min}\left(\left(H_{\text{b}}^TH_{\text{b}}+\sum_{\iota=1}^{T_1}\left(\prod_{\kappa=\iota}^{T_1}M_{\kappa}^{-1}\right)^TH_{\text{b}}^TH_{\text{b}}\left(\prod_{\kappa=\iota}^{T_1}M_{\kappa}^{-1}\right)\right)\right)\right\},\\
&b_{\mathcal{I}}=r_1^{-1}\max_{M_{\kappa} \in \mathcal{A}_{\text{O}}}\left\{\lambda_{\max}\left(\left(I+\sum_{\iota=1}^{T_1}\left(\prod_{\kappa=\iota}^{T_1}M_{\kappa}^{-1}\right)^T\left(\prod_{\kappa=\iota}^{T_1}M_{\kappa}^{-1}\right)\right)\right)\right\},
\end{split}
\end{equation}
\begin{equation}\label{eq:a_b_C}
\begin{split}
&a_{\mathcal{C}}=q_1\min_{M_{\kappa} \in \mathcal{A}_{\text{O}}}\left\{\lambda_{\min}\left(I+\sum_{\iota=1}^{T_1-1}\left(\prod_{\kappa=\iota}^{T_1-1}M_{\kappa}\right)\left(\prod_{\kappa=\iota}^{T_1-1}M_{\kappa}\right)^T\right)\right\},\\
&b_{\mathcal{C}}=q_2\max_{M_{\kappa} \in \mathcal{A}_{\text{O}}}\left\{\lambda_{\max}\left(I+\sum_{\iota=1}^{T_1-1}\left(\prod_{\kappa=\iota}^{T_1-1}M_{\kappa}\right)\left(\prod_{\kappa=\iota}^{T_1-1}M_{\kappa}\right)^T\right)\right\}.
\end{split}
\end{equation}
If $\Gamma_{0|0}>\bm{0}$, the inverse of the error covariance computed by the DLKCF  \eqref{eq:DLKCFforecast}-\eqref{eq:DLKCFAnalysis} satisfies
\begin{align*}
\bm{0}<c_1 I < \Gamma_{k|k}^{-1} < c_2 I, \quad \text{for all $k \ge \max\left\{1,n-2\right\}$,}
\end{align*}
where $\bm{0}$ is the matrix of appropriate dimensions which is zero everywhere, and
\begin{align}\label{eq:c_1_c_2}
c_1=\frac{a}{1+ab}>0,\quad c_2=\frac{1+ab}{a}> c_1>0,
\end{align}
with $a=\min\{a_{\mathcal{I}}, a_{\mathcal{C}}\}>0$ and $b=\max\{b_{\mathcal{I}}, b_{\mathcal{C}}\}> a >0$.

\end{lem}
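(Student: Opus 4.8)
The first thing I would record is that the consensus term in \eqref{eq:DLKCFAnalysis} enters only the state update, not the covariance recursion, so the sequence $\Gamma_{k|k}$ produced by the DLKCF is identical to the covariance of an ordinary Kalman filter driven by the switching observable transition matrices $M_k\in\mathcal{A}_{\text{O}}$, by output matrices built from $H_{\text{b}}$, and by noise levels $Q_k\in(q_1 I,q_2 I)$, $R_k\in(r_1 I,r_2 I)$. Consequently the statement is exactly the classical ``uniform boundedness of the Kalman filter covariance under uniform observability and controllability,'' made explicit for this model. The plan is to work entirely in information form, using $\Gamma_{k|k}^{-1}=\Gamma_{k|k-1}^{-1}+H_k^T R_k^{-1}H_k$ together with the prediction step, and to split the argument into a lower bound on $\Gamma_{k|k}^{-1}$ (equivalently an upper bound on the covariance, which must come from \emph{observability}) and an upper bound on $\Gamma_{k|k}^{-1}$ (equivalently a lower bound on the covariance, which must come from the persistently injected \emph{process noise}).

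For the lower bound $\Gamma_{k|k}^{-1}>c_1 I$ I would unroll the information recursion over the window $[k-T_1,k]$. Because every observable mode matrix is invertible (each is triangular or block-triangular with nonzero diagonal entries under the CFL condition), the state at time $k$ can be propagated backward through the window, and the accumulated measurement information is precisely the observability-type Gramian in \eqref{eq:a_b_I}, namely $H_{\text{b}}^TH_{\text{b}}+\sum_{\iota=1}^{T_1}(\prod_{\kappa=\iota}^{T_1}M_\kappa^{-1})^TH_{\text{b}}^TH_{\text{b}}(\prod_{\kappa=\iota}^{T_1}M_\kappa^{-1})$, weighted by $R_k^{-1}>r_2^{-1}I$. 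The structural fact to verify is that this Gramian is uniformly positive definite across all admissible mode sequences; this is where the observability of the FF, CC and CF modes from the boundary output $H_{\text{b}}$ over the horizon $T_1=\max\{1,n-2\}$ is used, and taking the minimum of its smallest eigenvalue over $M_\kappa\in\mathcal{A}_{\text{O}}$ yields $a_{\mathcal{I}}>0$. The discounted contribution of the prior information, which alone depends on $\Gamma_{0|0}$, is nonnegative and may be dropped, so the bound becomes independent of $\Gamma_{0|0}$ once $k\ge T_1$; combining the observability term $1/a_{\mathcal{I}}$ with the upper controllability term $b_{\mathcal{C}}$ from \eqref{eq:a_b_C} gives $\Gamma_{k|k}\le(1/a_{\mathcal{I}}+b_{\mathcal{C}})I\le c_2 I$, i.e.\ $\Gamma_{k|k}^{-1}>c_1 I$.

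For the upper bound $\Gamma_{k|k}^{-1}<c_2 I$ the symmetric argument exploits that over the same window the forward-propagated process noise $\sum_{\iota=1}^{T_1-1}(\prod_{\kappa=\iota}^{T_1-1}M_\kappa)Q(\prod_{\kappa=\iota}^{T_1-1}M_\kappa)^T\ge a_{\mathcal{C}}I$ forces $\Gamma_{k|k}\ge c_1 I$ regardless of how much information the filter has extracted, since the measurement update can remove information by at most the amount quantified by the observability upper bound $b_{\mathcal{I}}$. Quantitatively I would bound the covariance below by $1/(b_{\mathcal{I}}+1/a_{\mathcal{C}})\ge c_1$, which is exactly $\Gamma_{k|k}^{-1}<c_2 I$. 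Setting $a=\min\{a_{\mathcal{I}},a_{\mathcal{C}}\}$ and $b=\max\{b_{\mathcal{I}},b_{\mathcal{C}}\}$ merges the two one-sided estimates into the single symmetric pair \eqref{eq:c_1_c_2}, for which one checks $c_1c_2=1$ and $c_2>c_1>0$.

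I expect the genuine difficulty to be the uniform positivity $a_{\mathcal{I}}>0$, i.e.\ showing that the worst-case observability Gramian over \emph{all} switching sequences within $\mathcal{A}_{\text{O}}$ is bounded away from zero. Instantaneous observability fails because $H_{\text{b}}$ has rank $2<n$, and the matrices of different modes do not commute, so positivity must be argued from the specific bidiagonal and block structure of $A_{\text{FF}}$, $A_{\text{CC}}$ and $A_{\text{CF}}^s$ that guarantees the two boundary measurements reconstruct the full state over $T_1$ steps; finiteness of the mode set $\mathcal{A}_{\text{O}}$ then converts the per-sequence positivity into a uniform constant. The remaining pieces---invertibility of the observable-mode matrices under CFL, the standard information-form manipulations, and the transient regime $0\le k<T_1$---are routine, the last of these being deferred to Lemma \ref{lem:GammaBound1}.
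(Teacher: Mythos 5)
Your proposal is correct and follows essentially the same route as the paper's proof: establishing uniform complete observability (via the windowed Gramian built from $H_{\text{b}}$, whose positivity over $T_1=\max\{1,n-2\}$ steps comes from the boundary information spreading one cell per step under the bidiagonal structure of $A_{\text{FF}}$, $A_{\text{CC}}$, $A_{\text{CF}}^s$, with finiteness of $\mathcal{A}_{\text{O}}$ giving the uniform constant $a_{\mathcal{I}}$) and uniform complete controllability (via the propagated process noise), then invoking the classical Kalman covariance bounds of Jazwinski (Lemmas 7.1--7.2), which are exactly your estimates $\Gamma_{k|k}\le\left(a_{\mathcal{I}}^{-1}+b_{\mathcal{C}}\right)I$ and $\Gamma_{k|k}\ge\left(b_{\mathcal{I}}+a_{\mathcal{C}}^{-1}\right)^{-1}I$. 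The only cosmetic differences are that the paper merges the four constants into $a=\min\{a_{\mathcal{I}},a_{\mathcal{C}}\}$, $b=\max\{b_{\mathcal{I}},b_{\mathcal{C}}\}$ before citing Jazwinski, and that your phrase ``the measurement update can remove information'' should read ``reduce covariance,'' since $b_{\mathcal{I}}$ upper-bounds the information gained and hence lower-bounds the covariance.
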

\begin{proof}
%The key is to extend the proof of Lemma 2 to show that $||\check{\Upsilon}^{(1)}_{k}||=||\check{A}^{(1)}-\check{A}^{(1)}\check{K}^{(1)}_{k}\check{H}^{(1)}||$ decreases at a uniform geometric rate.
The proof is reported in Appendix \ref{ap:proof_lemma_GammaBound}.
\end{proof}
In fact, the values of $a_{\mathcal{I}}$, $b_{\mathcal{I}}$, $a_{\mathcal{C}}$, $b_{\mathcal{C}}$ in \eqref{eq:a_b_I} and \eqref{eq:a_b_C} can be determined (i.e., numerically) since all of the matrices in $\mathcal{A}_{\text{O}}$ are known. Hence, the upper and lower bounds for $c_1$ and $c_2$ in \eqref{eq:c_1_c_2} can be computed offline before the filter is implemented. The next corollary derives a lower bound for $c_1$ and an upper bound for $c_2$ that can be calculated analytically.
\begin{cor}\label{cor:bound_gamma_analytical}
Define $\underline{\theta}=\min\{v_{\text{m}}\frac{\Delta t}{\Delta x}, 1-v_{\text{m}}\frac{\Delta t}{\Delta x}, w\frac{\Delta t}{\Delta x}, 1-w\frac{\Delta t}{\Delta x}\}$, $\tilde{\theta}=\min\{1-v_{\text{m}}\frac{\Delta t}{\Delta x}, 1-w\frac{\Delta t}{\Delta x}\}$, and
\begin{align*}
&\underline{a}=\min\left\{q_1\left(1+\frac{\tilde{\theta}^{2n}\left(1-\tilde{\theta}^{2n\left(T_1-1\right)}\right)}{\left(1-\tilde{\theta}^{2n}\right)\sqrt{4n\left(2T_1-1\right)^22^{n-2}}}\right),\quad \frac{r_2^{-1}\underline{\theta}^{T_1\left(T_1+1\right)}}{2\left(2T_1+1\right)\sqrt{4n\left(T_1+1\right)^22^{n-2}}}\right\},\\
&\bar{b}=\max\left\{q_2\left(2T_1^2-1\right),\quad r_1^{-1}\left(1+ \frac{T_1\left(T_1+2\right)\sqrt{4n2^{n-2}}}{\tilde{\theta}^{2T_1n}}\right)\right\},
\end{align*}
where $T_1=\max\left\{1,n-2\right\}$. The upper and lower bounds $c_1$ and $c_2$ in \eqref{eq:c_1_c_2} satisfy
\begin{align*}
c_1>\frac{\underline{a}}{1+\underline{a}\bar{b}},\quad c_2<\frac{1+\underline{a}\bar{b}}{\underline{a}}.
\end{align*}
\end{cor}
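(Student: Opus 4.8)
The corollary asserts that the (possibly hard-to-evaluate) bounds $c_1, c_2$ from Lemma 1—which are built from optima over the matrix set $\mathcal{A}_{\text{O}}$ via eigenvalues of products $\prod M_\kappa^{-1}$ and $\prod M_\kappa$—can be replaced by fully analytic expressions $\underline a, \bar b$ that depend only on the scalars $v_{\text{m}}\tfrac{\Delta t}{\Delta x}$, $w\tfrac{\Delta t}{\Delta x}$, $n$, $T_1$, $q_1$, $q_2$, $r_1$, $r_2$. Since $c_1 = \tfrac{a}{1+ab}$ is increasing in $a$ and decreasing in $b$, and $c_2=\tfrac{1+ab}{a}$ is increasing in $b$ and decreasing in $a$, it suffices to produce a lower bound $\underline a \le a = \min\{a_{\mathcal I}, a_{\mathcal C}\}$ and an upper bound $\bar b \ge b = \max\{b_{\mathcal I}, b_{\mathcal C}\}$; monotonicity then transfers these into the claimed bounds on $c_1, c_2$.

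**Plan.** Let me think about sketching the four scalar estimates separately and then assembling them.

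First I would bound each of $a_{\mathcal I}, a_{\mathcal C}$ from below and each of $b_{\mathcal I}, b_{\mathcal C}$ from above. The central objects are the matrix products $P_{\iota}=\prod_{\kappa=\iota}^{T_1}M_\kappa^{-1}$ (for the $\mathcal I$ terms) and $\prod_{\kappa=\iota}^{T_1-1}M_\kappa$ (for the $\mathcal C$ terms), with each $M_\kappa\in\mathcal A_{\text{O}}$. The plan is to control these products entrywise using the bidiagonal structure of $\Theta_p,\Delta_p$ and the block structure of $A_{\text{FF}}, A_{\text{CC}}, A_{\text{CF}}^s$: every matrix in $\mathcal A_{\text{O}}$ is (block) triangular with diagonal entries in $\{1-v_{\text{m}}\tfrac{\Delta t}{\Delta x},\,1-w\tfrac{\Delta t}{\Delta x},\,1\}$ and off-diagonal entries in $\{v_{\text{m}}\tfrac{\Delta t}{\Delta x},\,w\tfrac{\Delta t}{\Delta x}\}$. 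Under the CFL condition all these entries lie in $[0,1]$, so $\underline\theta$ is the smallest nonzero entry magnitude and $\tilde\theta$ the smallest diagonal magnitude. For the lower estimate $a_{\mathcal I}$ I would lower-bound $\lambda_{\min}$ of the Gramian-type sum $H_{\text b}^TH_{\text b}+\sum_\iota P_\iota^T H_{\text b}^T H_{\text b} P_\iota$ by exhibiting a single well-chosen summand whose smallest eigenvalue is controlled: the inverse $M_\kappa^{-1}$ of a unit-triangular-ish matrix has entries bounded below by powers of $\underline\theta$, and the exponent accumulates to $T_1(T_1+1)$ over the telescoping product, which matches the $\underline\theta^{T_1(T_1+1)}$ appearing in $\underline a$. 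For $b_{\mathcal I}$ I would instead upper-bound $\lambda_{\max}$ by the Frobenius norm (or row-sum bound), giving factors like $\sqrt{4n\,2^{n-2}}$ from the matrix size and $\tilde\theta^{-2T_1 n}$ from the inverse diagonal powers. The $\mathcal C$ terms are easier since they involve $M_\kappa$ directly (no inverse): $\lambda_{\max}$ and $\lambda_{\min}$ of $I+\sum P_\iota P_\iota^T$ are bounded by crude counting arguments, yielding the $q_2(2T_1^2-1)$ and the geometric-series factor $\tfrac{\tilde\theta^{2n}(1-\tilde\theta^{2n(T_1-1)})}{1-\tilde\theta^{2n}}$ in $\underline a$.

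**Assembly.** Once I have $\underline a\le a$ and $\bar b\ge b$, I would verify the monotonicity claim explicitly: writing $f(a,b)=\tfrac{a}{1+ab}$, we have $\partial_a f = \tfrac{1}{(1+ab)^2}>0$ and $\partial_b f = \tfrac{-a^2}{(1+ab)^2}<0$, so $c_1=f(a,b)\ge f(\underline a,\bar b)=\tfrac{\underline a}{1+\underline a\bar b}$; symmetrically $g(a,b)=\tfrac{1+ab}{a}=\tfrac1a+b$ is decreasing in $a$ and increasing in $b$, so $c_2=g(a,b)\le g(\underline a,\bar b)=\tfrac{1+\underline a\bar b}{\underline a}$. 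This is the clean, routine finish.

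**Main obstacle.** The hard part will be tracking the entrywise (or spectral) estimates of the matrix products and their inverses tightly enough to recover the exact exponents and combinatorial prefactors that appear in $\underline a$ and $\bar b$—in particular justifying the $\underline\theta^{T_1(T_1+1)}$ growth of the inverse products and the $\tilde\theta^{-2T_1 n}$ and $\sqrt{4n\,2^{n-2}}$ factors, and confirming they hold uniformly over every choice of modes $M_\kappa\in\mathcal A_{\text{O}}$ and every shock location $s$ in $A_{\text{CF}}^s$. I expect the key simplification to be that all matrices in $\mathcal A_{\text{O}}$ share the same set of entry values and a common (block) triangular sparsity pattern, so a single worst-case bound on the inverse diagonal ($\tilde\theta^{-1}$ per factor) and on the number of nonzero accumulation paths ($2^{n-2}$-type counting) dominates uniformly; the eigenvalue optimizations over $\mathcal A_{\text{O}}$ then collapse into the stated scalar quantities. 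These estimates are the genuine content of the proof, while the transfer to $c_1, c_2$ via monotonicity is immediate.
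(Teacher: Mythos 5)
Your high-level skeleton coincides with the paper's: produce $\underline a\le a=\min\{a_{\mathcal I},a_{\mathcal C}\}$ and $\bar b\ge b=\max\{b_{\mathcal I},b_{\mathcal C}\}$, then push these through the monotonicity of $a/(1+ab)$ and $(1+ab)/a$; that closing step is exactly the paper's final assembly and is fine. The genuine gap sits in the part you yourself flag as ``the hard part.'' Your device for lower-bounding $a_{\mathcal I}$ --- exhibiting a \emph{single} well-chosen summand of $H_{\text{b}}^TH_{\text{b}}+\sum_{\iota}P_\iota^TH_{\text{b}}^TH_{\text{b}}P_\iota$ whose smallest eigenvalue is controlled --- cannot work: since $H_{\text{b}}\in\mathbb{R}^{2\times n}$, every summand has rank at most $2$, so for $n\ge 3$ each individual summand has smallest eigenvalue exactly zero. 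Positive definiteness emerges only by accumulating summands over the whole window of length $T_1=\max\{1,n-2\}$. The paper handles this by stacking the rows $H_{\text{b}}\prod_k A_k$ into the observability matrix $\tilde{\mathcal{G}}_{k,k-T_1}$ of \eqref{eq:tilde_mathcal_G}, selecting $n$ of those rows \emph{across different time steps} so that they assemble into a full-rank $n\times n$ matrix consisting of one lower-triangular and one upper-triangular block, and lower-bounding the minimum singular value of that assembled matrix; the selected diagonal entries at step $\iota$ are bounded below by $\underline\theta^{\iota}$ (cf.\ \eqref{eq:tilde_g_lb}), and taking their product over $\iota=1,\dots,T_1$ is what produces the exponent $T_1(T_1+1)$.

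Moreover, the tool that actually generates every minimum-eigenvalue bound (and the $\sqrt{2^{n-2}}$ factors) is the determinant-based singular value inequality $\sigma_{\min}(M)\ge \left|\det M\right|\big/\bigl(2^{(n-2)/2}\left\|M\right\|_{\text{F}}\bigr)$ from \cite{piazza2002upper}, not a count of ``nonzero accumulation paths.'' The paper applies it twice: once to the assembled Gramian rows, giving the term $r_2^{-1}\underline\theta^{T_1(T_1+1)}\big/\bigl(2(2T_1+1)\sqrt{4n(T_1+1)^22^{n-2}}\bigr)$ in $\underline a$, and once to the products $\prod_{\kappa}M_{\kappa}\prod_{\kappa}M_{\kappa}^T$, giving $\underline e_\iota>\tilde\theta^{2\iota n}\big/\sqrt{4n(2\iota+1)^22^{n-2}}$, which supplies both the controllability term of $\underline a$ and the second entry of $\bar b$ through the identity $\bigl(\prod_{\kappa}M_{\kappa}^{-1}\bigr)^T\bigl(\prod_{\kappa}M_{\kappa}^{-1}\bigr)=\bigl(\prod_{\kappa}M_{\kappa}\prod_{\kappa}M_{\kappa}^T\bigr)^{-1}\le \underline e_{T_1-\iota+1}^{-1}I$. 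This identity is precisely what lets the paper avoid your plan of entrywise control of inverse products: the entries of those inverses alternate in sign and have magnitudes governed by powers of $\tilde\theta^{-1}$, so they are \emph{not} ``bounded below by powers of $\underline\theta$'' as you assert. Counting arguments of the kind you describe do suffice, and are used, for the \emph{upper} bounds (e.g., $\bar e_\iota<2(2\iota+1)$ from the banded structure and entries in $[0,1]$), but without the determinant inequality your outline contains no mechanism for producing any of the stated lower bounds, which are the substance of the corollary.
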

\begin{proof}
The proof is reported in Appendix \ref{ap:cor_bound}.
\end{proof}
The next lemma derives the upper and lower bounds of $\Gamma_{k|k}^{-1}$ for a freeway section when $0\le k < \max\left\{1,n-2\right\}$, which are functions of the initial condition $\Gamma_{0|0}$.

\begin{lem}\label{lem:GammaBound1}
Consider a freeway section of dimension $n\ge 2$ that switches among observable modes. If $\Gamma_{0|0}>\bm{0}$, the inverse of the error covariance computed by the DLKCF \eqref{eq:DLKCFforecast}-\eqref{eq:DLKCFAnalysis} satisfies
\begin{equation*}
\bm{0}<\tilde{\mathfrak{c}}_1\left(\Gamma_{0|0}\right)I\le \Gamma^{-1}_{k|k} \le \tilde{\mathfrak{c}}_2\left(\Gamma_{0|0}\right)I,\quad \text{for all $0\le k < \max\left\{1,n-2\right\}$,}
\end{equation*}
where $\tilde{\mathfrak{c}}_1\left(\cdot\right)$ and $\tilde{\mathfrak{c}}_2\left(\cdot\right)$ are functions of $M \in\mathbb{R}^{n\times n}$, and are defined as follows:
\begin{equation}\label{eq:mathfrak_c1}
\tilde{\mathfrak{c}}_1\left(M \right)= \left\{ \begin{array}{ll}
\left(2\left(2n-5\right)\left\|M\right\|+\left(2\left(n-3\right)^2-1\right)q_2\right)^{-1}& \textrm{if $n\ge 4$}\\
\left\|M \right\|^{-1} & \textrm{if $2\le n<4$,}
\end{array} \right.
\end{equation}
and
\begin{equation}\label{eq:mathfrak_c2}
\tilde{\mathfrak{c}}_2\left(M \right)= \max\left\{\lambda^{-1}_{\min}\left(M \right),q_1^{-1}+r_1^{-1}\right\}.
\end{equation}
\end{lem}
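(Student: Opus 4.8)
The plan is to prove the two one-sided bounds separately, in each case reducing to the ordinary Kalman covariance recursion: because the consensus term in \eqref{eq:DLKCFAnalysis} does not enter the covariance update, $\Gamma_{k|k}$ obeys the standard prediction/correction laws, and I would use only these together with the noise bounds $q_1 I < Q_k < q_2 I$, $r_1 I < R_k < r_2 I$ and the structure of the observable-mode set $\mathcal{A}_{\text{O}}$.

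For the upper bound $\Gamma_{k|k}^{-1}\le\tilde{\mathfrak{c}}_2(\Gamma_{0|0})I$ I would argue directly. At $k=0$, $\Gamma_{0|0}^{-1}\le\lambda_{\min}^{-1}(\Gamma_{0|0})I$. For $k\ge1$ the prediction gives $\Gamma_{k|k-1}=A_{k-1}\Gamma_{k-1|k-1}A_{k-1}^{T}+Q_{k-1}\ge Q_{k-1}>q_1 I$, hence $\Gamma_{k|k-1}^{-1}<q_1^{-1}I$; writing the correction in information form as $\Gamma_{k|k}^{-1}=\Gamma_{k|k-1}^{-1}+H_k^{T}R_k^{-1}H_k$ and using $R_k^{-1}<r_1^{-1}I$ with $H_k^{T}H_k\le I$ (each row of the selection matrix $H_k$ carries a single unit entry at a distinct cell) gives $\Gamma_{k|k}^{-1}<(q_1^{-1}+r_1^{-1})I$. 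Taking the maximum over the two cases yields $\tilde{\mathfrak{c}}_2$ in \eqref{eq:mathfrak_c2}.

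The lower bound $\Gamma_{k|k}^{-1}\ge\tilde{\mathfrak{c}}_1(\Gamma_{0|0})I$, equivalently the covariance upper bound $\Gamma_{k|k}\le\tilde{\mathfrak{c}}_1^{-1}(\Gamma_{0|0})I$, is where the work lies. Since the correction only decreases the covariance, $\Gamma_{k|k}\le\Gamma_{k|k-1}$, so iterating the prediction gives $\Gamma_{k|k}\le\Phi_{k,0}\Gamma_{0|0}\Phi_{k,0}^{T}+\sum_{j=0}^{k-1}\Phi_{k,j+1}Q_j\Phi_{k,j+1}^{T}$, where $\Phi_{k,j}=A_{k-1}\cdots A_j$ is the observable-mode transition product (with $\Phi_{k,k}=I$). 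Using $Q_j<q_2 I$, the claim reduces to bounding $\|\Phi_{k,j}\|^{2}$ for a product of $m=k-j$ factors from $\mathcal{A}_{\text{O}}$. The crux is that this norm grows only \emph{affinely} in $m$, not exponentially: every matrix in $\mathcal{A}_{\text{O}}$, being assembled from the bidiagonal blocks $\Theta_p,\Delta_p$, is entrywise nonnegative, row-substochastic ($\|A\|_\infty\le1$), and of half-bandwidth one. Hence $\Phi_{k,j}$ is nonnegative with entries in $[0,1]$ and half-bandwidth at most $m$, so each column has at most $2m+1$ nonzero entries; the estimate $\|\Phi_{k,j}\|^{2}\le\|\Phi_{k,j}\|_1\|\Phi_{k,j}\|_\infty$ then yields a bound of the form $\|\Phi_{k,j}\|^{2}\le 2(2m+1)$. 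For $n\ge4$ one has $m\le k\le n-3$; substituting the per-product bounds into the iterated inequality and summing the resulting arithmetic series over $j$ produces precisely the coefficient $2(2n-5)$ multiplying $\|\Gamma_{0|0}\|$ and $2(n-3)^2-1$ multiplying $q_2$, i.e.\ $\tilde{\mathfrak{c}}_1$ as in \eqref{eq:mathfrak_c1}. For $2\le n<4$ the admissible range collapses to $k=0$, where $\Gamma_{0|0}^{-1}\ge\|\Gamma_{0|0}\|^{-1}I$ gives $\tilde{\mathfrak{c}}_1(M)=\|M\|^{-1}$ at once.

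The main obstacle is exactly this per-product norm estimate. A naive submultiplicative bound $\|\Phi_{k,j}\|\le\prod\|A_\iota\|$ is useless here, since the FF and CC modes have a column sum exceeding one (only $\|A\|\le\sqrt{2}$ holds), so the product of $m$ factors would be bounded by $2^{m/2}$, exponential in $m$, and could not produce the polynomial-in-$n$ constants of \eqref{eq:mathfrak_c1}. Obtaining the affine growth therefore requires using the nonnegativity, the row-substochasticity, and the banded structure of the observable-mode matrices simultaneously, and tracking carefully that the bandwidth---and hence the number of nonzero entries per column---increases by exactly one per factor; this bookkeeping, valid only on the short horizon $k\le n-3$, is what delivers the explicit constants.
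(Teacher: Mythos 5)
Your proposal is correct, and its overall strategy is the same as the paper's: one side via the information form, the other via $\Gamma_{k|k}\le\Gamma_{k|k-1}$, iterating the prediction, and showing that products of observable-mode matrices grow only affinely in the number of factors by exploiting nonnegativity, row-substochasticity, and the bandwidth increasing by one per factor, followed by exactly the arithmetic you describe ($k\le n-3$ when $n\ge 4$, the arithmetic series summing to $(2k^2-1)q_2$, and the collapse to $k=0$ when $2\le n<4$). Two execution details differ, both in your favor. First, to get $\Gamma_{k|k-1}>q_1 I$ the paper runs a comparison Riccati sequence started from $\breve{\Gamma}_{0|-1}=\bm{0}$ and invokes monotonicity lemmas from Chui--Chen and Anderson--Moore, whereas you simply note $A_{k-1}\Gamma_{k-1|k-1}A_{k-1}^T\succeq \bm{0}$ so that $\Gamma_{k|k-1}\ge Q_{k-1}>q_1 I$; your version is self-contained and avoids the external references. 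Second, for the per-product estimate the paper bounds the diagonal entries of $\Phi_{k,j}\Phi_{k,j}^T$ by $2m+1$ and then passes to $\sigma_{\max}<2(2m+1)$, whereas you use $\|\Phi_{k,j}\|^2\le\|\Phi_{k,j}\|_1\|\Phi_{k,j}\|_\infty\le (2m+1)\cdot 1$; this packages the identical structural facts through a H\"older-type inequality and actually yields a slightly sharper constant ($2m+1$ rather than $2(2m+1)$), so the stated $\tilde{\mathfrak{c}}_1$ in \eqref{eq:mathfrak_c1} is still achieved with room to spare. Your diagnosis of why naive submultiplicativity fails ($\|A\|\le\sqrt{2}$ for FF/CC, giving exponential growth $2^{m/2}$) correctly identifies the crux that the paper's proof also rests on.
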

\begin{proof}
The proof is reported in Appendix \ref{ap:GammaBound1_proof}.
\end{proof}
Combining Lemma \ref{lem:GammaBound} and \ref{lem:GammaBound1}, the upper and lower bounds for the inverse of the error covariance is obtained when a freeway section switches among the observable modes of the SMM, as stated next in Lemma \ref{prop:GammaBound}. We add back the section index $i$ in the statement of Lemma \ref{prop:GammaBound} to emphasize that the result holds for each individual freeway section.

\begin{lem}[Lemma 1 in \cite{SunWorkCONES2016}]\label{prop:GammaBound}
Consider a freeway section (indexed by $i$) that switches among observable modes for all $k\ge0$. If $\Gamma_{i,0|0}>\bm{0}$, then $\Gamma^{-1}_{i,k|k}$ given in the DLKCF \eqref{eq:DLKCFforecast}-\eqref{eq:DLKCFAnalysis} satisfies
\begin{equation}\label{eq:def_frak_c}
\bm{0}<\mathfrak{c}_1\left(\Gamma_{i, 0|0}\right)I\le \Gamma^{-1}_{i, k|k} \le \mathfrak{c}_2\left(\Gamma_{i, 0|0}\right)I,\quad \text{for $k\ge 0$,}
\end{equation}
independent of the switching sequence, where $\mathfrak{c}_1\left(\cdot\right)$ and $\mathfrak{c}_2\left(\cdot\right)$ are functions of $M \in\mathbb{R}^{n\times n}$ defined as follows:
\begin{equation*}
\begin{split}
\mathfrak{c}_1\left(M \right)&=\min\left\{\tilde{\mathfrak{c}}_1\left(M\right), c_1\right\}\\
\mathfrak{c}_2\left(M \right)&=\max\left\{\tilde{\mathfrak{c}}_2\left(M\right),c_2\right\},
\end{split}
\end{equation*}
with $\tilde{\mathfrak{c}}_1\left(\cdot \right)$, $\tilde{\mathfrak{c}}_2\left(\cdot \right)$ defined in \eqref{eq:mathfrak_c1}-\eqref{eq:mathfrak_c2} (cf. Lemma \ref{lem:GammaBound1}), and $c_1$, $c_2$ defined in \eqref{eq:c_1_c_2} (cf. Lemma \ref{lem:GammaBound}).
\end{lem}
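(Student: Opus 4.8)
The plan is to prove Lemma~\ref{prop:GammaBound} by a straightforward case split on the time index $k$ relative to the threshold $T_1=\max\{1,n-2\}$, gluing together the two complementary bounds already established in Lemmas~\ref{lem:GammaBound} and~\ref{lem:GammaBound1}. The key observation is that these two lemmas cover disjoint and exhaustive time regimes: Lemma~\ref{lem:GammaBound1} handles the transient window $0\le k<T_1$ with initial-condition-dependent bounds $\tilde{\mathfrak{c}}_1(\Gamma_{i,0|0})$ and $\tilde{\mathfrak{c}}_2(\Gamma_{i,0|0})$, while Lemma~\ref{lem:GammaBound} handles the steady regime $k\ge T_1$ with the initial-condition-free bounds $c_1$ and $c_2$. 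Since the target bounds are defined as $\mathfrak{c}_1=\min\{\tilde{\mathfrak{c}}_1,c_1\}$ and $\mathfrak{c}_2=\max\{\tilde{\mathfrak{c}}_2,c_2\}$, they are simultaneously valid in both regimes.

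First I would fix a section $i$ and restore the section index throughout. For the transient case $0\le k<T_1$, I would invoke Lemma~\ref{lem:GammaBound1} to obtain $\tilde{\mathfrak{c}}_1(\Gamma_{i,0|0})I\le\Gamma_{i,k|k}^{-1}\le\tilde{\mathfrak{c}}_2(\Gamma_{i,0|0})I$, and then use the elementary inequalities $\mathfrak{c}_1(\Gamma_{i,0|0})\le\tilde{\mathfrak{c}}_1(\Gamma_{i,0|0})$ and $\mathfrak{c}_2(\Gamma_{i,0|0})\ge\tilde{\mathfrak{c}}_2(\Gamma_{i,0|0})$, which follow immediately from the defining $\min$ and $\max$, to conclude $\mathfrak{c}_1(\Gamma_{i,0|0})I\le\Gamma_{i,k|k}^{-1}\le\mathfrak{c}_2(\Gamma_{i,0|0})I$. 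For the steady case $k\ge T_1$, I would invoke Lemma~\ref{lem:GammaBound} to obtain $c_1 I<\Gamma_{i,k|k}^{-1}<c_2 I$, and use $\mathfrak{c}_1(\Gamma_{i,0|0})\le c_1$ and $\mathfrak{c}_2(\Gamma_{i,0|0})\ge c_2$ to widen these to the same uniform bounds. Concatenating the two regimes yields \eqref{eq:def_frak_c} for every $k\ge0$.

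Two points require brief verification rather than real work. I would confirm strict positive definiteness of the lower bound: since Lemma~\ref{lem:GammaBound1} guarantees $\tilde{\mathfrak{c}}_1(\Gamma_{i,0|0})>0$ and Lemma~\ref{lem:GammaBound} guarantees $c_1>0$, their minimum $\mathfrak{c}_1(\Gamma_{i,0|0})$ is strictly positive, so $\bm{0}<\mathfrak{c}_1(\Gamma_{i,0|0})I$. I would also note that the switching-sequence independence asserted in the statement is inherited directly from the two constituent lemmas: the quantities $c_1,c_2$ in \eqref{eq:c_1_c_2} are built from extrema taken over all $M_\kappa\in\mathcal{A}_{\text{O}}$ in \eqref{eq:a_b_I}--\eqref{eq:a_b_C}, and $\tilde{\mathfrak{c}}_1,\tilde{\mathfrak{c}}_2$ in \eqref{eq:mathfrak_c1}--\eqref{eq:mathfrak_c2} depend only on $\Gamma_{i,0|0}$ and the dimension $n$, so neither set of bounds depends on the particular observable mode sequence realized by section $i$.

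Honestly, there is no substantive obstacle here; the content of the result lives entirely in Lemmas~\ref{lem:GammaBound} and~\ref{lem:GammaBound1}, and Lemma~\ref{prop:GammaBound} is a bookkeeping device that packages their complementary time-regime estimates into a single uniform-in-$k$ bound suitable for establishing the radial unboundedness of the common Lyapunov function \eqref{eq:DLCKFKLyap}. The only place demanding any care is ensuring the two regimes are exactly complementary (they are, by the definition of $T_1$) so that the $\min$/$\max$ bounds truly hold for \emph{all} $k\ge0$ rather than piecewise with a gap.
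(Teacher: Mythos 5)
Your proposal is correct and is essentially identical to the paper's own argument: the paper offers no separate proof of this lemma beyond the sentence preceding it, which states that the result follows by combining Lemma~\ref{lem:GammaBound} (the regime $k\ge\max\{1,n-2\}$) and Lemma~\ref{lem:GammaBound1} (the regime $0\le k<\max\{1,n-2\}$) via the $\min$/$\max$ construction of $\mathfrak{c}_1$ and $\mathfrak{c}_2$, exactly as you do. Your additional checks on strict positivity of the lower bound and on switching-sequence independence are consistent with, and implicit in, the paper's treatment.
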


\subsubsection{Global asymptotic stability of the mean error dynamics in observable modes}
When all sections switch among the observable modes, $V_k$ is radially unbounded since \eqref{eq:def_frak_c} holds for all $i$. Now we are ready to show the GAS of the mean error dynamics when all freeway sections switch among the observable modes.

\begin{prop}[Proposition 1 in \cite{SunWorkCONES2016}]\label{Prop:ObservableGUAS}
Consider the DLKCF in \eqref{eq:DLKCFforecast} and \eqref{eq:DLKCFAnalysis} with the consensus gain in \eqref{eq:DLKCFconsensusgain}. Suppose all sections switch among the observable modes of the SMM. Then, the mean estimation error $\bm{\eta}_{1:N,k|k}=\mathbb{E}[\eta_{1:N,k|k}]$ is GAS for sufficiently small $\gamma^j_{i,k}$.
\end{prop}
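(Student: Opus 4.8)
The plan is to use $V_k$ in \eqref{eq:DLCKFKLyap} as a \emph{common} Lyapunov function for the mean error dynamics \eqref{eq:DLKCFerrorDynamics} and to show that its one-step change $\Delta V_k$ in \eqref{eq:Lyap_one_step} is negative definite once $\gamma^j_{i,k}$ is sufficiently small. Lemma \ref{prop:GammaBound} already guarantees $\mathfrak{c}_1 I \le \Gamma_{i,k|k}^{-1}\le \mathfrak{c}_2 I$ uniformly in $k$ and across the (observable) switching sequence, so $V_k$ is positive definite and radially unbounded; what remains is to control the sign of $\Delta V_k$. I would treat the three summands of \eqref{eq:Lyap_one_step} separately: the consensus-free Kalman term, the cross term, and the quadratic consensus term.

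For the first summand I expect uniform negative definiteness of each per-section quadratic form. The key is the identity $\Gamma_{i,k|k}=F_{i,k}\Gamma_{i,k|k-1}$, which gives $F_{i,k}^{T}\Gamma_{i,k|k}^{-1}F_{i,k}=\Gamma_{i,k|k-1}^{-1}\Gamma_{i,k|k}\Gamma_{i,k|k-1}^{-1}\le \Gamma_{i,k|k-1}^{-1}$ (using $\Gamma_{i,k|k}^{-1}=\Gamma_{i,k|k-1}^{-1}+H_{i,k}^{T}R_{i,k}^{-1}H_{i,k}\ge \Gamma_{i,k|k-1}^{-1}$), together with the algebraic fact that for $P>0$, $Q>0$ one has $A^{T}(APA^{T}+Q)^{-1}A=P^{-1}-P^{-1/2}\bigl(I+P^{1/2}A^{T}Q^{-1}AP^{1/2}\bigr)^{-1}P^{-1/2}<P^{-1}$. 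Applying this with $P=\Gamma_{i,k-1|k-1}$ and the prediction step yields $A_{i,k-1}^{T}F_{i,k}^{T}\Gamma_{i,k|k}^{-1}F_{i,k}A_{i,k-1}-\Gamma_{i,k-1|k-1}^{-1}\le -\beta I$ for some $\beta>0$. This $\beta$ is \emph{uniform} because the covariance bounds of Lemma \ref{prop:GammaBound}, the finiteness of $\mathcal{A}_{\text{O}}$, and the noise bound $q_1 I<Q_{i,k}$ keep the correction margin $P^{-1/2}(I+P^{1/2}A^{T}Q^{-1}AP^{1/2})^{-1}P^{-1/2}$ bounded away from zero.

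Next I would bound the two consensus summands. Since $\bm{\eta}_{i,k|k-1}=A_{i,k-1}\bm{\eta}_{i,k-1|k-1}$, the disagreement \eqref{eq:DLCKFu} obeys $\|\bm{u}_{i,k}^{j}\|\le c\bigl(\|\bm{\eta}_{i,k-1|k-1}\|+\|\bm{\eta}_{j,k-1|k-1}\|\bigr)$, and in observable modes $C^{j}_{i,k}=\gamma^{j}_{i,k}\Gamma_{i,k|k-1}\hat{I}_{i,j}^{T}$ from \eqref{eq:DLKCFconsensusgain} has norm $O(\gamma^{j}_{i,k})$ because $\Gamma_{i,k|k-1}=A_{i,k-1}\Gamma_{i,k-1|k-1}A_{i,k-1}^{T}+Q_{i,k-1}$ is uniformly bounded above (again using $\Gamma_{i,k-1|k-1}\le \mathfrak{c}_1^{-1}I$ and the bounded mode set). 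Hence the cross term is $O(\gamma)$ and the quadratic consensus term is $O(\gamma^2)$, each of which I would dominate by $\sum_i\|\bm{\eta}_{i,k-1|k-1}\|^2$ after applying Young's inequality $2ab\le a^2+b^2$ to split the products $\|\bm{\eta}_{i,k-1|k-1}\|\,\|\bm{\eta}_{j,k-1|k-1}\|$ across the tridiagonal neighbor structure \eqref{eq:DLKCFneighbor}. Collecting the estimates gives $\Delta V_k\le -(\beta-\gamma\kappa_1-\gamma^2\kappa_2)\sum_i\|\bm{\eta}_{i,k-1|k-1}\|^2$ with constants $\kappa_1,\kappa_2$ assembled only from each agent's own and its one-hop neighbors' bounded quantities; this is exactly what lets me define the explicit threshold $\gamma_{i,k}^{*}$ decentrally and conclude $\Delta V_k<0$ whenever $\gamma^{j}_{i,k}<\gamma_{i,k}^{*}$. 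Radial unboundedness together with $\Delta V_k<0$ then yields GAS of $\bm{\eta}_{1:N,k|k}$ by the discrete-time Lyapunov theorem; in fact, since $V_{k-1}\le \mathfrak{c}_2\sum_i\|\bm{\eta}_{i,k-1|k-1}\|^2$, one gets $V_k\le(1-\beta_0/\mathfrak{c}_2)V_{k-1}$ for some $\beta_0>0$, i.e.\ geometric decay.

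The main obstacle I anticipate is not the single-section Kalman term — that is the classical stability argument, made uniform here by the covariance bounds of Lemma \ref{prop:GammaBound} — but the consensus coupling. Because $\bm{u}_{i,k}^{j}$ is itself a function of the neighbors' errors rather than an exogenous disturbance, the perturbation summands cannot be absorbed as noise; one must verify they are genuinely higher order in $\gamma$ and can be folded back into $\sum_i\|\bm{\eta}_{i,k-1|k-1}\|^2$ without producing any term that competes with $-\beta$ at $O(1)$. The delicate part is carrying this out while keeping every constant entering $\gamma_{i,k}^{*}$ computable from one-hop information only, so that no global knowledge of the communication graph is needed.
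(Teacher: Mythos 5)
Your proposal is correct, and it shares the paper's overall skeleton: the same common Lyapunov function \eqref{eq:DLCKFKLyap}, the same three-term expansion \eqref{eq:Lyap_one_step}, and the same use of Lemma \ref{prop:GammaBound} to get radial unboundedness and uniform-in-$k$ constants. Your treatment of the first (Kalman) term is an equivalent rewriting of the paper's Step 1: the matrix-inversion-lemma identity you invoke produces exactly the positive-definite correction that the paper packages as $\Lambda_{i,k}>0$. The genuine divergence is in the cross term. You bound it as an $O(\gamma)$ perturbation and absorb it, via Young's inequality, into $-\beta\sum_i\|\bm{\eta}_{i,k-1|k-1}\|^2$, ending with a threshold defined by the positive root of $\beta-\gamma\kappa_1-\gamma^2\kappa_2=0$. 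The paper instead exploits the exact cancellation $F_{i,k}^T\Gamma_{i,k|k}^{-1}\Gamma_{i,k|k-1}=I$ (from $F_{i,k}=\Gamma_{i,k|k}\Gamma_{i,k|k-1}^{-1}$) together with the symmetry $\gamma^j_{i,k}=\gamma^i_{j,k}$ to collapse the cross term into $-2\sum_{\hat{i}}\gamma_{\hat{i},k}^{\hat{i}+1}\bm{\hat{\eta}}^T_{\hat{i},k|k-1}\hat{L}_{\hat{i}}\bm{\hat{\eta}}_{\hat{i},k|k-1}\le 0$, a pure graph-Laplacian quadratic form; this is exactly why the consensus gain was designed as $\gamma^j_{i,k}\Gamma_{i,k|k-1}\hat{I}_{i,j}^T$. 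What the paper's route buys: the smallness condition on $\gamma$ only has to defeat the $O(\gamma^2)$ quadratic consensus term against the Kalman term (hence $\gamma_{i,k}^*=\bigl(\lambda_{\min}(\Lambda_{\mathcal{J}_i,k-1})/\lambda_{\max}(\tilde{L}_i^T\tilde{H}_i^TG_{i,k}\tilde{H}_i\tilde{L}_i)\bigr)^{1/2}$, structurally less conservative than yours, which must also dominate an $O(\gamma)$ competitor), and it yields the extra property $\Delta V_k<-\sqrt{2}\sum_{\hat{i}}\gamma_{\hat{i},k}^{\hat{i}+1}\|\bm{u}^{\hat{i}+1}_{\hat{i},k}\|^2$ emphasized after the proposition and in Appendix \ref{ap:deltaV}, i.e.\ strict decay proportional to the total neighbor disagreement, which your perturbation bound discards. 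What your route buys: it never uses the symmetry $\gamma^j_{i,k}=\gamma^i_{j,k}$ of the scaling factors, so it covers asymmetric consensus weights, and it delivers an explicit geometric contraction $V_k\le(1-\beta_0/\mathfrak{c}_2)V_{k-1}$ directly rather than mere negative definiteness. Both arguments establish GAS for sufficiently small $\gamma^j_{i,k}$, which is all the proposition asserts.
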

\begin{proof}
We show $\Delta V_{k}$ in \eqref{eq:Lyap_one_step} is negative definite when $\bm{\eta}_{1:N, k-1|k-1}\neq 0$.

\noindent\textbf{Step 1.}  Negative definiteness of the first term in \eqref{eq:Lyap_one_step}.

The proof for the first term follows closely from \cite{Olfati-SaberCDC2009} with minor changes. Here we only show the result and introduce the matrices needed in this article. Note that $A_{i,k}$ is invertible for all $i$ and $k$ in the SMM. Each element in the first term in \eqref{eq:Lyap_one_step} can be equivalently written as:
\begin{equation*}
\arraycolsep=1.5pt\def\arraystretch{1.5}
\begin{array}{rl}
&\bm{\eta}_{i,k-1|k-1}^T\left(A_{i,k-1}^TF_{i,k}^T\Gamma_{i,k|k}^{-1}F_{i,k}A_{i,k-1}-\Gamma_{i,k-1|k-1}^{-1}\right)\bm{\eta}_{i,k-1|k-1}\notag\\
=&-\bm{\eta}_{i,k|k-1}^T\left(\left(A_{i,k-1}\Gamma_{i,k-1|k-1}A_{i,k-1}^T\right)^{-1}-F_{i,k}^T\Gamma_{i,k|k}^{-1}F_{i,k}\right)\bm{\eta}_{i,k|k-1}\\
=&-\bm{\eta}_{i,k|k-1}^T\left(\left(A_{i,k-1}\Gamma_{i,k-1|k-1}A_{i,k-1}^T\right)^{-1}-\left(A_{i,k-1}\Gamma_{i,k-1|k-1}A_{i,k-1}^T+W_{i,k-1}\right)^{-1}\right)\bm{\eta}_{i,k|k-1},\\
=&-\bm{\eta}_{i,k|k-1}^T\Lambda_{i,k-1}\bm{\eta}_{i,k|k-1},
\end{array}
\end{equation*}
where the second equation is due to Lemma 2 in \cite{Olfati-SaberCDC2009}, with $\Lambda_{i,k}$ defined as
\begin{align}
\Lambda_{i,k}=&\left(A_{i,k}\Gamma_{i,k|k}A_{i,k}^T\right)^{-1}-\left(A_{i,k}\Gamma_{i,k|k}A_{i,k}^T+W_{i,k}\right)^{-1},\notag
\end{align}
where
\begin{align}
W_{i,k}=&Q_{i,k}+\Gamma_{i,k+1|k}S_{i,k+1}\Gamma_{i,k+1|k}>0\notag,
\end{align}
and $S_{i,k}=H_{i,k}^TR_{i,k}^{-1}H_{i,k}$. Due to the matrix inversion lemma,
\begin{align}
&\Gamma_{i,k|k}A_{i,k}^T\Lambda_{i,k}A_{i,k}\Gamma_{i,k|k}\notag\\
=&\Gamma_{i,k|k}-\Gamma_{i,k|k}A_{i,k}^T\left(A_{i,k}\Gamma_{i,k|k}A_{i,k}^T+W_{i,k}\right)^{-1}A_{i,k}\Gamma_{i,k|k},\notag\\
=&\left(\Gamma_{i,k|k}^{-1}+A_{i,k}^TW_{i,k}^{-1}A_{i,k}\right)^{-1}>0,\notag
\end{align}
hence $\Lambda_{i,k}>0$. Consequently, the first term in \eqref{eq:Lyap_one_step} is negative definite.

\noindent\textbf{Step 2.} Negative semidefiniteness of the second term in \eqref{eq:Lyap_one_step}.

Due to Lemma 2(i) in \cite{Olfati-SaberCDC2009} we have $F_{i,k}=\Gamma_{i,k|k}\Gamma_{i,k|k-1}^{-1}$, hence the consensus gain is equivalent to
\begin{align}
C^{j}_{i,k}=\gamma^j_{i,k}\Gamma_{i,k|k-1}\hat{I}_{i,j}^T=\gamma^j_{i,k}\Gamma_{i,k|k}\left(F_{i,k}^T\right)^{-1}\hat{I}_{i,j}^T\notag.
\end{align}
Let $\hat{i}\in\{1,\cdots,N-1\}$ be the index of the overlapping regions, and define
\begin{align*}
\bm{\hat{\eta}}_{\hat{i},k|k-1}=(\bm{\eta}^T_{\hat{i},k|k-1}\hat{I}_{\hat{i},\hat{i}+1}^T, \bm{\eta}^T_{\hat{i}+1,k|k-1}\hat{I}_{\hat{i}+1,\hat{i}}^T)^T.   
\end{align*}
The second term in \eqref{eq:Lyap_one_step} can be written as
\begin{align}
\arraycolsep=1.5pt\def\arraystretch{1.5}
\begin{array}{rl}
2\sum_{i=1}^{N}\left(\bm{\eta}_{i,k|k-1}^TF_{i,k}^T\Gamma_{i,k|k}^{-1}\sum_{j\in \mathcal{N}_i}C^{j}_{i,k}\bm{u}_{i,k}^j\right)&=2\sum_{\hat{i}=1}^{N-1}\gamma_{\hat{i},k}^{\hat{i}+1}\left(\bm{\eta}_{\hat{i},k|k-1}^T\hat{I}_{\hat{i},\hat{i}+1}^T\bm{u}_{\hat{i},k}^{\hat{i}+1}+\bm{\eta}_{\hat{i}+1,k|k-1}^T\hat{I}_{\hat{i}+1,\hat{i}}^T\bm{u}_{\hat{i}+1,k}^{\hat{i}}\right)\\
&=-2\sum_{\hat{i}=1}^{N-1}\gamma_{\hat{i},k}^{\hat{i}+1}\bm{\hat{\eta}}^T_{\hat{i},k|k-1}\hat{L}_{\hat{i}}\bm{\hat{\eta}}_{\hat{i},k|k-1}\leq0,
\end{array}\notag
\end{align}
where
\begin{align}
\hat{L}_{\hat{i}}=\left(
  \begin{array}{cc}
    1&-1\\
    -1&1
  \end{array}
\right)\otimes I_{n_{\hat{i},\hat{i}+1}},\notag
\end{align}
and the last inequality holds due to the quadratic property of the Laplacian matrix \cite{Godsil2001Graph}.

\noindent \textbf{Step 3.} Upper bound of the third term in \eqref{eq:Lyap_one_step}.

Given the choice of consensus gain in \eqref{eq:DLKCFconsensusgain}, the third term in \eqref{eq:Lyap_one_step} can be written as
\begin{align}
\begin{array}{l}
\quad \sum_{i=1}^{N}\left(\sum_{j\in \mathcal{N}_i}C^{j}_{i,k}\bm{u}_{i,k}^j\right)^T\Gamma_{i,k|k}^{-1}\left(\sum_{j\in \mathcal{N}_i}C^{j}_{i,k}\bm{u}_{i,k}^j\right)\\
=\sum_{i=1}^{N}\left(\sum_{j\in \mathcal{N}_i}\hat{I}_{i,j}^T\gamma^j_{i,k}\bm{u}_{i,k}^j\right)^TG_{i,k}\left(\sum_{j\in \mathcal{N}_i}\hat{I}_{i,j}^T\gamma^j_{i,k}\bm{u}_{i,k}^j\right),
\end{array}\notag
\end{align}
where we define $G_{i,k}=A_{i,k-1}\Gamma_{i,k-1|k-1}A_{i,k-1}^{T}+Q_{i,k-1}+\Gamma_{i,k|k-1}S_{i,k}\Gamma_{i,k|k-1}$.
Recall that $\mathcal{J}_i=\mathcal{N}_i\bigcup\{i\}$, and define $\bm{\eta}_{\mathcal{J}_i,k|k-1}=\textrm{col}_{j\in\mathcal{J}_i}\left(\bm{\eta}_{j,k|k-1}\right)$ where $j$ are sorted in ascending order. Columnizing $\bm{u}_{i,k}^j$ over all neighbors $j\in \mathcal{N}_i$ within section $i$ yields
\begin{align}
\bm{u}_{\mathcal{N}_i,k}=\textrm{col}_{j\in \mathcal{N}_i}\left(\gamma_{i,k}^j\bm{u}_{i,k}^j\right)=\tilde{L}_i\tilde{I}_i\bm{\eta}_{\mathcal{J}_i,k|k-1},
\end{align}
where $j$ are sorted in ascending order, $\tilde{L}_i$ is defined as
\begin{align}
\tilde{L}_{i}= \left\{ \begin{array}{ll}
\left(
  \begin{array}{cc}
   -\hat{I}_{i,i+1}&\hat{I}_{i+1,i}
  \end{array}
\right)& \textrm{if $i=1$}\\
\left(
  \begin{array}{cc}
   \hat{I}_{i-1,i}&-\hat{I}_{i,i-1}
  \end{array}
\right)& \textrm{if $i=n$,}\\
\left(
  \begin{array}{ccc}
   \hat{I}_{i-1,i}&-\hat{I}_{i,i-1}&\bm{0}_{n_{i+1,i+1}}\\
   \bm{0}_{n_{i-1,i-1}}& -\hat{I}_{i,i+1}&\hat{I}_{i+1,i}
  \end{array}
\right)& \textrm{otherwise,}\\
\end{array} \right.\notag
\end{align}
and $\tilde{I}_{i}=\textrm{diag}(\gamma^{i-1}_{i,k}I_{n_{i-1}+\lfloor0.5n_{i}\rfloor},\gamma^{i+1}_{i,k}I_{n_{i}-\lfloor0.5n_{i}\rfloor+n_{i+1}})$. Further define
\begin{align}
\tilde{H}_i= \left\{ \begin{array}{ll}
\hat{I}_{i,i+1}& \textrm{if $i=1$}\\
\hat{I}_{i,i-1}& \textrm{if $i=n$}\\
\left(\hat{I}_{i,i-1}^T\textrm{ }\hat{I}_{i,i+1}^T\right) & \textrm{otherwise.}
\end{array} \right.\notag
\end{align}
The third term in \eqref{eq:Lyap_one_step} is equivalent to
\begin{align}
\begin{array}{l}
\quad \sum_{i=1}^{N}\left(\sum_{j\in \mathcal{N}_i}\hat{I}_{i,j}^T\gamma^j_{i,k}\bm{u}_{i,k}^j\right)^TG_{i,k}\left(\sum_{j\in \mathcal{N}_i}\hat{I}_{i,j}^T\gamma^j_{i,k}\bm{u}_{i,k}^j\right)\\
=\sum_{i=1}^{N}\bm{\eta}_{\mathcal{J}_i,k|k-1}^T\tilde{I}_i\tilde{L}_i^T\tilde{H}_i^TG_{i,k}\tilde{H}_i\tilde{L}_i\tilde{I}_i\bm{\eta}_{\mathcal{J}_i,k|k-1}\\
\leq\sum_{i=1}^{N}\left(\gamma_{i,k}^{\max}\right)^2\lambda_{\max}\left(\tilde{L}_i^T\tilde{H}_i^TG_{i,k}\tilde{H}_i\tilde{L}_i\right)\|\bm{\eta}_{\mathcal{J}_i,k|k-1}\|^2,
\end{array}\notag
\end{align}
where $\gamma_{i,k}^{\max}=\max_{j\in\mathcal{N}_i}\gamma_{i,k}^j$ and $\lambda_{\max}$ (resp. $\lambda_{\min}$) is the maximum (resp. minimum) eigenvalue of a matrix.

\noindent\textbf{Step 4.} The negative definiteness of \eqref{eq:Lyap_one_step}.

Note that given Step 1, the first term of \eqref{eq:Lyap_one_step} can be equivalently written as
\begin{align*}
\begin{array}{l}
\quad\sum_{i=1}^{N}-\bm{\eta}_{i,k|k-1}^T\Lambda_{i,k-1}\bm{\eta}_{i,k|k-1}=\sum_{i=1}^{N}-\bm{\eta}_{\mathcal{J}_i,k|k-1}^T\Lambda_{\mathcal{J}_i,k-1}\bm{\eta}_{\mathcal{J}_i,k|k-1},\notag
\end{array}
\end{align*}
where $\Lambda_{\mathcal{J}_i,k}=\textrm{diag}_{j\in\mathcal{J}_i}(\mu^j_{i}\Lambda_{j,k})$ with the indexes $j$ sorted by ascending order, and the scaling factors are pre-defined and satisfy $\sum_{j\in\mathcal{J}_i}\mu_j^i=1$ for all $i$. Given Steps 1-3, $\Delta V_k$ satisfies
\begin{align}
\begin{array}{l}
\Delta V_k\leq-2\sum_{\hat{i}=1}^{N-1}\gamma_{\hat{i},k}^{\hat{i}+1}\bm{\hat{\eta}}^T_{\hat{i},k|k-1}\hat{L}_{\hat{i}}\bm{\hat{\eta}}_{\hat{i},k|k-1}\\
\quad\quad\quad+\sum_{i=1}^{N}\left(\left(\gamma_{i,k}^{\max}\right)^2\lambda_{\max}\left(\tilde{L}_i^T\tilde{H}_i^TG_{i,k}\tilde{H}_i\tilde{L}_i\right)-\lambda_{\min}\left(\Lambda_{\mathcal{J}_i,k-1}\right)\right)\|\bm{\eta}_{\mathcal{J}_i,k|k-1}\|^2.
\end{array}\label{eq:oneStepChange}
\end{align}
Therefore by choosing $\gamma^j_{i,k}$ sufficiently small we can render $\Delta V_k<0$ for all $k\geq0$ and for all $\bm{\eta}_{1:N,k-1|k-1}\neq0$. Precisely, we need $\gamma^j_{i,k}<\gamma_{i,k}^*$ where $\gamma_{i,k}^*$ is defined by
\begin{align*}
\begin{array}{l}
\gamma_{i,k}^*=\left(\frac{\lambda_{\min}\left(\Lambda_{\mathcal{J}_i,k-1}\right)}{\lambda_{\max}\left(\tilde{L}_i^T\tilde{H}_i^TG_{i,k}\tilde{H}_i\tilde{L}_i\right)}\right)^{\frac{1}{2}}.
\end{array}
\end{align*}
Note that to compute $\gamma_{i,k}^*$, only information from one-hop neighbors is needed, and global communication topology is not required compared to \cite{Olfati-SaberCDC2009}.
Hence, $\Delta V_k<0$ for all $k\geq0$ and $\bm{\eta}_{1:N,k-1|k-1}\neq0$, and therefore $\bm{\eta}_{1:N,k|k}=0$ is GAS for the mean error dynamics of the DLKCF. Consequently, all estimators reach consensus on the shared states.
\end{proof}
When the consensus gain is zero, the mean error dynamics of each local agent is also GAS under observable modes. However, due to different model errors and innovation sequences, the estimates provided by neighboring agents on their shared overlapping regions inevitably disagree in any realization of the filter. Hence, the consensus term is designed to promote agreement without destabilizing the filter, which is further verified in Section \ref{sub:Individual_DLKCF}. Moreover, when $\gamma^j_{i,k}<\gamma_{i,k}^*$, it can be deduced from \eqref{eq:oneStepChange} that $\Delta V_k<-\sqrt{2}\sum_{\hat{i}=1}^{N-1}\gamma_{\hat{i},k}^{\hat{i}+1}\|\bm{u}^{\hat{i}+1}_{\hat{i},k}\|^2$ (derived in Appendix \ref{ap:deltaV}). This indicates that $V_k$ strictly decreases at the rate proportional to the total disagreement until the neighboring disagreements on all the overlapping regions converge to zero, which is a property cannot be achieved without the consensus term.

\subsection{Ultimately bounded mean estimates in unobservable modes}\label{sub:UnobservableBound}

Challenges for estimating an unobservable section stem from the dependence of the system dynamics of the SMM on the shock velocity and location, which are functions of the state variables to be estimated. Hence, non-observability of the system will lead to unknown system dynamics. Moreover, the unobservable modes are also undetectable since the density of the cells in the unobservable subsystem does not dissipate. In this subsection we show that the mean estimates of all the cells in an unobservable section are ultimately bounded inside $[-\epsilon,\varrho_{\text{m}}+\epsilon]$ for all $\epsilon>0$, provided that the upstream and downstream measurements are available. This ensures that the mean estimates of the DLKCF for unobservable modes are always physically meaningful to within $\epsilon$. Since this subsection studies the properties of the filter for an individual unobservable freeway section, the section index $i$ is dropped for notational simplicity.

First we present a lemma stating the boundedness of the Kalman gain, which is necessary for the boundedness of the state estimate, and is obtained based on the boundedness of the cross-covariance of the observable and unobservable subsystems in the \emph{Kalman observability canonical form}. %The evolution of the cross-covariance of the observable and unobservable subsystems is detailed in \Ye{Appendix... in arXiv}.

\begin{lem}[Lemma 2 in \cite{SunWorkCONES2016}]\label{lem_u_bound_k}
Consider a freeway section with dimension $n\ge 2$. Let $(\underline{k}_{\text{U}},\bar{k}_{\text{U}}]$ be the time interval\footnote{Throughout this article, the time instant $k\in \mathbb{N}$. Hence $k\in(\underline{k}_{\text{U}},\bar{k}_{\text{U}}]$ means $k\in\{\underline{k}_{\text{U}}+1,\cdots, \bar{k}_{\text{U}}\}$.} while the section stays inside the unobservable modes, i.e., $\sigma(k)\in\{\text{FC1, FC2}\}$ for $k\in(\underline{k}_{\text{U}}, \overline{k}_{\text{U}}]$, and $\sigma(k)\in\{\text{FF, CC, CF}\}$ for $k=\underline{k}_{\text{U}}$ and $k=\bar{k}_{\text{U}}+1$, where $0\le \underline{k}_{\text{U}}<\bar{k}_{\text{U}}\le+\infty$. Define
\begin{align}\label{eq:check_a_b_c}
\check{a}=\min\left\{2r_2^{-1},q_1\right\},\quad \check{b}=\max\left\{2r_1^{-1},q_2\right\},\quad \check{c}_1=\frac{\check{a}}{1+\check{a}\check{b}},\quad \check{c}_2=\frac{1+\check{a}\check{b}}{\check{a}},
\end{align}
and let
\begin{equation}\notag
\begin{split}
\check{c}_3&=\check{c}_1^{-1}+q_1^{-1}\check{c}_1^{-2},\quad\bar{t}=\sqrt{2}\left(n-2\right) \left(\frac{\check{c}_2}{\check{c}_1}\right)^{\frac{1}{2}},\\
\bar{q}&=\left(1-\check{c}_3\check{c}_2^{-1}\right)^{\frac{1}{2}},\quad
\bar{p}=2r_2q_1^{-1}\frac{\Delta t}{\Delta x}\max\{v_{\text{m}},w\}\left(r_2+q_2\right)+q_2,\\
\bar{\gamma}&=n\sqrt{n}\left\|\Gamma_{\underline{k}_{\text{U}}|\underline{k}_{\text{U}}}\right\|\left(1+\frac{\Delta t}{\Delta x}\left(w+v_{\text{m}}\right)\right)^2+n\sqrt{n}q_2\left(1+\frac{\Delta t}{\Delta x}\left(w+v_{\text{m}}\right)\right)+\sqrt{n}q_2.
\end{split}
\end{equation}
Given density measurements of the boundary cells, the Kalman gain satisfies $\left\|K_k\right\|_{\infty}\le \mathfrak{k}\left(\Gamma_{\underline{k}_{\text{U}}|\underline{k}_{\text{U}}}\right)$ for all $k\in (\underline{k}_{\text{U}}, \bar{k}_{\text{U}}]$, where $\mathfrak{k}\left(\cdot\right)$ is a function of $M\in\mathbb{R}^{n\times n}$ given by
\begin{equation}\label{eq:mathfrak_k}
\begin{split}
\mathfrak{k}\left(M\right)=\sqrt{2}r_1^{-1}\max\left\{\sqrt{n}\left(\left\|M\right\|\left(1+\frac{\Delta t \left(w+v_{\text{m}}\right)}{\Delta x}\right)+q_2\right), \sqrt{2}\left(r_2+q_2\right),  2\bar{\gamma}, 2\left(\bar{t}\bar{q}\bar{\gamma}+\bar{p}\right), 2\left(\frac{\bar{t}\bar{p}\bar{q}}{1-\bar{q}}+\bar{p}\right)\right\}.
\end{split}
\end{equation}
\end{lem}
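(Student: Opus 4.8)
The plan is to work in the \emph{Kalman observability canonical form}, where the non-interacting structure of the gain reduces the whole claim to a bound on the cross-covariance between the observable and unobservable subsystems. Since the section keeps its boundary measurements $H_{\text{b}}$ throughout $(\underline{k}_{\text{U}},\bar{k}_{\text{U}}]$, I would fix an orthogonal change of coordinates $T$, constant on the interval and built from the observability matrix of $(A,H_{\text{b}})$, that block-triangularizes the dynamics and puts the output in the form $\tilde H=(H_O\ \bm{0})$, so that $\tilde A=\left(\begin{smallmatrix}A_O&\bm{0}\\A_{21}&A_U\end{smallmatrix}\right)$ with $(A_O,H_O)$ observable and an unobservable block of dimension $n-2$ (the factor $(n-2)$ in $\bar t$). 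Writing $\tilde\Gamma_{k|k}=\left(\begin{smallmatrix}\Gamma_{OO}&\Gamma_{OU}\\\Gamma_{OU}^T&\Gamma_{UU}\end{smallmatrix}\right)$, a direct computation of $\tilde K_k=\tilde\Gamma_{k|k-1}\tilde H^T(R_k+\tilde H\tilde\Gamma_{k|k-1}\tilde H^T)^{-1}$ shows that its upper block is the observable-subsystem Kalman gain and its lower block is $\Gamma_{OU}^TH_O^T(R_k+H_O\Gamma_{OO}H_O^T)^{-1}$. Hence $\|K_k\|$ is controlled as soon as $\Gamma_{OO}$ and $\Gamma_{OU}$ are bounded, \emph{even though} $\Gamma_{UU}$ is not; transforming back by the orthogonal $T$ and passing from the $2$-norm to $\|\cdot\|_\infty$ through the $\sqrt n$ factors will yield the stated form of $\mathfrak{k}$ in \eqref{eq:mathfrak_k}.

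Because the two boundary cells are measured over a short horizon, the pair $(A_O,H_O)$ is observable (the factor $2$ in $\check a,\check b$ accounting for the two boundary sensors), so the argument behind Lemma~\ref{lem:GammaBound} gives $\check c_1 I\le\Gamma_{OO,k|k}^{-1}\le\check c_2 I$, i.e. $\check c_2^{-1}I\le\Gamma_{OO,k|k}\le\check c_1^{-1}I$, uniformly on the interval. The heart of the proof is then the recursion for the cross block: combining the prediction $\Gamma_{OU,k|k-1}=A_O\Gamma_{OO,k-1|k-1}A_{21}^T+A_O\Gamma_{OU,k-1|k-1}A_U^T+\tilde Q_{OU}$ with the correction $\Gamma_{OU,k|k}=(I-K_O H_O)\Gamma_{OU,k|k-1}$ produces a single update equal to a contraction applied to $\Gamma_{OU,k-1|k-1}$ plus a forcing term built from the bounded observable covariance, the coupling $A_{21}$, and the process noise. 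Using the SMM structure, namely that under the CFL condition every block of $A$ is sub-stochastic so that $\|A_U\|$ and $\|A_{21}\|$ are controlled by $\tfrac{\Delta t}{\Delta x}\max\{v_{\text{m}},w\}$, I would bound the forcing by $\bar p$ and the per-step contraction by $\bar q=(1-\check c_3\check c_2^{-1})^{1/2}<1$, the contraction coming from $(I-K_OH_O)=I-\Gamma_{OO,k|k}H_O^TR^{-1}H_O$ together with the two-sided bounds on $\Gamma_{OO,k|k}$.

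Iterating this scalar recursion on $\|\Gamma_{OU,k|k}\|$ from $\|\Gamma_{\underline{k}_{\text{U}}|\underline{k}_{\text{U}}}\|$, whose propagated contribution is captured by $\bar\gamma$, yields a geometrically decaying transient plus the convergent series $\sum_j\bar q^{\,j}\bar p$, so $\|\Gamma_{OU,k|k}\|$ is uniformly bounded; the dimension factor $\bar t=\sqrt2(n-2)(\check c_2/\check c_1)^{1/2}$ enters when the $2$-norm of the $(n-2)$-dimensional cross block is estimated and the geometric sum is closed as $\bar t\bar p\bar q/(1-\bar q)$. Taking the maximum of the early-step bound $\sqrt n(\|M\|(1+\tfrac{\Delta t(w+v_{\text{m}})}{\Delta x})+q_2)$, the baseline $\sqrt2(r_2+q_2)$, the transient term $2(\bar t\bar q\bar\gamma+\bar p)$, and the steady-state term $2(\bar t\bar p\bar q/(1-\bar q)+\bar p)$ then reproduces $\mathfrak{k}(\cdot)$. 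The main obstacle is precisely this cross-covariance bound: since $\Gamma_{UU}$ grows without bound in the undetectable directions, the naive estimate $\|\Gamma_{OU}\|\le\|\Gamma_{OO}\|^{1/2}\|\Gamma_{UU}\|^{1/2}$ is useless, and one must instead show that the measurement feedback on the observable part contracts the cross block faster than the unobservable dynamics $A_U$ inflate it, i.e. verify $\bar q<1$ while keeping $\bar p$ finite, which is exactly where the sub-stochastic, mass-conserving structure of the SMM is essential.
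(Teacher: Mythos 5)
Your proposal follows essentially the same route as the paper's proof: the same canonical decomposition into an observable subsystem (the two boundary cells) and an unobservable subsystem (the interior cells), the same block structure of the Kalman gain, the same two-sided bounds $\check{c}_1,\check{c}_2$ on the boundary-cell error covariance, and the same cross-covariance recursion whose geometric decay (driven by the observable-subsystem measurement feedback) plus bounded forcing $\bar{p}$ (from the sub-stochastic SMM blocks) produces the maximum of early-step, transient, and steady-state terms in $\mathfrak{k}(\cdot)$. The only difference is presentational: the paper obtains the decay for the accumulated product $\prod_{\kappa}\check{\Upsilon}^{(1)}_{\kappa}$ via a Lyapunov function and handles the right-multiplication by the unobservable dynamics $\check{A}^{(2)}_{k}$ through vectorization and a Kronecker-product bound (the source of the $(n-2)$ factor in $\bar{t}$), rather than as a literal per-step norm contraction, since the one-step map need not contract in norm even though the product decays geometrically.
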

\begin{proof}
The proof is reported in Appendix \ref{ap:lem_u_bound_k}.
\end{proof}

\begin{prop}[Proposition 2 in \cite{SunWorkCONES2016}]\label{Prop:UltimateBoundedness}
Consider an unobservable section in a road network with dimension $n$. For all $\epsilon>0$, a finite time $T(\epsilon)$ exists such that $\bm{\rho}^l_{k|k}\in[-\epsilon,\varrho_{{\text{m}}}+\epsilon]$ for all $k>T(\epsilon)$ and for all $l\in\{1,\cdots,n\}$, independent of the initial estimate.
\end{prop}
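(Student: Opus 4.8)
The plan is to fix a single unobservable section and exploit that the consensus gain vanishes in the FC modes by \eqref{eq:DLKCFconsensusgain}, so that over the unobservable window $(\underline{k}_{\text{U}},\bar{k}_{\text{U}}]$ of Lemma \ref{lem_u_bound_k} the mean estimate obeys the plain Kalman recursion $\bm{\rho}_{k|k}=(I-K_kH_k)(A_{k-1}\bm{\rho}_{k-1|k-1}+\bm{d}_{k-1})+K_kH_k\bm{\rho}_k$, where $\bm{d}_{k-1}$ collects the bounded deterministic forcing of the SMM and I have used $\bm{z}_k=H_k\bm{\rho}_k$. Since the true density is physical, $\bm{\rho}_k\in[0,\varrho_{\text{m}}]^n$, the measurement term is bounded; and by Lemma \ref{lem_u_bound_k} the gain $K_k$ is uniformly bounded, so the correction is a bounded perturbation of the predicted profile. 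Initial-condition independence will come from the fact that $\Gamma_{\underline{k}_{\text{U}}|\underline{k}_{\text{U}}}$ is itself bounded (it is produced during an observable phase governed by Lemma \ref{prop:GammaBound}), so the gain bound $\mathfrak{k}(\cdot)$ does not depend on $\bm{\rho}_{0|0}$.

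First I would treat the freeflow and congestion cells, which constitute the detectable part. Within an FC mode the Godunov structure makes each interior freeflow cell update a convex combination $(1-\tfrac{v_{\text{m}}\Delta t}{\Delta x})\rho^l+\tfrac{v_{\text{m}}\Delta t}{\Delta x}\rho^{l-1}$ of itself and its upstream neighbour, and symmetrically each congestion cell a convex combination of itself and its downstream neighbour. Under the CFL condition these coefficients lie in $[0,1]$, so any overshoot above $\varrho_{\text{m}}$ or below $0$ cannot grow and in fact contracts geometrically towards the value fed in from the measured boundary cell. Propagating this cell-by-cell (downstream from the upstream sensor in the freeflow block, upstream from the downstream sensor in the congestion block) and adding the bounded correction shows that every freeflow and congestion cell enters $[-\epsilon,\varrho_{\text{m}}+\epsilon]$ after a finite, initial-condition-independent time.

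The crux is the shock (transition) cell, which is exactly the undetectable coordinate: in the FC transition matrix it carries a unit self-loop and receives the linearized fluxes $v_{\text{m}}\rho^{s}$ and $w(\varrho_{\text{m}}-\rho^{s+2})$ from its neighbours, so it behaves as an integrator and no observability or contraction argument applies to it directly. Here I would combine the physics with the feedback: mass conservation forces the increment of this cell to equal the net flux across the shock, the flow--density relationship \eqref{eq:qflow} caps each one-sided flux at $q_{\text{m}}$, and the neighbouring freeflow/congestion estimates have already been confined to the physical interval by the previous step. The residual drift is then controlled through the bounded measurement feedback of Lemma \ref{lem_u_bound_k} (which acts on the shock cell via the observable/unobservable cross-covariance) together with the self-limiting behaviour of the estimated dynamics $\hat{A}_{k}$: once the shock-cell estimate crosses $\varrho_{\text{c}}$ it is reclassified as congested and thereafter obeys the contracting convex-combination update. \emph{Establishing this last bound quantitatively is the main obstacle}, since it cannot lean on detectability and must instead show that the physical flux caps plus the Lemma \ref{lem_u_bound_k} correction keep the accumulated increment within $\epsilon$ of $[0,\varrho_{\text{m}}]$.

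Finally I would assemble the per-cell results: taking $T(\epsilon)$ to be the maximum of the finitely many cell-wise entry times yields $\bm{\rho}^l_{k|k}\in[-\epsilon,\varrho_{\text{m}}+\epsilon]$ for all $l$ and all $k>T(\epsilon)$. Since every bound above is geometric (hence forgets $\bm{\rho}_{0|0}$) and depends only on $\Gamma_{\underline{k}_{\text{U}}|\underline{k}_{\text{U}}}$ and the model constants, the time $T(\epsilon)$ is independent of the initial estimate, as claimed.
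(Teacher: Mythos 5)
Your setup (consensus term switched off in the FC modes per \eqref{eq:DLKCFconsensusgain}, bounded Kalman gain from Lemma \ref{lem_u_bound_k}, vanishing mean error on the measured boundary cells) matches the paper's, and your convex-combination treatment of the cells lying strictly inside the freeflow or congestion blocks is workable (modulo the secondary issue that the shock position, and hence the block membership of a given cell, can change between time steps). But the proof has a genuine gap exactly where you flag it: the shock cell. You state that establishing the bound for this integrator coordinate ``is the main obstacle,'' and the argument you sketch for it --- flux caps plus reclassification of the cell as congested once its estimate crosses $\varrho_{\text{c}}$ --- is never made quantitative; it would moreover require analyzing the closed loop between the state estimate and the mode estimate $\hat{A}_{k}$, which the paper deliberately avoids (it asserts its unobservable-mode results hold even when $\hat{A}_{k}$ and $A_{k}$ differ). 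Since the interior cells are precisely the unobservable subsystem, the step you leave open is the entire content of the proposition; what you have established is only the easy part.

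The paper's proof shows that no special treatment of the shock cell is needed, and this is the key idea you are missing. It proceeds by induction along cells --- for the lower bound from the upstream boundary cell moving downstream, for the upper bound from the downstream boundary cell moving upstream --- and for \emph{every} interior cell $l$, shock cell included, it uses only two one-sided consequences of the Godunov flux \eqref{eq:qflow}: the outflow satisfies $\mathfrak{f}\left(\bm{\rho}^{l}_{k|k},\bm{\rho}^{l+1}_{k|k}\right)\le v_{\text{m}}\bm{\rho}^{l}_{k|k}$ (always true because of the $\min$), and the inflow is bounded below once the induction hypothesis bounds $\bm{\rho}^{l-1}_{k|k}$ from below. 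Hence, whenever $\bm{\rho}^{l}_{k|k}$ falls below its threshold $-\frac{(l-1)\epsilon}{n}$, the prediction step produces a positive drift proportional to the violation,
\begin{align*}
\bm{\rho}^{l}_{k+1|k+1}>\bm{\rho}^{l}_{k|k}+\frac{v_{\text{m}}\Delta t}{\Delta x}\left|\bm{\rho}^{l}_{k|k}+\frac{(l-1)\epsilon}{n}\right|-\bar{c}\left\|\check{\bm{\eta}}^{(1)}_{k|k}\right\|_{\infty},
\end{align*}
while the correction step perturbs this by at most a bounded gain times the observable-subsystem error $\left\|\check{\bm{\eta}}^{(1)}_{k|k}\right\|_{\infty}\rightarrow 0$. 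An ISS-type ultimate-boundedness argument then yields a finite entry time $T_l(\epsilon)$ for each cell, and $T(\epsilon)=\max_l T_l(\epsilon)$; the upper bound is symmetric, inducting from cell $n$ back to cell $1$. In short, the stabilizing mechanism for the undetectable coordinate is neither contraction nor mode reclassification but the saturation structure of the flux itself: the one-sided bounds hold for every cell regardless of which branch of the $\min$ is active, so the ``crux'' you isolate dissolves into the same induction step as all the other cells.
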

\begin{proof}
The proof is reported in \cite[Proposition 2]{SunWorkCDC14} (also given in Appendix \ref{ap:prop_ub_proof}).
\end{proof}

Proposition \ref{Prop:UltimateBoundedness} indicates that when the estimation error of the boundary cells converges to zero, it will drive the state estimate of the interior cells inside $[0, \varrho_{\text{m}}]$ due to the conservation law and the flow-density relationship embedded in the traffic model. Hence, it is necessary to ensure the error dynamics of the boundary cells is asymptotically stable.

\subsection{Boundedness of the mean error under switches among observable and unobservable modes}\label{section:switching}
This subsection derives the upper bound for the 2-norm of the mean estimation error when a freeway section switches among observable and unobservable modes. We first analyse the upper bound of the mean error when the section switches among the unobservable modes, which quantifies the increase of the mean error while the section is unobservable. Next, the convergence rate of the mean error dynamics while the section switches among the observable modes is studied. Finally, we derive the minimum number of time steps (i.e., the residence time) required in observable modes to ensure the boundedness of the mean error. All results in this subsection hold for every individual freeway section. In the analysis below, we drop the section index $i$ when it can be omitted for notational simplicity.

\subsubsection{Upper bound of the mean error in unobservable modes}
Let $(\underline{k}_{\text{U}}, \bar{k}_{\text{U}}]$ be the time interval inside which a section switches among unobservable modes, i.e., the mode index $\sigma(k)\in\{\text{FC1, FC2}\}$ for $k\in (\underline{k}_{\text{U}}, \bar{k}_{\text{U}}]$, and $\sigma(k)\in\{\text{FF, CC, CF}\}$ for $k=\underline{k}_{\text{U}}$ and $k=\bar{k}_{\text{U}}+1$. Based on Lemma \ref{lem_u_bound_k}, the next proposition derives an upper bound for $\|\bm{\eta}_{k|k}\|$ which is uniform across all $k\in(\underline{k}_{\text{U}}, \bar{k}_{\text{U}}]$. The derived bound is a function of $\epsilon$ and $\Gamma_{\underline{k}_{\text{U}}|\underline{k}_{\text{U}}}$, and is larger than $\epsilon$ (where $\epsilon$ is defined as the upper bound for $\|\bm{\eta}_{\underline{k}_{\text{U}}|\underline{k}_{\text{U}}}\|$). Moreover, the derived bound does not depend on the length of the time interval $(\underline{k}_{\text{U}}, \bar{k}_{\text{U}}]$.
\begin{prop}[Proposition 3 in \cite{SunWorkCONES2016}]\label{prop_u_bound}
Consider a freeway section which switches among the unobservable modes while $k\in (\underline{k}_{\text{U}},\bar{k}_{\text{U}}]$, where $0\le \underline{k}_{\text{U}}<\bar{k}_{\text{U}}\le+\infty$. Let
\begin{align}\label{eq:gamma_c}
\arraycolsep=1.5pt\def\arraystretch{1.5}
\begin{array}{l}
c_0=\max\left\{1, \sqrt{\check{c}_2\check{c}_1^{-1}}r_2q_1^{-1}\right\},\\
\mathfrak{c}\left(\Gamma_{\underline{k}_{\text{U}}|\underline{k}_{\text{U}}}\right)=c_0\Delta x\mathfrak{k}\left(\Gamma_{\underline{k}_{\text{U}}|\underline{k}_{\text{U}}}\right)\left(\Delta t \min\left\{v_{{\text{m}}}, w\right\}\right)^{-1},
\end{array}
\end{align}
where $\check{c}_1=\check{a}(1+\check{a}\check{b})^{-1}$ and $\check{c}_2=\check{a}^{-1}(1+\check{a}\check{b})$, with $\check{a}=\min\left\{2r_2^{-1},q_1\right\}$ and $\check{b}=\max\left\{2r_1^{-1},q_2\right\}$, and $\mathfrak{k}\left(\cdot\right)$ is given in \eqref{eq:mathfrak_k}. For all $\epsilon>0$, if $\left\|\bm{\eta}_{\underline{k}_{\text{U}}|\underline{k}_{\text{U}}}\right\|<\epsilon$, then $\|\bm{\eta}_{k|k}\|<\mathfrak{h}(\epsilon,\Gamma_{\underline{k}_{\text{U}}|\underline{k}_{\text{U}}})$ for all $k\in(\underline{k}_{\text{U}},\bar{k}_{\text{U}}]$, where $\mathfrak{h}(\epsilon,\Gamma_{\underline{k}_{\text{U}}|\underline{k}_{\text{U}}})=\sqrt{n}\left(\varrho_{{\text{m}}}+\epsilon\left(c_0+(n-2)\mathfrak{c}\left(\Gamma_{\underline{k}_{\text{U}}|\underline{k}_{\text{U}}}\right)\right)\right)$.
\end{prop}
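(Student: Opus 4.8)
The plan is to prove a uniform per-cell bound $|\bm{\eta}^l_{k|k}| \le \varrho_{\text{m}} + \epsilon(c_0 + (n-2)\mathfrak{c}(\Gamma_{\underline{k}_{\text{U}}|\underline{k}_{\text{U}}}))$ valid for every cell $l$ and every $k \in (\underline{k}_{\text{U}}, \bar{k}_{\text{U}}]$, and then pass to the 2-norm through $\|\bm{\eta}_{k|k}\| \le \sqrt{n}\max_l|\bm{\eta}^l_{k|k}|$, which produces the leading $\sqrt{n}$ in $\mathfrak{h}$. The starting point is that the consensus gain vanishes in the FC modes by \eqref{eq:DLKCFconsensusgain}, so throughout the unobservable interval the mean error recursion \eqref{eq:DLKCFerrorDynamics} collapses to the autonomous linear system $\bm{\eta}_{k|k} = F_k A_{k-1}\bm{\eta}_{k-1|k-1}$ with $F_k = I - K_k H_k$, and the hypothesis $\|\bm{\eta}_{\underline{k}_{\text{U}}|\underline{k}_{\text{U}}}\| < \epsilon$ provides the initial smallness to propagate. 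Since the true densities obey $\rho^l_k \in [0,\varrho_{\text{m}}]$, it suffices to confine the mean estimate $\bm{\rho}^l_{k|k}$ to the padded range $[-\delta, \varrho_{\text{m}}+\delta]$ with $\delta = \epsilon(c_0 + (n-2)\mathfrak{c})$; the per-cell error bound then follows because estimate and truth both lie in intervals of width $\varrho_{\text{m}}$ shifted by at most $\delta$.

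First I would control the two measured boundary cells. These belong to the observable subsystem of the Kalman observability canonical decomposition used in Lemma \ref{lem_u_bound_k}, whose covariance inverse is squeezed by the constants $\check{c}_1$ and $\check{c}_2$ of \eqref{eq:check_a_b_c} (the boundary-subsystem analogues of the bounds in Lemma \ref{lem:GammaBound}). Combining these covariance bounds with the noise bounds through the correction step shows that the measurement feedback keeps the boundary-cell mean estimate within $c_0\epsilon$ of $[0,\varrho_{\text{m}}]$, where $c_0 = \max\{1,\sqrt{\check{c}_2/\check{c}_1}\,r_2 q_1^{-1}\}$ is exactly the constant that collects the covariance-conditioning factor $\sqrt{\check{c}_2/\check{c}_1}$ and the noise ratio $r_2/q_1$.

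The crux is the interior cells, where observability arguments are unavailable because the FC modes are undetectable. Instead I would exploit the block structure of $A_{\sigma}^s$: the upstream block $\hat{\Theta}$ transports the well-controlled upstream boundary information downstream in freeflow, while the downstream block $\hat{\Delta}$ transports the downstream boundary information upstream in congestion. Using mass conservation and the triangular flow--density relationship, one expresses the estimate at an interior cell in terms of the estimate at its neighbor nearer the relevant boundary; inverting the characteristic flux introduces the factor $\Delta x(\Delta t\min\{v_{\text{m}},w\})^{-1}$ (at least one by the CFL condition) multiplied by the uniformly bounded Kalman gain $\mathfrak{k}(\Gamma_{\underline{k}_{\text{U}}|\underline{k}_{\text{U}}})$ of Lemma \ref{lem_u_bound_k}, i.e. precisely the per-cell amplification $\mathfrak{c}(\Gamma_{\underline{k}_{\text{U}}|\underline{k}_{\text{U}}})$ in \eqref{eq:gamma_c}. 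Iterating across the at most $n-2$ interior cells accumulates the term $(n-2)\mathfrak{c}\,\epsilon$, which, added to the boundary contribution $c_0\epsilon$, gives the padding $\delta$; the single genuinely unobservable shock cell cannot be corrected, but the same conservation-law confinement used in Proposition \ref{Prop:UltimateBoundedness} keeps its estimate inside the physical range and so contributes only the $\varrho_{\text{m}}$ term.

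The main obstacle is this interior step: because the shock-cell error does not dissipate, no contraction or Lyapunov argument applies, and boundedness must come entirely from the conservation-law structure of $A_\sigma^s$ together with the bounded Kalman gain. Two points require care. First, the per-cell amplification must be captured by the single constant $\mathfrak{c}$ uniformly in $k$, which is exactly why Lemma \ref{lem_u_bound_k} supplies a Kalman-gain bound depending only on $\Gamma_{\underline{k}_{\text{U}}|\underline{k}_{\text{U}}}$ rather than on $k$. Second, the accumulation over cells must be shown to be linear in $n-2$ rather than geometric in the residence time; this is what renders the final bound $\mathfrak{h}$ independent of the length of $(\underline{k}_{\text{U}}, \bar{k}_{\text{U}}]$, as the statement asserts.
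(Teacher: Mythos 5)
Your proposal follows essentially the same route as the paper's proof: control the two boundary cells through the observable-subsystem covariance bounds $\check{c}_1,\check{c}_2$ and the noise ratio $r_2 q_1^{-1}$ to get the $c_0\epsilon$ margin, then run a cell-by-cell induction on the interior using the CTM flux relation and the uniform Kalman-gain bound $\mathfrak{k}(\Gamma_{\underline{k}_{\text{U}}|\underline{k}_{\text{U}}})$ of Lemma \ref{lem_u_bound_k} to accumulate one $\mathfrak{c}$ per cell, confining every estimate to $[-\delta,\varrho_{\text{m}}+\delta]$ and lifting to the 2-norm with the $\sqrt{n}$ factor. One small imprecision: in the FC modes all $n-2$ interior cells are unobservable (not a single shock cell), and the $\varrho_{\text{m}}$ term in $\mathfrak{h}$ arises for every cell from comparing the padded estimate interval against the truth in $[0,\varrho_{\text{m}}]$ --- but this is exactly the mechanism you stated earlier, so the plan stands.
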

\begin{proof}
The proof is by induction.

\noindent \textbf{Step 1}: Denote as $\check{\bm{\eta}}^{(1)}_{k|k}=(\bm{\eta}^{1}_{k|k},\bm{\eta}^{n}_{k|k})^T$ the mean error of the observable subsystem\footnote{A detailed description of the observable and unobservable subsystems is given in Appendix \ref{ap:subsystems}.} (i.e., the boundary cells). The error covariance of the observable subsystem $\check{\Gamma}_{k|k}^{(1)}$ satisfies
\begin{equation*}
\check{\Gamma}_{k|k}^{(1)}<r_2I,\quad\text{and}\quad\check{\Gamma}_{k|k-1}^{(1)}>q_1 I, \quad \text{for $k\in(\underline{k}_{\text{U}},\bar{k}_{\text{U}}]$.}
\end{equation*}
Let $\check{A}^{(1)}=I$ be the state transition matrix associated with the observable subsystem, it follows that
\begin{equation*}
\arraycolsep=1.5pt\def\arraystretch{1.5}
\begin{array}{ll}
&\text{ }\left\|\check{\bm{\eta}}^{(1)}_{\underline{k}_{\text{U}}+1|\underline{k}_{\text{U}}+1}\right\|\le \left\|\left(I-\check{K}^{(1)}_{\underline{k}_{\text{U}}+1}\check{H}^{(1)}\right)\check{A}^{(1)}\right\|\left\|\check{\bm{\eta}}^{(1)}_{\underline{k}_{\text{U}}|\underline{k}_{\text{U}}}\right\|\\
=&\left\|\check{\Gamma}^{(1)}_{\underline{k}_{\text{U}}+1|\underline{k}_{\text{U}}+1}\left(\check{\Gamma}_{\underline{k}_{\text{U}}+1|\underline{k}_{\text{U}}}^{(1)}\right)^{-1}\right\|\left\|\check{\bm{\eta}}^{(1)}_{\underline{k}_{\text{U}}|\underline{k}_{\text{U}}}\right\|<r_2q_1^{-1}\left\|\check{\bm{\eta}}^{(1)}_{\underline{k}_{\text{U}}|\underline{k}_{\text{U}}}\right\|.
\end{array}
\end{equation*}
Denote as $\check{\mathcal{I}}^{(1)}_{\cdot,\cdot}$ and $\check{\mathcal{C}}^{(1)}_{\cdot,\cdot}$ the information and controllability matrix of the observable subsystem, we have $2 r_2^{-1}I< \check{\mathcal{I}}^{(1)}_{k,k-1}=R_{k-1}^{-1}+R_k^{-1}< 2r_1^{-1}I$ and $q_1 I< \check{\mathcal{C}}^{(1)}_{k,k-1}=\check{Q}_{k}^{(1)}< q_2I$ for all $k\in(\underline{k}_{\text{U}}, \bar{k}_{\text{U}}]$, where $\check{Q}_{k}^{(1)}$ is the model error covariance for the observable subsystem. Hence $\check{c}_1 I< (\check{\Gamma}_{k|k}^{(1)})^{-1}
< \check{c}_2 I$ for all $k\in(\underline{k}_{\text{U}}, \bar{k}_{\text{U}}]$ according to Lemma 7.1 and 7.2 in \cite{Jazwinski1970}. Define the Lyapunov function of the observable subsystem as $\check{V}_k=(\check{\bm{\eta}}^{(1)}_{k|k})^{T}(\check{\Gamma}^{(1)}_{k|k})^{-1}\check{\bm{\eta}}^{(1)}_{k|k}$, then $\check{V}_{k+1}<\check{V}_k$ for all $k\in(\underline{k}_{\text{U}}, \bar{k}_{\text{U}})$ due to \cite[Lemma 3]{Olfati-SaberCDC2009}. Consequently,
\begin{equation*}
\begin{array}{l}
\left\|\check{\bm{\eta}}^{(1)}_{k|k}\right\|<\left(\frac{\check{V}_k}{\check{c}_1}\right)^{\frac{1}{2}}<\left(\frac{\check{V}_{\underline{k}_{\text{U}}+1}}{\check{c}_1}\right)^{\frac{1}{2}}< \sqrt{\check{c}_2\check{c}_1^{-1}}\left\|\check{\bm{\eta}}^{(1)}_{\underline{k}_{\text{U}}+1|\underline{k}_{\text{U}}+1}\right\|,
\end{array}
\end{equation*}
for all $k\in(\underline{k}_{\text{U}}+1,\bar{k}_{\text{U}}]$. It follows that for all $k\in(\underline{k}_{\text{U}},\bar{k}_{\text{U}}]$,
\begin{equation*}
\begin{array}{l}
\left\|\check{\bm{\eta}}^{(1)}_{k|k}\right\|<\sqrt{\check{c}_2\check{c}_1^{-1}}r_2q_1^{-1}\left\|\check{\bm{\eta}}^{(1)}_{\underline{k}_{\text{U}}|\underline{k}_{\text{U}}}\right\|<\sqrt{\check{c}_2\check{c}_1^{-1}}r_2q_1^{-1}\epsilon\le c_0\epsilon.
\end{array}
\end{equation*}

\noindent \textbf{Step 2}: We use induction to show that $\bm{\rho}^{l}_{k|k}>-c_0\epsilon-(l-1)\mathfrak{c}\left(\Gamma_{\underline{k}_{\text{U}}|\underline{k}_{\text{U}}}\right)\epsilon \ge -\epsilon(c_0+(n-2)\mathfrak{c}(\Gamma_{\underline{k}_{\text{U}}|\underline{k}_{\text{U}}}))$ for all $k\in(\underline{k}_{\text{U}},\bar{k}_{\text{U}}]$ and $l\in\{2,\cdots,n-1\}$. Since $|\bm{\eta}^1_{k|k}|<c_0\epsilon$ for all $k\in(\underline{k}_{\text{U}},\bar{k}_{\text{U}}]$, it holds that $\bm{\rho}^1_{k|k}>-c_0\epsilon=-c_0\epsilon-(1-1)\mathfrak{c}\left(\Gamma_{\underline{k}_{\text{U}}|\underline{k}_{\text{U}}}\right)\epsilon$.
Hence when $l=1$, $\bm{\rho}^1_{k|k}>-c_0\epsilon-(l-1)\mathfrak{c}\left(\Gamma_{\underline{k}_{\text{U}}|\underline{k}_{\text{U}}}\right)\epsilon$ holds for all $k\in(\underline{k}_{\text{U}},\bar{k}_{\text{U}}]$.

For $l\in\{1,2,\cdots,n-2\}$, suppose $\bm{\rho}^l_{k|k}>-c_0\epsilon-(l-1)\mathfrak{c}\left(\Gamma_{\underline{k}_{\text{U}}|\underline{k}_{\text{U}}}\right)\epsilon$ for all $k\in(\underline{k}_{\text{U}},\bar{k}_{\text{U}}]$. If $\bm{\rho}^{l+1}_{k|k}<-c_0\epsilon-l\mathfrak{c}\left(\Gamma_{\underline{k}_{\text{U}}|\underline{k}_{\text{U}}}\right)\epsilon$, we obtain from \eqref{eq:qflow} that
\begin{equation*}
\begin{array}{l}
\mathfrak{f}\left(\bm{\rho}^{l}_{k|k},\bm{\rho}^{l+1}_{k|k}\right)=v_{\text{m}}\bm{\rho}^{l}_{k|k}>v_{\text{m}}\left(-c_0\epsilon-(l-1)\mathfrak{c}\left(\Gamma_{\underline{k}_{\text{U}}|\underline{k}_{\text{U}}}\right)\epsilon\right),\\
\mathfrak{f}\left(\bm{\rho}^{l+1}_{k|k},\bm{\rho}^{l+2}_{k|k}\right)\leq v_{\text{m}}\bm{\rho}^{l+1}_{k|k}.
\end{array}
\end{equation*}
It follows that the estimate of cell $l+1$ satisfies
\begin{equation}\label{eq:induction}
\arraycolsep=1.5pt\def\arraystretch{1.5}
\begin{array}{rl}
\bm{\rho}^{l+1}_{k+1|k+1}&=\bm{\rho}^{1+1}_{k|k}+\frac{\Delta t}{\Delta x}\left(\mathfrak{f}\left(\bm{\rho}^{l}_{k|k},\bm{\rho}^{l+1}_{k|k}\right)-\mathfrak{f}\left(\bm{\rho}^{l+1}_{k|k},\bm{\rho}^{l+2}_{k|k}\right)\right)-K_{k+1}(l+1,1)\bm{\eta}^1_{k+1|k}-K_{k+1}(l+1,2)\bm{\eta}^2_{k+1|k}\\
&>\bm{\rho}^{l+1}_{k|k}+\frac{v_{\text{m}} \Delta t}{\Delta x}\left|\bm{\rho}^{l+1}_{k|k}+c_0\epsilon+(l-1)\mathfrak{c}\left(\Gamma_{\underline{k}_{\text{U}}|\underline{k}_{\text{U}}}\right)\epsilon\right|-\mathfrak{k}\left(\Gamma_{\underline{k}_{\text{U}}|\underline{k}_{\text{U}}}\right)c_0\epsilon\\
&=\bm{\rho}^{l+1}_{k|k}+\frac{v_{\text{m}} \Delta t}{\Delta x}\left|\bm{\rho}^{l+1}_{k|k}+c_0\epsilon+l\mathfrak{c}\left(\Gamma_{\underline{k}_{\text{U}}|\underline{k}_{\text{U}}}\right)\epsilon\right|+\frac{v_{\text{m}} \Delta t}{\Delta x}\mathfrak{c}\left(\Gamma_{\underline{k}_{\text{U}}|\underline{k}_{\text{U}}}\right)\epsilon-\mathfrak{k}\left(\Gamma_{\underline{k}_{\text{U}}|\underline{k}_{\text{U}}}\right)c_0\epsilon\\
&\ge \bm{\rho}^{l+1}_{k|k}+\frac{v_{\text{m}} \Delta t}{\Delta x}\left|\bm{\rho}^{l+1}_{k|k}+c_0\epsilon+l \mathfrak{c}\left(\Gamma_{\underline{k}_{\text{U}}|\underline{k}_{\text{U}}}\right)\epsilon\right|,
\end{array}
\end{equation}
where the first inequality is due to $\|K_k\|_{\infty}\le \mathfrak{k}(\Gamma_{\underline{k}_{\text{U}}|\underline{k}_{\text{U}}})$ given in Lemma \ref{lem:Lyapunov} and the fact that
$\|\check{\bm{\eta}}^{(1)}_{k+1|k}\|=\|\check{A}^{(1)}\check{\bm{\eta}}^{(1)}_{k|k}\|=\|\check{\bm{\eta}}^{(1)}_{k|k}\|< c_0\epsilon$ for all $k\in(\underline{k}_{\text{U}},\bar{k}_{\text{U}}]$, and the last inequality is obtained by $\frac{v_{\text{m}} \Delta t}{\Delta x}\mathfrak{c}(\Gamma_{\underline{k}_{\text{U}}|\underline{k}_{\text{U}}})\epsilon-\mathfrak{k}(\Gamma_{\underline{k}_{\text{U}}|\underline{k}_{\text{U}}})c_0\epsilon=\frac{v_{\text{m}}}{\min\left\{v_{\text{m}}, w\right\}}\mathfrak{k}(\Gamma_{\underline{k}_{\text{U}}|\underline{k}_{\text{U}}})c_0\epsilon-\mathfrak{k}(\Gamma_{\underline{k}_{\text{U}}|\underline{k}_{\text{U}}})c_0\epsilon\ge 0$. Also since $\bm{\rho}^{l+1}_{\underline{k}_{\text{U}}|\underline{k}_{\text{U}}}>-\epsilon\ge-c_0\epsilon>-c_0\epsilon-l \mathfrak{c}(\Gamma_{\underline{k}_{\text{U}}|\underline{k}_{\text{U}}})\epsilon$, it is concluded that $\bm{\rho}^{l+1}_{k|k}>-c_0\epsilon-l \mathfrak{c}(\Gamma_{\underline{k}_{\text{U}}|\underline{k}_{\text{U}}})\epsilon$ for all $k\in(\underline{k}_{\text{U}},\bar{k}_{\text{U}}]$. Continuing the induction along the cells, we obtain $\bm{\rho}^{n-1}_{k|k}>-c_0\epsilon-(n-2) \mathfrak{c}(\Gamma_{\underline{k}_{\text{U}}|\underline{k}_{\text{U}}})\epsilon \le $ for all $k\in(\underline{k}_{\text{U}},\bar{k}_{\text{U}}]$.

We can use a similar induction to show $\bm{\rho}^l_{k|k}<\varrho_{\text{m}}+c_0\epsilon+(n-l)\mathfrak{c}(\Gamma_{\underline{k}_{\text{U}}|\underline{k}_{\text{U}}})\epsilon\le \varrho_{\text{m}}+\epsilon(c_0+(n-2)\mathfrak{c}(\Gamma_{\underline{k}_{\text{U}}|\underline{k}_{\text{U}}}))$ for all $k\in(\underline{k}_{\text{U}},\bar{k}_{\text{U}}]$ and $l\in\{2,\cdots,n-1\}$.

Since $|\bm{\eta}^n_{k|k}|<c_0\epsilon$ for all $k\in(\underline{k}_{\text{U}},\bar{k}_{\text{U}}]$, we have $\bm{\rho}^n_{k|k}<\varrho_{\text{m}}+c_0\epsilon=\varrho_{\text{m}}+c_0\epsilon+(n-n)\mathfrak{c}\left(\Gamma_{\underline{k}_{\text{U}}|\underline{k}_{\text{U}}}\right)\epsilon$.
Hence when $l=n$, $\bm{\rho}^l_{k|k}<\varrho_{\text{m}}+c_0\epsilon+(n-l)\mathfrak{c}\left(\Gamma_{\underline{k}_{\text{U}}|\underline{k}_{\text{U}}}\right)\epsilon$ holds for all $k\in(\underline{k}_{\text{U}},\bar{k}_{\text{U}}]$.

For $l\in\{n-1,n-2,\cdots,2\}$, suppose $\bm{\rho}^l_{k|k}<\varrho_{\text{m}}+c_0\epsilon+(n-l)\mathfrak{c}\left(\Gamma_{\underline{k}_{\text{U}}|\underline{k}_{\text{U}}}\right)\epsilon$ for all $k\in(\underline{k}_{\text{U}},\bar{k}_{\text{U}}]$. If $\bm{\rho}^{l-1}_{k|k}>\varrho_{\text{m}}+c_0\epsilon+(n-l+1)\mathfrak{c}\left(\Gamma_{\underline{k}_{\text{U}}|\underline{k}_{\text{U}}}\right)\epsilon$, following the similar argument as in \eqref{eq:induction} yields
\begin{align*}
\bm{\rho}^{l-1}_{k+1|k+1}&<\bm{\rho}^{l-1}_{k|k}-\frac{w \Delta t}{\Delta x}\left|\bm{\rho}^{l-1}_{k|k}-\varrho_{\text{m}}-c_0\epsilon-(n-l)\mathfrak{c}\left(\Gamma_{\underline{k}_{\text{U}}|\underline{k}_{\text{U}}}\right)\epsilon\right|+\mathfrak{k}\left(\Gamma_{\underline{k}_{\text{U}}|\underline{k}_{\text{U}}}\right)c_0\epsilon\\
&=\bm{\rho}^{l-1}_{k|k}-\frac{w \Delta t}{\Delta x}\left|\bm{\rho}^{l-1}_{k|k}-\varrho_{\text{m}}-c_0\epsilon-(n-l+1)\mathfrak{c}\left(\Gamma_{\underline{k}_{\text{U}}|\underline{k}_{\text{U}}}\right)\epsilon\right|-\frac{w \Delta t}{\Delta x}\mathfrak{c}\left(\Gamma_{\underline{k}_{\text{U}}|\underline{k}_{\text{U}}}\right)\epsilon+\mathfrak{k}\left(\Gamma_{\underline{k}_{\text{U}}|\underline{k}_{\text{U}}}\right)c_0\epsilon\\
&\le \bm{\rho}^{l-1}_{k|k}-\frac{w \Delta t}{\Delta x}\left|\bm{\rho}^{l-1}_{k|k}-\varrho_{\text{m}}-c_0\epsilon-(n-l+1)\mathfrak{c}\left(\Gamma_{\underline{k}_{\text{U}}|\underline{k}_{\text{U}}}\right)\epsilon\right|.
\end{align*}
Also since $\bm{\rho}^{l-1}_{\underline{k}_{\text{U}}|\underline{k}_{\text{U}}}<\varrho_{\text{m}}+\epsilon\le \varrho_{\text{m}}+c_0\epsilon<\varrho_{\text{m}}+c_0\epsilon+(n-l+1) \mathfrak{c}\left(\Gamma_{\underline{k}_{\text{U}}|\underline{k}_{\text{U}}}\right)\epsilon$, it is concluded that $\bm{\rho}^{l-1}_{k|k}<\varrho_{\text{m}}+c_0\epsilon+(n-l+1) \mathfrak{c}\left(\Gamma_{\underline{k}_{\text{U}}|\underline{k}_{\text{U}}}\right)\epsilon$ for all $k\in(\underline{k}_{\text{U}},\bar{k}_{\text{U}}]$. Continuing the induction, we obtain $\bm{\rho}^{2}_{k|k}<\varrho_{\text{m}}+c_0\epsilon+(n-2) \mathfrak{c}\left(\Gamma_{\underline{k}_{\text{U}}|\underline{k}_{\text{U}}}\right)\epsilon$ for all $k\in(\underline{k}_{\text{U}},\bar{k}_{\text{U}}]$.

\noindent \textbf{Step 3}: Combining Steps 1 and 2, we obtain $\bm{\rho}^{l}_{k|k}\in(-\epsilon(c_0+(n-2)\mathfrak{c}(\Gamma_{\underline{k}_{\text{U}}|\underline{k}_{\text{U}}})),\varrho_{\text{m}}+\epsilon(c_0+(n-2)\mathfrak{c}(\Gamma_{\underline{k}_{\text{U}}|\underline{k}_{\text{U}}})))$ for all $l\in\{1,\cdots,n\}$ and $k\in(\underline{k}_{\text{U}},\bar{k}_{\text{U}}]$. Consequently, $\|\bm{\eta}_{k|k}\|<\sqrt{n}(\varrho_{\text{m}}+\epsilon(c_0+(n-2)\mathfrak{c}(\Gamma_{\underline{k}_{\text{U}}|\underline{k}_{\text{U}}})))=\mathfrak{h}(\epsilon, \Gamma_{\underline{k}_{\text{U}}|\underline{k}_{\text{U}}})$ for all $k\in(\underline{k}_{\text{U}},\bar{k}_{\text{U}}]$.
\end{proof}
\subsubsection{Convergence rate of the mean error in observable modes}
Let $(\underline{k}_{\text{O}}, \bar{k}_{\text{O}}]$ be the time interval inside which a section switches among observable modes, i.e., the mode index $\sigma(k)\in\{\text{FF, CC, CF}\}$ for $k\in (\underline{k}_{\text{O}}, \bar{k}_{\text{O}}]$, and $\sigma(k)\in\{\text{FC1, FC2}\}$ for $k=\underline{k}_{\text{O}}$ and $k=\bar{k}_{\text{O}}+1$. Due to the boundedness of the consensus term described in \eqref{eq:bound_consensus}, the mean error satisfies
\begin{equation}\label{eq:error_dynamics_o_c}
\arraycolsep=1.5pt\def\arraystretch{1.5}
\begin{array}{l}
\left\|\bm{\eta}_{k|k}\right\|\le \left\|\prod_{\kappa=k-1}^{\underline{k}_{\text{O}}}F_{\kappa+1}A_{\kappa}\right\|\left\|\bm{\eta}_{\underline{k}_{\text{O}}|\underline{k}_{\text{O}}}\right\|+\hat{c}\left(1+\sum_{\iota=1}^{k-\underline{k}_{\text{O}}-1}\left\|\prod_{\kappa=k-1}^{\underline{k}_{\text{O}}+\iota}F_{\kappa+1}A_{\kappa}\right\|\right),
\end{array}
\end{equation}
for $k\in (\underline{k}_{\text{O}}, \bar{k}_{\text{O}}]$, where $F_{k}=I-K_{k}H_{k}$. According to \eqref{eq:error_dynamics_o_c}, we need to analyse the magnitude of $\left\|\prod_{\kappa=k-1}^{\underline{k}_{\text{O}}}F_{\kappa+1}A_{\kappa}\right\|$ in order to study the convergence rate of the mean estimation error, which is detailed in the next lemma.

\begin{lem}[Lemma 3 in \cite{SunWorkCONES2016}]\label{lem:Lyapunov}
Consider a freeway section that switches among the observable modes while $k\in (\underline{k}_{\text{O}},\bar{k}_{\text{O}}]$, where $0\le \underline{k}_{\text{O}}<\bar{k}_{\text{O}}\le+\infty$. If the error covariance satisfies $\bm{0}<d_1I\le\Gamma^{-1}_{k|k}\le d_2 I$ for all $\underline{k}_{\text{O}}<k\le \bar{k}_{\text{O}}$, where $d_1, d_2 \in \mathbb{R}^{+}$, then
\begin{equation}\label{eq:norm_exp}
\begin{array}{l}
\left\|\prod_{\kappa=k-1}^{\underline{k}_{\text{O}}}F_{\kappa+1}A_{\kappa}\right\|\le \hat{a}\hat{q}^{k-\underline{k}_{\text{O}}},\quad\text{for $k \in (\underline{k}_{\text{O}}, \bar{k}_{\text{O}}$}],
\end{array}
\end{equation}
where $\hat{a}=\left(d_2d_1^{-1}\right)^{\frac{1}{2}}\ge 1$, $0<\hat{q}=\left(1-\mathfrak{d}\left(d_1,d_2\right)d_2^{-1}\right)^{\frac{1}{2}}<1$, and $\mathfrak{d}\left(\cdot,\cdot\right)$ is a function of $d_1, d_2$ defined by
\begin{equation*}
\mathfrak{d}\left(d_1,d_2\right)=\left(d_1^{-1}+q_1^{-1}d_1^{-2}\max_{M\in\mathcal{A}_{\text{O}}}\sigma^2_{\max}\left(M\right)\right)^{-1},
\end{equation*}
where $\mathcal{A}_{\text{O}}=\left\{A_{\text{FF}},A_{\text{CC}},A_{\text{CF}}^s\left|s\in\left\{1,2,\cdots,n-1\right\}\right.\right\}$ and $\sigma_{\max}(M)$ is the maximum singular value of matrix $M$.
\end{lem}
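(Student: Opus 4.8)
The plan is to control the homogeneous transition matrix $\Phi_k:=\prod_{\kappa=k-1}^{\underline{k}_{\text{O}}}F_{\kappa+1}A_\kappa$ through the decay of the quadratic form $V_k=\bm{\eta}_{k|k}^T\Gamma_{k|k}^{-1}\bm{\eta}_{k|k}$ evaluated along the consensus-free mean error recursion $\bm{\eta}_{k|k}=F_kA_{k-1}\bm{\eta}_{k-1|k-1}$, for which $\bm{\eta}_{k|k}=\Phi_k\bm{\eta}_{\underline{k}_{\text{O}}|\underline{k}_{\text{O}}}$. The hypothesis $d_1 I\le\Gamma_{k|k}^{-1}\le d_2 I$ gives $d_1\|\bm{\eta}_{k|k}\|^2\le V_k\le d_2\|\bm{\eta}_{k|k}\|^2$, so once the geometric contraction $V_k\le\hat{q}^2V_{k-1}$ is established one obtains $\|\Phi_k\bm{\eta}_{\underline{k}_{\text{O}}|\underline{k}_{\text{O}}}\|^2\le V_k/d_1\le\hat{q}^{2(k-\underline{k}_{\text{O}})}V_{\underline{k}_{\text{O}}}/d_1\le(d_2/d_1)\hat{q}^{2(k-\underline{k}_{\text{O}})}\|\bm{\eta}_{\underline{k}_{\text{O}}|\underline{k}_{\text{O}}}\|^2$, which is precisely $\|\Phi_k\|\le\hat{a}\hat{q}^{k-\underline{k}_{\text{O}}}$ with $\hat{a}=(d_2d_1^{-1})^{1/2}$. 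Thus the prefactor $\hat{a}$ is merely the condition-number cost of passing between $V_k$ and the Euclidean norm at the two endpoints, and the whole argument rests on the per-step contraction.

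For the contraction I would reuse Step~1 of Proposition~\ref{Prop:ObservableGUAS}, which supplies the exact one-step identity $V_k-V_{k-1}=-\bm{\eta}_{k|k-1}^T\Lambda_{k-1}\bm{\eta}_{k|k-1}$ with $\Lambda_{k-1}=P_{k-1}^{-1}-(P_{k-1}+W_{k-1})^{-1}$, where $P_{k-1}=A_{k-1}\Gamma_{k-1|k-1}A_{k-1}^T$ and $W_{k-1}=Q_{k-1}+\Gamma_{k|k-1}S_k\Gamma_{k|k-1}>0$. Since $A_{k-1}$ is invertible and $\bm{\eta}_{k|k-1}=A_{k-1}\bm{\eta}_{k-1|k-1}$, one checks $\bm{\eta}_{k|k-1}^TP_{k-1}^{-1}\bm{\eta}_{k|k-1}=V_{k-1}$, so the task reduces to the operator inequality $\Lambda_{k-1}\ge c\,P_{k-1}^{-1}$. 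Setting $Z=P_{k-1}^{-1/2}W_{k-1}P_{k-1}^{-1/2}$ gives $P_{k-1}^{1/2}\Lambda_{k-1}P_{k-1}^{1/2}=Z(I+Z)^{-1}$, and because $t\mapsto t/(1+t)$ is operator monotone it suffices to lower bound $\lambda_{\min}(Z)\ge\lambda_{\min}(W_{k-1})/\lambda_{\max}(P_{k-1})$. Here $W_{k-1}\ge Q_{k-1}>q_1I$ (the term $\Gamma_{k|k-1}S_k\Gamma_{k|k-1}$ is positive semidefinite), while $\lambda_{\max}(P_{k-1})\le\|A_{k-1}\|^2\lambda_{\max}(\Gamma_{k-1|k-1})\le\bar{\sigma}^2d_1^{-1}$ with $\bar{\sigma}=\max_{M\in\mathcal{A}_{\text{O}}}\sigma_{\max}(M)$, using $A_{k-1}\in\mathcal{A}_{\text{O}}$ and $\Gamma_{k-1|k-1}\le d_1^{-1}I$. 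With $z_0=q_1d_1\bar{\sigma}^{-2}$ this gives $c=z_0/(1+z_0)$, which simplifies to exactly $\mathfrak{d}(d_1,d_2)/d_1$; since $d_1\le d_2$ one has $c\ge\mathfrak{d}(d_1,d_2)/d_2$, and therefore $V_k\le(1-c)V_{k-1}\le(1-\mathfrak{d}(d_1,d_2)d_2^{-1})V_{k-1}=\hat{q}^2V_{k-1}$, matching the stated rate.

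The step I expect to require the most care is the left endpoint, since the chain in the first paragraph uses the contraction down to $k=\underline{k}_{\text{O}}+1$ and the bound $V_{\underline{k}_{\text{O}}}\le d_2\|\bm{\eta}_{\underline{k}_{\text{O}}|\underline{k}_{\text{O}}}\|^2$, both of which invoke $d_1 I\le\Gamma_{\underline{k}_{\text{O}}|\underline{k}_{\text{O}}}^{-1}\le d_2 I$, whereas the hypothesis only posits the covariance bounds for $k>\underline{k}_{\text{O}}$ (and at $k=\underline{k}_{\text{O}}$ the section is still in an unobservable mode, so its inverse covariance need not exceed $d_1 I$). I would isolate the first factor $F_{\underline{k}_{\text{O}}+1}A_{\underline{k}_{\text{O}}}$ and bound it without the covariance hypotheses, using the monotonicity relations $\Gamma_{k|k}\le\Gamma_{k|k-1}$ and $(A\Gamma A^T+Q)^{-1}\le(A\Gamma A^T)^{-1}$, which yield the clean estimate $(F_{\underline{k}_{\text{O}}+1}A_{\underline{k}_{\text{O}}})^T\Gamma_{\underline{k}_{\text{O}}+1|\underline{k}_{\text{O}}+1}^{-1}(F_{\underline{k}_{\text{O}}+1}A_{\underline{k}_{\text{O}}})\le\Gamma_{\underline{k}_{\text{O}}|\underline{k}_{\text{O}}}^{-1}$; only the largest eigenvalue of $\Gamma_{\underline{k}_{\text{O}}|\underline{k}_{\text{O}}}^{-1}$ then enters, corresponding to the well-informed boundary directions that stay bounded through $Q_k^{-1}\le q_1^{-1}I$ and the boundary measurements even while the section is unobservable. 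Checking that this endpoint contribution is absorbed into $\hat{a}\hat{q}^{k-\underline{k}_{\text{O}}}$ is the one place where the bookkeeping of the constants must be watched; the remaining interior steps are the routine telescoping of the per-step contraction derived above.
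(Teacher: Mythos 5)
Your core argument coincides with the paper's own proof: the paper also introduces the homogeneous system $\zeta_k=F_kA_{k-1}\zeta_{k-1}$, takes the same Lyapunov function $V_{\zeta,k}=\zeta_k^T\Gamma_{k|k}^{-1}\zeta_k$, establishes a per-step decay with the same constant $\mathfrak{d}(d_1,d_2)$, and converts to the Euclidean norm at the two endpoints to produce $\hat{a}=(d_2d_1^{-1})^{1/2}$. Your route to the contraction --- writing $P_{k-1}^{1/2}\Lambda_{k-1}P_{k-1}^{1/2}=Z(I+Z)^{-1}$ with $Z=P_{k-1}^{-1/2}W_{k-1}P_{k-1}^{-1/2}$ and invoking monotonicity of $t\mapsto t/(1+t)$ --- is an equivalent repackaging of the paper's norm estimate $\left\|\Gamma_{k|k}+\Gamma_{k|k}A_k^TW_k^{-1}A_k\Gamma_{k|k}\right\|\le d_1^{-1}+q_1^{-1}d_1^{-2}\max_{M\in\mathcal{A}_{\text{O}}}\sigma^2_{\max}(M)$; your $c=z_0/(1+z_0)$ is exactly $\mathfrak{d}(d_1,d_2)/d_1$, and the weakening $1-c\le 1-\mathfrak{d}(d_1,d_2)d_2^{-1}=\hat{q}^2$ is the same step the paper takes. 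Up to this point the proposal is correct and essentially identical to the paper.

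The one place you deviate is the left endpoint, and there your repair does not deliver the lemma as stated. You are right that the argument consumes $d_1I\le\Gamma_{\underline{k}_{\text{O}}|\underline{k}_{\text{O}}}^{-1}\le d_2I$ while the hypothesis is written only for $k>\underline{k}_{\text{O}}$; the paper's proof silently applies both bounds at $k=\underline{k}_{\text{O}}$ (in the first one-step decay and in $V_{\zeta,\underline{k}_{\text{O}}}\le d_2\|\zeta_{\underline{k}_{\text{O}}}\|^2$). However, peeling off the first factor via $(F_{\underline{k}_{\text{O}}+1}A_{\underline{k}_{\text{O}}})^T\Gamma_{\underline{k}_{\text{O}}+1|\underline{k}_{\text{O}}+1}^{-1}(F_{\underline{k}_{\text{O}}+1}A_{\underline{k}_{\text{O}}})\le\Gamma_{\underline{k}_{\text{O}}|\underline{k}_{\text{O}}}^{-1}$ together with $\Gamma_{\underline{k}_{\text{O}}+1|\underline{k}_{\text{O}}+1}^{-1}\ge d_1I$ only yields $\|F_{\underline{k}_{\text{O}}+1}A_{\underline{k}_{\text{O}}}\|\le\left(\lambda_{\max}\left(\Gamma_{\underline{k}_{\text{O}}|\underline{k}_{\text{O}}}^{-1}\right)/d_1\right)^{1/2}$, so your final bound is $\hat{a}\hat{q}^{k-\underline{k}_{\text{O}}-1}\left(\lambda_{\max}\left(\Gamma_{\underline{k}_{\text{O}}|\underline{k}_{\text{O}}}^{-1}\right)/d_1\right)^{1/2}$ rather than $\hat{a}\hat{q}^{k-\underline{k}_{\text{O}}}$. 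The quantity $\lambda_{\max}(\Gamma_{\underline{k}_{\text{O}}|\underline{k}_{\text{O}}}^{-1})$ is controlled only in terms of $q_1^{-1}+r_1^{-1}$ (the well-measured boundary directions) and is in general far larger than $d_1\hat{q}^2$, so the extra multiplicative factor $\left(\lambda_{\max}\left(\Gamma_{\underline{k}_{\text{O}}|\underline{k}_{\text{O}}}^{-1}\right)/d_1\right)^{1/2}\hat{q}^{-1}$ exceeds one and \eqref{eq:norm_exp} with the stated constants does not follow; this matters because $\mathfrak{a}(\cdot)$ and $\mathfrak{q}(\cdot)$ in Proposition \ref{prop_o_bound} and Proposition \ref{prop:switch} use these exact constants. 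The clean resolution is the one the paper implicitly adopts, and it is free in the only place the lemma is invoked: in Proposition \ref{prop_o_bound} one has $d_1=\mathfrak{c}_1\left(\Gamma_{\underline{k}_{\text{O}}|\underline{k}_{\text{O}}}\right)\le\left\|\Gamma_{\underline{k}_{\text{O}}|\underline{k}_{\text{O}}}\right\|^{-1}$ and $d_2=\mathfrak{c}_2\left(\Gamma_{\underline{k}_{\text{O}}|\underline{k}_{\text{O}}}\right)\ge\lambda_{\min}^{-1}\left(\Gamma_{\underline{k}_{\text{O}}|\underline{k}_{\text{O}}}\right)$ by the definitions in Lemma \ref{prop:GammaBound}, so $d_1I\le\Gamma_{\underline{k}_{\text{O}}|\underline{k}_{\text{O}}}^{-1}\le d_2I$ holds automatically and the hypothesis should simply be read as including $k=\underline{k}_{\text{O}}$.
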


\begin{proof}
The proof is reported in Appendix \ref{ap:Proof_Lyap}.
\end{proof}

\subsubsection{Residence time in observable modes}

When a freeway section switches from an unobservable mode at time $\underline{k}_{\text{O}}$ to an observable mode at $\underline{k}_{\text{O}}+1$, the next proposition derives the residence time the section must remain in the set of observable modes in order to reduce the mean estimation error below a given threshold. The residence time is a function of the mean error and error covariance of the section at time $\underline{k}_{\text{O}}$, and also depends on the magnitude of the mean error to be satisfied.

\begin{prop}[Proposition 4 in \cite{SunWorkCONES2016}]\label{prop_o_bound}
Consider a freeway section which switches among the observable modes while $k\in (\underline{k}_{\text{O}},\bar{k}_{\text{O}}]$, where $0\le \underline{k}_{\text{O}}<\bar{k}_{\text{O}}\le+\infty$. Define
\begin{equation}\label{eq:tilde_a_q1}
\begin{array}{l}
\mathfrak{a}\left(\Gamma_{\underline{k}_{\text{O}}|\underline{k}_{\text{O}}}\right)=\left(\mathfrak{c}_2\left(\Gamma_{\underline{k}_{\text{O}}|\underline{k}_{\text{O}}}\right)\left(\mathfrak{c}_1\left(\Gamma_{\underline{k}_{\text{O}}|\underline{k}_{\text{O}}}\right)\right)^{-1}\right)^{\frac{1}{2}},\\ \mathfrak{q}\left(\Gamma_{\underline{k}_{\text{O}}|\underline{k}_{\text{O}}}\right)=\left(1-\mathfrak{c}_3\left(\Gamma_{\underline{k}_{\text{O}}|\underline{k}_{\text{O}}}\right)\left(\mathfrak{c}_2\left(\Gamma_{\underline{k}_{\text{O}}|\underline{k}_{\text{O}}}\right)\right)^{-1}\right)^{\frac{1}{2}},
\end{array}
\end{equation}
where $\mathfrak{c}_1\left(\cdot\right)$, $\mathfrak{c}_2\left(\cdot \right)$ are the bounds from \eqref{eq:def_frak_c}, and $\mathfrak{c}_3\left(\Gamma_{\underline{k}_{\text{O}}|\underline{k}_{\text{O}}}\right)$ is given by $\mathfrak{c}_3\left(\Gamma_{\underline{k}_{\text{O}}|\underline{k}_{\text{O}}}\right)=\mathfrak{d}\left(\mathfrak{c}_1\left(\Gamma_{\underline{k}_{\text{O}}|\underline{k}_{\text{O}}}\right),\mathfrak{c}_2\left(\Gamma_{\underline{k}_{\text{O}}|\underline{k}_{\text{O}}}\right)\right)$ with $\mathfrak{d}\left(\cdot,\cdot\right)$ defined in Lemma \ref{lem:Lyapunov}.

For all $\epsilon>0$, there exists $\mathfrak{t}\left(\epsilon,\left\|\bm{\eta}_{\underline{k}_{\text{O}}|\underline{k}_{\text{O}}}\right\|,\Gamma_{\underline{k}_{\text{O}}|\underline{k}_{\text{O}}}\right)$ such that if $\bar{k}_{\text{O}}-\underline{k}_{\text{O}}>\mathfrak{t}\left(\epsilon,\left\|\bm{\eta}_{\underline{k}_{\text{O}}|\underline{k}_{\text{O}}}\right\|,\Gamma_{\underline{k}_{\text{O}}|\underline{k}_{\text{O}}}\right)$, the mean error at time $\bar{k}_{\text{O}}$ satisfies $\|\bm{\eta}_{\bar{k}_{\text{O}}|\bar{k}_{\text{O}}}\|<\epsilon+\hat{c}+\frac{\hat{c}\mathfrak{a}\left(\Gamma_{\underline{k}_{\text{O}}|\underline{k}_{\text{O}}}\right)\mathfrak{q}\left(\Gamma_{\underline{k}_{\text{O}}|\underline{k}_{\text{O}}}\right)}{1-\mathfrak{q}\left(\Gamma_{\underline{k}_{\text{O}}|\underline{k}_{\text{O}}}\right)}$. Explicitly,% $\mathfrak{t}\left(\epsilon,\left\|\bm{\eta}_{\underline{k}_{\text{O}}|\underline{k}_{\text{O}}}\right\|,\Gamma_{\underline{k}_{\text{O}}|\underline{k}_{\text{O}}}\right)$ is defined as
\begin{equation}\label{eq:tilde_T}
\begin{split}
&\mathfrak{t}\left(\epsilon,\left\|\bm{\eta}_{\underline{k}_{\text{O}}|\underline{k}_{\text{O}}}\right\|,\Gamma_{\underline{k}_{\text{O}}|\underline{k}_{\text{O}}}\right)= \left\{
\arraycolsep=1.5pt\def\arraystretch{1.5}
\begin{array}{l}
0,\quad \textrm{if $\mathfrak{a}\left(\Gamma_{\underline{k}_{\text{O}}|\underline{k}_{\text{O}}}\right)\mathfrak{q}\left(\Gamma_{\underline{k}_{\text{O}}|\underline{k}_{\text{O}}}\right) \left\|\bm{\eta}_{\underline{k}_{\text{O}}|\underline{k}_{\text{O}}}\right\|\le \frac{\hat{c}\mathfrak{a}\left(\Gamma_{\underline{k}_{\text{O}}|\underline{k}_{\text{O}}}\right)\mathfrak{q}\left(\Gamma_{\underline{k}_{\text{O}}|\underline{k}_{\text{O}}}\right)}{1-\mathfrak{q}\left(\Gamma_{\underline{k}_{\text{O}}|\underline{k}_{\text{O}}}\right)}$,}\\
\log_{\mathfrak{q}\left(\Gamma_{\underline{k}_{\text{O}}|\underline{k}_{\text{O}}}\right)}\left(\epsilon\left(\mathfrak{a}\left(\Gamma_{\underline{k}_{\text{O}}|\underline{k}_{\text{O}}}\right)\left\|\bm{\eta}_{\underline{k}_{\text{O}}|\underline{k}_{\text{O}}}\right\|-\frac{\hat{c}\mathfrak{a}\left(\Gamma_{\underline{k}_{\text{O}}|\underline{k}_{\text{O}}}\right)}{1-\mathfrak{q}\left(\Gamma_{\underline{k}_{\text{O}}|\underline{k}_{\text{O}}}\right)}\right)^{-1}\right),\quad \text{otherwise.}
\end{array} \right.
\end{split}
\end{equation}
Furthermore, for all $k\in(\underline{k}_{\text{O}},\bar{k}_{\text{O}}]$,
\begin{equation*}
\begin{split}
\left\|\bm{\eta}_{k|k}\right\|\le& \max\left\{\hat{c}+\mathfrak{a}\left(\Gamma_{\underline{k}_{\text{O}}|\underline{k}_{\text{O}}}\right)\mathfrak{q}\left(\Gamma_{\underline{k}_{\text{O}}|\underline{k}_{\text{O}}}\right) \left\|\bm{\eta}_{\underline{k}_{\text{O}}|\underline{k}_{\text{O}}}\right\|,\quad\hat{c}+\hat{c}\mathfrak{a}\left(\Gamma_{\underline{k}_{\text{O}}|\underline{k}_{\text{O}}}\right)\mathfrak{q}\left(\Gamma_{\underline{k}_{\text{O}}|\underline{k}_{\text{O}}}\right)\left(1-\mathfrak{q}\left(\Gamma_{\underline{k}_{\text{O}}|\underline{k}_{\text{O}}}\right)\right)^{-1} \right\}.
\end{split}
\end{equation*}

\end{prop}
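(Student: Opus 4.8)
The plan is to stack the two preceding results---the uniform covariance bounds of Lemma~\ref{prop:GammaBound} and the contraction estimate of Lemma~\ref{lem:Lyapunov}---and feed both into the unrolled error inequality~\eqref{eq:error_dynamics_o_c}. First I would apply Lemma~\ref{prop:GammaBound}, treating $\underline{k}_{\text{O}}$ as the time origin and $\Gamma_{\underline{k}_{\text{O}}|\underline{k}_{\text{O}}}$ as the initial covariance, to obtain $\mathfrak{c}_1(\Gamma_{\underline{k}_{\text{O}}|\underline{k}_{\text{O}}})I\le\Gamma^{-1}_{k|k}\le\mathfrak{c}_2(\Gamma_{\underline{k}_{\text{O}}|\underline{k}_{\text{O}}})I$ for every $k\in(\underline{k}_{\text{O}},\bar{k}_{\text{O}}]$. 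Setting $d_1=\mathfrak{c}_1(\Gamma_{\underline{k}_{\text{O}}|\underline{k}_{\text{O}}})$ and $d_2=\mathfrak{c}_2(\Gamma_{\underline{k}_{\text{O}}|\underline{k}_{\text{O}}})$, Lemma~\ref{lem:Lyapunov} then gives $\|\prod_{\kappa=k-1}^{\underline{k}_{\text{O}}}F_{\kappa+1}A_{\kappa}\|\le\mathfrak{a}\,\mathfrak{q}^{\,k-\underline{k}_{\text{O}}}$, where $\mathfrak{a}=\mathfrak{a}(\Gamma_{\underline{k}_{\text{O}}|\underline{k}_{\text{O}}})$ and $\mathfrak{q}=\mathfrak{q}(\Gamma_{\underline{k}_{\text{O}}|\underline{k}_{\text{O}}})$ are exactly the quantities in~\eqref{eq:tilde_a_q1}, since $\hat{a}=(d_2d_1^{-1})^{1/2}$ and $\hat{q}=(1-\mathfrak{d}(d_1,d_2)d_2^{-1})^{1/2}$ reduce to $\mathfrak{a},\mathfrak{q}$ when $\mathfrak{d}(\mathfrak{c}_1,\mathfrak{c}_2)=\mathfrak{c}_3$. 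Crucially, the same estimate applies to each partial product $\prod_{\kappa=k-1}^{\underline{k}_{\text{O}}+\iota}F_{\kappa+1}A_{\kappa}$ in~\eqref{eq:error_dynamics_o_c}: because the covariance bounds $d_1,d_2$ hold across the whole observable window, Lemma~\ref{lem:Lyapunov} may be reinvoked with the shifted lower index $\underline{k}_{\text{O}}+\iota$, yielding $\|\prod_{\kappa=k-1}^{\underline{k}_{\text{O}}+\iota}F_{\kappa+1}A_{\kappa}\|\le\mathfrak{a}\,\mathfrak{q}^{\,k-\underline{k}_{\text{O}}-\iota}$ with the \emph{same} constants.

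Next, abbreviating $\delta=\|\bm{\eta}_{\underline{k}_{\text{O}}|\underline{k}_{\text{O}}}\|$ and $m=k-\underline{k}_{\text{O}}$, I would substitute both estimates into~\eqref{eq:error_dynamics_o_c} to get $\|\bm{\eta}_{k|k}\|\le\mathfrak{a}\mathfrak{q}^{\,m}\delta+\hat{c}+\hat{c}\mathfrak{a}\sum_{\iota=1}^{m-1}\mathfrak{q}^{\,m-\iota}$. Summing the geometric series $\sum_{\iota=1}^{m-1}\mathfrak{q}^{\,m-\iota}=\mathfrak{q}(1-\mathfrak{q}^{\,m-1})(1-\mathfrak{q})^{-1}$ and regrouping the $\mathfrak{q}^{\,m}$ terms, the bound collapses to the form $\|\bm{\eta}_{k|k}\|\le\hat{c}+\tfrac{\hat{c}\mathfrak{a}\mathfrak{q}}{1-\mathfrak{q}}+\mathfrak{a}\mathfrak{q}^{\,m}\bigl(\delta-\tfrac{\hat{c}}{1-\mathfrak{q}}\bigr)$, which separates a steady-state offset from a single transient term.

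The final step is a sign analysis of the factor $\delta-\hat{c}/(1-\mathfrak{q})$. Since $0<\mathfrak{q}<1$, the transient term $\mathfrak{a}\mathfrak{q}^{\,m}(\delta-\hat{c}/(1-\mathfrak{q}))$ decreases in $m$ when the factor is positive and is nonpositive throughout otherwise; in either case it is bounded over $m\ge1$ by $\max\{0,\mathfrak{a}\mathfrak{q}(\delta-\hat{c}/(1-\mathfrak{q}))\}$, and adding the offset yields precisely the uniform bound $\max\{\hat{c}+\mathfrak{a}\mathfrak{q}\delta,\ \hat{c}+\hat{c}\mathfrak{a}\mathfrak{q}(1-\mathfrak{q})^{-1}\}$ of the ``Furthermore'' claim. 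For the residence time I would impose $\mathfrak{a}\mathfrak{q}^{\,m}(\delta-\hat{c}/(1-\mathfrak{q}))<\epsilon$ at $k=\bar{k}_{\text{O}}$ with $m=\bar{k}_{\text{O}}-\underline{k}_{\text{O}}$. If $\delta\le\hat{c}/(1-\mathfrak{q})$---equivalently the first-branch condition $\mathfrak{a}\mathfrak{q}\delta\le\hat{c}\mathfrak{a}\mathfrak{q}(1-\mathfrak{q})^{-1}$ in~\eqref{eq:tilde_T}---this holds for every $m\ge0$, so $\mathfrak{t}=0$. Otherwise the factor is positive and I would solve $\mathfrak{q}^{\,m}<\epsilon(\mathfrak{a}\delta-\hat{c}\mathfrak{a}(1-\mathfrak{q})^{-1})^{-1}$ for $m$ by taking $\log_{\mathfrak{q}}$, which reverses the inequality and recovers exactly the second branch of~\eqref{eq:tilde_T}.

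The main obstacle is the justification in the first paragraph that Lemma~\ref{lem:Lyapunov} applies verbatim to the shifted partial products with unchanged constants $\mathfrak{a},\mathfrak{q}$. This hinges on the covariance bounds of Lemma~\ref{prop:GammaBound} being uniform over the entire observable window rather than tied to the single origin $\underline{k}_{\text{O}}$, so that any sub-window $(\underline{k}_{\text{O}}+\iota,\bar{k}_{\text{O}}]$ inherits the identical $d_1,d_2$. Once this uniformity is secured, the remaining work---the geometric-sum bookkeeping and the two-case split above---is routine.
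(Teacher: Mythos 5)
Your proposal is correct and follows essentially the same route as the paper's proof: invoke Lemma~\ref{prop:GammaBound} for uniform covariance bounds over the observable window, feed them into Lemma~\ref{lem:Lyapunov} to bound the full and shifted partial products in~\eqref{eq:error_dynamics_o_c}, sum the geometric series, and split into the two cases of~\eqref{eq:tilde_T}; your regrouped form $\hat{c}+\hat{c}\mathfrak{a}\mathfrak{q}(1-\mathfrak{q})^{-1}+\mathfrak{a}\mathfrak{q}^{m}\bigl(\delta-\hat{c}(1-\mathfrak{q})^{-1}\bigr)$ is algebraically identical to the paper's function $\mathfrak{u}(\Gamma_{\underline{k}_{\text{O}}|\underline{k}_{\text{O}}},k)$, and your sign analysis reproduces its monotonicity argument. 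Your explicit justification that the shifted partial products inherit the same constants $\mathfrak{a},\mathfrak{q}$ (because the covariance bounds hold uniformly on the whole window) is a point the paper leaves implicit, and it is the right one.
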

\begin{proof}

According to Lemma \ref{prop:GammaBound}, when $\underline{k}_{\text{O}}< k \le \bar{k}_{\text{O}}$ the error covariance satisfies $\mathfrak{c}_1(\Gamma_{\underline{k}_{\text{O}}|\underline{k}_{\text{O}}})I\le \Gamma^{-1}_{k|k} \le \mathfrak{c}_2(\Gamma_{\underline{k}_{\text{O}}|\underline{k}_{\text{O}}})I$.
Given Lemma \ref{lem:Lyapunov}, it follows that for $\underline{k}_{\text{O}}< k \le \bar{k}_{\text{O}}$,
\begin{equation*}
\begin{array}{l}
\left\|\prod_{\kappa=k-1}^{\underline{k}_{\text{O}}}F_{\kappa+1}A_{\kappa}\right\|\le \mathfrak{a}\left(\Gamma_{\underline{k}_{\text{O}}|\underline{k}_{\text{O}}}\right)\mathfrak{q}\left(\Gamma_{\underline{k}_{\text{O}}|\underline{k}_{\text{O}}}\right)^{k-\underline{k}_{\text{O}}},
\end{array}
\end{equation*}
where $\mathfrak{a}\left(\Gamma_{\underline{k}_{\text{O}}|\underline{k}_{\text{O}}}\right)\ge 1$ provides an upper bound for the increase of the mean estimation error when the section first switches to an observable mode at time $\underline{k}_{\text{O}}+1$, and $0<\mathfrak{q}\left(\Gamma_{\underline{k}_{\text{O}}|\underline{k}_{\text{O}}}\right)<1$ describes the convergence rate of the mean estimation error in observable modes. Hence when $\underline{k}_{\text{O}}< k \le \bar{k}_{\text{O}}$, the 2-norm of $\bm{\eta}_{k|k}$ satisfies
\begin{equation*}
\arraycolsep=1.5pt\def\arraystretch{1.5}
\begin{array}{rl}
\left\|\bm{\eta}_{k|k}\right\|&\le \left\|\prod_{\kappa=k-1}^{\underline{k}_{\text{O}}}F_{\kappa+1}A_{\kappa}\right\|\left\|\bm{\eta}_{\underline{k}_{\text{O}}|\underline{k}_{\text{O}}}\right\|+\hat{c}\left(1+\sum_{\iota=1}^{k-\underline{k}_{\text{O}}-1}\left\|\prod_{\kappa=k-1}^{\underline{k}_{\text{O}}+\iota}F_{\kappa+1}A_{\kappa}\right\|\right)\\
&\le \hat{c}+\left\|\bm{\eta}_{\underline{k}_{\text{O}}|\underline{k}_{\text{O}}}\right\|\mathfrak{a}\left(\Gamma_{\underline{k}_{\text{O}}|\underline{k}_{\text{O}}}\right)\mathfrak{q}\left(\Gamma_{\underline{k}_{\text{O}}|\underline{k}_{\text{O}}}\right)^{k-\underline{k}_{\text{O}}}+\sum_{\iota=1}^{k-\underline{k}_{\text{O}}-1}\hat{c}\mathfrak{a}\left(\Gamma_{\underline{k}_{\text{O}}|\underline{k}_{\text{O}}}\right)\mathfrak{q}\left(\Gamma_{\underline{k}_{\text{O}}|\underline{k}_{\text{O}}}\right)^{k-\underline{k}_{\text{O}}-\iota}\\
&= \hat{c}+\left\|\bm{\eta}_{\underline{k}_{\text{O}}|\underline{k}_{\text{O}}}\right\|\mathfrak{a}\left(\Gamma_{\underline{k}_{\text{O}}|\underline{k}_{\text{O}}}\right)\mathfrak{q}\left(\Gamma_{\underline{k}_{\text{O}}|\underline{k}_{\text{O}}}\right)^{k-\underline{k}_{\text{O}}}+\frac{\hat{c}\mathfrak{a}\left(\Gamma_{\underline{k}_{\text{O}}|\underline{k}_{\text{O}}}\right)\mathfrak{q}\left(\Gamma_{\underline{k}_{\text{O}}|\underline{k}_{\text{O}}}\right)}{1-\mathfrak{q}\left(\Gamma_{\underline{k}_{\text{O}}|\underline{k}_{\text{O}}}\right)}\left(1-\mathfrak{q}\left(\Gamma_{\underline{k}_{\text{O}}|\underline{k}_{\text{O}}}\right)^{k-\underline{k}_{\text{O}}-1}\right)\\
&\triangleq \mathfrak{u}\left(\Gamma_{\underline{k}_{\text{O}}|\underline{k}_{\text{O}}}, k\right),
\end{array}
\end{equation*}
where for a fixed $\Gamma_{\underline{k}_{\text{O}}|\underline{k}_{\text{O}}}$, the function $\mathfrak{u}\left(\Gamma_{\underline{k}_{\text{O}}|\underline{k}_{\text{O}}}, k\right)$ is either non-increasing or non-decreasing with respect to $k$.
%\begin{equation}
%\begin{split}
%\left\|\bm{\eta}_{k|k}\right\|&\le \left\|\prod_{r=k-k_1-n+1}^{0}F_{k_1+n-1+r}A_{k_1+n-2+r}\right\|\left\|\bm{\eta}_{k_1+n-2|k_1+n-2}\right\|\\
%&\quad+\hat{c}\left(1+\sum_{\iota=1}^{k-k_1-n+1}\left\|\prod_{r=k-k_1-n+1}^{\iota}F_{k_1+n-1+r}A_{k_1+n-2+r}\right\|\right)\\
%&\le \hat{c}+\left(c_2c_1^{-1}\right)^{\frac{1}{2}}\left\|\bm{\eta}_{k_1+n-2|k_1+n-2}\right\|\left(\left(1-c_3c_2^{-1}\right)^{\frac{1}{2}}\right)^{k-k_1-n+2}\\
%&\quad+\left(c_2c_1^{-1}\right)^{\frac{1}{2}}\sum_{\iota=1}^{k-k_1-n+1}\hat{c}\left(\left(1-c_3c_2^{-1}\right)^{\frac{1}{2}}\right)^{k-k_1-n+2-\iota}\\
%&\le \hat{c}+\left(c_2c_1^{-1}\right)^{\frac{1}{2}}g\left(k_1,\bm{\eta}_{k_1|k_1},\Gamma_{k_1|k_1}\right)\left(\left(1-c_3c_2^{-1}\right)^{\frac{1}{2}}\right)^{k-k_1-n+2}\\
%&\quad+\left(c_2c_1^{-1}\right)^{\frac{1}{2}}\frac{\hat{c}\left(1-c_3c_2^{-1}\right)^{\frac{1}{2}}}{1-\left(1-c_3c_2^{-1}\right)^{\frac{1}{2}}}\left(1-\left(\left(1-c_3c_2^{-1}\right)^{\frac{1}{2}}\right)^{k-k_1-n+1}\right).
%\end{split}
%\end{equation}
As a consequence, for all $\epsilon>0$, there exists $\mathfrak{t}(\epsilon,\|\bm{\eta}_{\underline{k}_{\text{O}}|\underline{k}_{\text{O}}}\|,\Gamma_{\underline{k}_{\text{O}}|\underline{k}_{\text{O}}})\ge 0$ such that for all $k-\underline{k}_{\text{O}}>\mathfrak{t}(\epsilon,\|\bm{\eta}_{\underline{k}_{\text{O}}|\underline{k}_{\text{O}}}\|,\Gamma_{\underline{k}_{\text{O}}|\underline{k}_{\text{O}}})$,
\begin{align*}
\left\|\bm{\eta}_{k|k}\right\|<\epsilon+\hat{c}+\frac{\hat{c}\mathfrak{a}\left(\Gamma_{\underline{k}_{\text{O}}|\underline{k}_{\text{O}}}\right)\mathfrak{q}\left(\Gamma_{\underline{k}_{\text{O}}|\underline{k}_{\text{O}}}\right)}{1-\mathfrak{q}\left(\Gamma_{\underline{k}_{\text{O}}|\underline{k}_{\text{O}}}\right)}.
\end{align*}
When $\mathfrak{a}\left(\Gamma_{\underline{k}_{\text{O}}|\underline{k}_{\text{O}}}\right)\mathfrak{q}\left(\Gamma_{\underline{k}_{\text{O}}|\underline{k}_{\text{O}}}\right) \left\|\bm{\eta}_{\underline{k}_{\text{O}}|\underline{k}_{\text{O}}}\right\|\le \frac{\hat{c}\mathfrak{a}\left(\Gamma_{\underline{k}_{\text{O}}|\underline{k}_{\text{O}}}\right)\mathfrak{q}\left(\Gamma_{\underline{k}_{\text{O}}|\underline{k}_{\text{O}}}\right)}{1-\mathfrak{q}\left(\Gamma_{\underline{k}_{\text{O}}|\underline{k}_{\text{O}}}\right)}$, we have $\mathfrak{u}(\Gamma_{\underline{k}_{\text{O}}|\underline{k}_{\text{O}}}, \underline{k}_{\text{O}}+1)\le \lim_{k \rightarrow \infty}\mathfrak{u}(\Gamma_{\underline{k}_{\text{O}}|\underline{k}_{\text{O}}}, k)$, and $\mathfrak{u}\left(\Gamma_{\underline{k}_{\text{O}}|\underline{k}_{\text{O}}}, k\right)\le  \lim_{k \rightarrow \infty}\mathfrak{u}(\Gamma_{\underline{k}_{\text{O}}|\underline{k}_{\text{O}}}, k)$ non-decreasing with respect to $k\in(\underline{k}_{\text{O}},\bar{k}_{\text{O}}]$, thus $\mathfrak{t}(\epsilon,\|\bm{\eta}_{\underline{k}_{\text{O}}|\underline{k}_{\text{O}}}\|,\Gamma_{\underline{k}_{\text{O}}|\underline{k}_{\text{O}}})=0$. On the other hand, $\mathfrak{u}\left(\Gamma_{\underline{k}_{\text{O}}|\underline{k}_{\text{O}}}, k\right)$ is decreasing with respect to $k$ when $\mathfrak{u}(\Gamma_{\underline{k}_{\text{O}}|\underline{k}_{\text{O}}}, \underline{k}_{\text{O}}+1)> \lim_{k \rightarrow \infty}\mathfrak{u}(\Gamma_{\underline{k}_{\text{O}}|\underline{k}_{\text{O}}}, k)$. In this case,
\begin{equation*}
\arraycolsep=1pt\def\arraystretch{1}
\begin{array}{l}
\mathfrak{t}(\epsilon,\|\bm{\eta}_{\underline{k}_{\text{O}}|\underline{k}_{\text{O}}}\|,\Gamma_{\underline{k}_{\text{O}}|\underline{k}_{\text{O}}})=\log_{\mathfrak{q}\left(\Gamma_{\underline{k}_{\text{O}}|\underline{k}_{\text{O}}}\right)}\left(\epsilon\left(\mathfrak{a}\left(\Gamma_{\underline{k}_{\text{O}}|\underline{k}_{\text{O}}}\right)\left\|\bm{\eta}_{\underline{k}_{\text{O}}|\underline{k}_{\text{O}}}\right\|-\frac{\hat{c}\mathfrak{a}\left(\Gamma_{\underline{k}_{\text{O}}|\underline{k}_{\text{O}}}\right)}{1-\mathfrak{q}\left(\Gamma_{\underline{k}_{\text{O}}|\underline{k}_{\text{O}}}\right)}\right)^{-1}\right).
\end{array}
\end{equation*}

%\begin{align}
%\left\|\bm{\eta}_{k|k}\right\|<\epsilon+\hat{c}+\hat{c}\left(c_2c_1^{-1}\right)^{\frac{1}{2}}\frac{\left(1-c_3c_2^{-1}\right)^{\frac{1}{2}}}{1-\left(1-c_3c_2^{-1}\right)^{\frac{1}{2}}},\quad\text{for all $k> k_1+n-2+T_2(\epsilon)$},
%\end{align}

%\begin{align}
%T_2(\epsilon)= \left\{ \begin{array}{l}
%0 \quad \textrm{if $\hat{c}+\left(c_2c_1^{-1}\right)^{\frac{1}{2}}g\left(k_1,\bm{\eta}_{k_1|k_1},\Gamma_{k_1|k_1}\right)\left(\left(1-c_3c_2^{-1}\right)^{\frac{1}{2}}\right)\le \hat{c}+\hat{c}\left(c_2c_1^{-1}\right)^{\frac{1}{2}}\frac{\left(1-c_3c_2^{-1}\right)^{\frac{1}{2}}}{1-\left(1-c_3c_2^{-1}\right)^{\frac{1}{2}}}$}\\
%\log_{\left(1-c_3c_2^{-1}\right)^{\frac{1}{2}}}\left(\epsilon\left(c_2c_1^{-1}\right)^{-\frac{1}{2}}\left(g\left(k_1,\bm{\eta}_{k_1|k_1},\Gamma_{k_1|k_1}\right)-\hat{c}\left(1-\left(1-c_3c_2^{-1}\right)^{\frac{1}{2}}\right)^{-1}\right)^{-1}\right) \quad \textrm{otherwise.}
%\end{array} \right.\notag
%\end{align}
Furthermore, the upper bound of $\left\|\bm{\eta}_{k|k}\right\|$ is given as follows:
\begin{equation*}
\arraycolsep=1.5pt\def\arraystretch{1.5}
\begin{array}{rl}
\left\|\bm{\eta}_{k|k}\right\|&\le \max\left\{\mathfrak{u}\left(\Gamma_{\underline{k}_{\text{O}}|\underline{k}_{\text{O}}}, \underline{k}_{\text{O}}+1\right), \lim_{k\rightarrow \infty}\mathfrak{u}\left(\Gamma_{\underline{k}_{\text{O}}|\underline{k}_{\text{O}}}, k\right) \right\}\\
&=\max\left\{\hat{c}+\mathfrak{a}\left(\Gamma_{\underline{k}_{\text{O}}|\underline{k}_{\text{O}}}\right)\mathfrak{q}\left(\Gamma_{\underline{k}_{\text{O}}|\underline{k}_{\text{O}}}\right) \left\|\bm{\eta}_{\underline{k}_{\text{O}}|\underline{k}_{\text{O}}}\right\|,\quad\hat{c}+\hat{c}\mathfrak{a}\left(\Gamma_{\underline{k}_{\text{O}}|\underline{k}_{\text{O}}}\right)\mathfrak{q}\left(\Gamma_{\underline{k}_{\text{O}}|\underline{k}_{\text{O}}}\right)\left(1-\mathfrak{q}\left(\Gamma_{\underline{k}_{\text{O}}|\underline{k}_{\text{O}}}\right)\right)^{-1} \right\}.
\end{array}
\end{equation*}
for all $k\in(\underline{k}_{\text{O}},\bar{k}_{\text{O}}]$, which concludes the proof.
\end{proof}

\subsubsection{Boundedness of the mean estimation error under switches among observable and unobservable modes}

Based on Proposition \ref{prop_u_bound} and Proposition \ref{prop_o_bound}, the boundedness of the mean estimation error when the SMM switches among observable and unobservable modes is summarized in Proposition \ref{prop:switch}.

The main concept of Proposition \ref{prop:switch} is given as follows. For a freeway section, denote as $(\underline{k}_{\text{U}}^r,\bar{k}_{\text{U}}^r]$ and $(\underline{k}_{\text{O}}^r,\bar{k}_{\text{O}}^r]$ the $r^{\text{th}}$ unobservable and observable time intervals, respectively. Consider a freeway section that switches from an observable mode at $\bar{k}_{\text{O}}^{r-1}=\underline{k}_{\text{U}}^r$ to an unobservable mode at $\underline{k}_{\text{U}}^r+1$, and remains unobservable through $\bar{k}_{\text{U}}^r$. An upper bound for the 2-norm of the mean estimation error, which is uniform over $(\underline{k}_{\text{U}}^r,\bar{k}_{\text{U}}^r]$, can be obtained through Proposition \ref{prop_u_bound} based on the error covariance and the upper bound of the mean error at time $\underline{k}_{\text{U}}^r$. When the section switches back to the set of observable modes at time $\bar{k}_{\text{U}}^r+1=\underline{k}_{\text{O}}^r+1$ and remains observable through $\bar{k}_{\text{O}}^r$, the mean estimation error has been increased during the unobservable time interval, and may continue to increase initially before decreasing while the section is observable. Based on Proposition \ref{prop_o_bound}, the minimum residence time $\bar{k}_{\text{O}}^r-\underline{k}_{\text{O}}^r$ the section must remain observable to offset the increase of the mean estimation error, as well as the upper bound of the mean error during the observable interval $(\underline{k}_{\text{O}}^r, \bar{k}_{\text{O}}^r]$ are derived. The minimum residence time ensures that when the section switches back to an unobservable mode, the mean estimation error is smaller than a given upper bound. Based on this upper bound and the error covariance at time $\bar{k}_{\text{O}}^r=\underline{k}_{\text{U}}^{r+1}$, we can apply Proposition \ref{prop_u_bound} again and obtain the upper bound for the 2-norm of the mean estimation error during the unobservable time interval starting at time $\underline{k}_{\text{U}}^{r+1}+1$. We continue the induction and derive the minimum residence time for each observable time interval, as well as the upper bounds of the 2-norm of the mean estimation error for all the observable and unobservable time intervals.

\begin{prop}[Proposition 5 in \cite{SunWorkCONES2016}]\label{prop:switch}
For a freeway section, denote as $(\underline{k}_{{\text{U}}}^r,\bar{k}_{{\text{U}}}^r]$ the $r^{\text{th}}$ time interval while the section switches among unobservable modes, and $(\underline{k}_{{\text{O}}}^r,\bar{k}_{{\text{O}}}^r]$ the $r^{\text{th}}$ time interval while the section switches among observable modes. Hence $\underline{k}_{{\text{U}}}^1=0$ (resp. $\underline{k}_{{\text{O}}}^1=0$) when the section is unobservable (resp. observable) at time $0$. Let $\delta >0$ be an arbitrary positive constant, and suppose the following condition on the residence time for the observable time intervals holds:
\begin{flalign}\label{eq:time_condition}
&\bar{k}_{{\text{O}}}^r-\underline{k}_{{\text{O}}}^r>\left\{
\arraycolsep=1.5pt\def\arraystretch{1}
\begin{array}{ll}
\mathfrak{t}\left(\delta,\mathfrak{e}\left(\delta,\Gamma_{\underline{k}_{{\text{O}}}^{r-1}|\underline{k}_{{\text{O}}}^{r-1}},\Gamma_{\bar{k}_{{\text{O}}}^{r-1}|\bar{k}_{{\text{O}}}^{r-1}}\right),\Gamma_{\underline{k}_{{\text{O}}}^{r}|\underline{k}_{{\text{O}}}^{r}}\right)  &\quad r\ge 2\\
\mathfrak{t}\left(\delta,\mathfrak{e}_{0}\left(\Gamma_{0|0}\right),\Gamma_{\underline{k}_{{\text{O}}}^1|\underline{k}_{{\text{O}}}^1}\right) &\quad  r=1 \text{ and } \underline{k}_{{\text{U}}}^{1}=0\\
\mathfrak{t}\left(\delta,\sqrt{n}\varrho_{{\text{m}}},\Gamma_{0|0}\right) &\quad r=1 \text{ and } \underline{k}_{{\text{O}}}^{1}=0,
\end{array} \right.
\end{flalign}
where $\mathfrak{e}_0\left(M\right)=\sqrt{n}\left(\sqrt{n}\varrho_{{\text{m}}}\left(c_0+\left(n-2\right)\mathfrak{c}\left(M\right)\right)+\varrho_{{\text{m}}} \right)$ for $M\in\mathbb{R}^{n\times n}$, and
\begin{align*}
\arraycolsep=1.5pt\def\arraystretch{1}
\begin{array}{rl}
\mathfrak{e}\left(\delta, M_1,M_2\right)=&
\sqrt{n}\left(\varrho_{{\text{m}}}+\left(\delta+\hat{c}+\frac{\hat{c}\mathfrak{a}\left(M_1\right)\mathfrak{q}\left(M_1\right)}{1-\mathfrak{q}\left(M_1\right)}\right)\left(c_0+\left(n-2\right)\mathfrak{c}\left(M_2\right)\right)\right),
\end{array}
\end{align*}
for $M_1, M_2\in\mathbb{R}^{n\times n}$, with $\hat{c}$ given in \eqref{eq:bound_consensus}, $c_0$ and $\mathfrak{c}(\cdot)$ defined in \eqref{eq:gamma_c}, $\mathfrak{a}\left(\cdot\right)$ and $\mathfrak{q}\left(\cdot\right)$ defined in \eqref{eq:tilde_a_q1}.

When $r\ge 2$, the mean error is upper bounded a follows:
\begin{flalign*}
&\left\|\bm{\eta}_{k|k}\right\|\le \left\{
\arraycolsep=1.5pt\def\arraystretch{1}
\begin{array}{rl}
\text{(a) for $k\in (\underline{k}_{\text{U}}^{r},\bar{k}_{\text{U}}^{r}]$: }&\mathfrak{e}\left(\delta,\Gamma_{\bar{k}_{\text{U}}^{r-1}|\bar{k}_{\text{U}}^{r-1}}, \Gamma_{\underline{k}_{\text{U}}^{r}|\underline{k}_{\text{U}}^{r}}\right),\\
\text{(b) for $k\in (\underline{k}_{\text{O}}^{r},\bar{k}_{\text{O}}^{r}]$: } &\max\left\{\hat{c}+\mathfrak{a}\left(\Gamma_{\underline{k}_{\text{O}}^{r}|\underline{k}_{\text{O}}^{r}}\right)\mathfrak{q}\left(\Gamma_{\underline{k}_{\text{O}}^{r}|\underline{k}_{\text{O}}^{r}}\right)\mathfrak{e}\left(\delta,\Gamma_{\underline{k}_{\text{O}}^{r-1}|\underline{k}_{\text{O}}^{r-1}},\Gamma_{\bar{k}_{\text{O}}^{r-1}|\bar{k}_{\text{O}}^{r-1}}\right),\right.\\
&\left.\enskip \hat{c}+\hat{c}\mathfrak{a}\left(\Gamma_{\underline{k}_{\text{O}}^{r}|\underline{k}_{\text{O}}^{r}}\right)\mathfrak{q}\left(\Gamma_{\underline{k}_{\text{O}}^{r}|\underline{k}_{\text{O}}^{r}}\right)\left(1-\mathfrak{q}\left(\Gamma_{\underline{k}_{\text{O}}^{r}|\underline{k}_{\text{O}}^{r}}\right)\right)^{-1}\right\}.
\end{array} \right.
\end{flalign*}

When $r=1$ and $\underline{k}_{{\text{U}}}^1=0$, the mean estimation error satisfies
\begin{flalign*}
&\left\|\bm{\eta}_{k|k}\right\|\le \left\{
\arraycolsep=1.5pt\def\arraystretch{1}
\begin{array}{rl}
\text{(a) for $k\in (\underline{k}_{\text{U}}^{1},\bar{k}_{\text{U}}^{1}]$: }&\mathfrak{e}_0\left(\Gamma_{0|0}\right),\\
\text{(b) for $k\in (\underline{k}_{\text{O}}^{1},\bar{k}_{\text{O}}^{1}]$: }&\max\left\{\hat{c}+\mathfrak{a}\left(\Gamma_{\underline{k}_{\text{O}}^{1}|\underline{k}_{\text{O}}^{1}}\right)\mathfrak{q}\left(\Gamma_{\underline{k}_{\text{O}}^{1}|\underline{k}_{\text{O}}^{1}}\right)\mathfrak{e}_{0}\left(\Gamma_{0|0}\right),\enskip \right.\\
&\left.\enskip \hat{c}+\hat{c}\mathfrak{a}\left(\Gamma_{\underline{k}_{\text{O}}^{1}|\underline{k}_{\text{O}}^{1}}\right) \mathfrak{q}\left(\Gamma_{\underline{k}_{\text{O}}^{1}|\underline{k}_{\text{O}}^{1}}\right)\left(1-\mathfrak{q}\left(\Gamma_{\underline{k}_{\text{O}}^{1}|\underline{k}_{\text{O}}^{1}}\right)\right)^{-1} \right\}.\\
\end{array} \right.
\end{flalign*}

When $r=1$ and $\underline{k}_{{\text{O}}}^1=0$, the mean estimation error satisfies
\begin{flalign*}
&\left\|\bm{\eta}_{k|k}\right\|\le \left\{
\arraycolsep=1.5pt\def\arraystretch{1}
\begin{array}{rl}
\text{(a) for $k\in (\underline{k}_{\text{U}}^{1},\bar{k}_{\text{U}}^{1}]$: }&\mathfrak{e}\left(\delta,\Gamma_{0|0}, \Gamma_{\underline{k}_{\text{U}}^{1}|\underline{k}_{\text{U}}^{1}}\right),\\
\text{(b) for $k\in (\underline{k}_{\text{O}}^{1},\bar{k}_{\text{O}}^{1}]$: } &\max\left\{\hat{c}+\mathfrak{a}\left(\Gamma_{0|0}\right)\mathfrak{q}\left(\Gamma_{0|0}\right) \sqrt{n}\varrho_{\text{m}},\enskip\right.\\
&\left.\enskip \hat{c}+\hat{c}\mathfrak{a}\left(\Gamma_{0|0}\right)\mathfrak{q}\left(\Gamma_{0|0}\right)\left(1-\mathfrak{q}\left(\Gamma_{0|0}\right)\right)^{-1} \right\}.\\
\end{array} \right.
\end{flalign*}
\end{prop}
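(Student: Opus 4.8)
The plan is to prove Proposition~\ref{prop:switch} by induction on the cycle index $r$, chaining the unobservable-interval estimate of Proposition~\ref{prop_u_bound} with the observable-interval estimate of Proposition~\ref{prop_o_bound}. The structural fact that makes the chaining possible is that adjacent intervals share endpoints, $\bar{k}_{\text{O}}^{r-1}=\underline{k}_{\text{U}}^{r}$ and $\bar{k}_{\text{U}}^{r}=\underline{k}_{\text{O}}^{r}$, so the bound produced at the terminal time of one interval is exactly the hypothesis required at the initial time of the next. Two algebraic identities serve as the glue: writing $\mathfrak{h}(\epsilon,M)=\sqrt{n}\left(\varrho_{\text{m}}+\epsilon\left(c_0+(n-2)\mathfrak{c}(M)\right)\right)$ for the bound in Proposition~\ref{prop_u_bound}, one checks directly that $\mathfrak{e}_0(M)=\mathfrak{h}(\sqrt{n}\varrho_{\text{m}},M)$ and $\mathfrak{e}(\delta,M_1,M_2)=\mathfrak{h}\!\left(\delta+\hat{c}+\frac{\hat{c}\mathfrak{a}(M_1)\mathfrak{q}(M_1)}{1-\mathfrak{q}(M_1)},M_2\right)$. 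Thus the terminal output $\delta+\hat{c}+\frac{\hat{c}\mathfrak{a}\mathfrak{q}}{1-\mathfrak{q}}$ of an observable interval, fed as $\epsilon$ into Proposition~\ref{prop_u_bound}, reproduces exactly the $\mathfrak{e}$-type bounds in cases (a), while the initial-error-dependent upper bound of Proposition~\ref{prop_o_bound} reproduces the $\max$-type bounds in cases (b).

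For the base cases I would use that the initial mean estimate is physically meaningful, so each component of $\bm{\eta}_{0|0}$ lies in $[-\varrho_{\text{m}},\varrho_{\text{m}}]$ and hence $\|\bm{\eta}_{0|0}\|\le\sqrt{n}\varrho_{\text{m}}$. If $\underline{k}_{\text{U}}^1=0$, I feed this into Proposition~\ref{prop_u_bound} with $\epsilon=\sqrt{n}\varrho_{\text{m}}$ to obtain case (a), namely $\mathfrak{h}(\sqrt{n}\varrho_{\text{m}},\Gamma_{0|0})=\mathfrak{e}_0(\Gamma_{0|0})$, and its terminal value then drives Proposition~\ref{prop_o_bound} on $(\underline{k}_{\text{O}}^1,\bar{k}_{\text{O}}^1]$ to give case (b). If instead $\underline{k}_{\text{O}}^1=0$, I first apply Proposition~\ref{prop_o_bound} on $(0,\bar{k}_{\text{O}}^1]$ with $\|\bm{\eta}_{0|0}\|\le\sqrt{n}\varrho_{\text{m}}$ and the residence-time condition \eqref{eq:time_condition}, obtaining case (b) and the terminal value $\delta+\hat{c}+\frac{\hat{c}\mathfrak{a}(\Gamma_{0|0})\mathfrak{q}(\Gamma_{0|0})}{1-\mathfrak{q}(\Gamma_{0|0})}$, which feeds Proposition~\ref{prop_u_bound} to give case (a) through the identity above. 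The inductive step for $r\ge2$ is identical in form: the terminal bound of the $(r-1)^{\text{th}}$ observable interval bounds $\|\bm{\eta}_{\underline{k}_{\text{U}}^{r}|\underline{k}_{\text{U}}^{r}}\|$ (since $\bar{k}_{\text{O}}^{r-1}=\underline{k}_{\text{U}}^{r}$) and drives Proposition~\ref{prop_u_bound}, with switching covariance $\Gamma_{\underline{k}_{\text{U}}^{r}|\underline{k}_{\text{U}}^{r}}$, to give case (a); the resulting uniform bound, which also holds at $\bar{k}_{\text{U}}^{r}=\underline{k}_{\text{O}}^{r}$, drives Proposition~\ref{prop_o_bound} to give case (b) and to supply the terminal bound that opens the next cycle.

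The main obstacle is bookkeeping rather than any deep estimate: Propositions~\ref{prop_u_bound} and~\ref{prop_o_bound} are phrased in terms of the \emph{exact} mean-error norm at each switching instant, whereas the induction only furnishes an \emph{upper bound} on it. I would discharge this by monotonicity. For the unobservable step this is immediate, since the hypothesis of Proposition~\ref{prop_u_bound} reads $\|\bm{\eta}_{\underline{k}_{\text{U}}|\underline{k}_{\text{U}}}\|<\epsilon$ and I may take $\epsilon$ to be the inherited upper bound. For the observable step I need that both the terminal bound $\epsilon+\hat{c}+\frac{\hat{c}\mathfrak{a}\mathfrak{q}}{1-\mathfrak{q}}$ and the uniform bound $\max\{\hat{c}+\mathfrak{a}\mathfrak{q}\|\bm{\eta}_{\underline{k}_{\text{O}}|\underline{k}_{\text{O}}}\|,\,\hat{c}+\hat{c}\mathfrak{a}\mathfrak{q}(1-\mathfrak{q})^{-1}\}$ are non-decreasing in $\|\bm{\eta}_{\underline{k}_{\text{O}}|\underline{k}_{\text{O}}}\|$ (clear from their form), and, crucially, that the residence-time threshold $\mathfrak{t}(\delta,\cdot,\Gamma)$ is non-decreasing in its second argument. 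The latter follows from the explicit expression \eqref{eq:tilde_T}: on the nontrivial branch $\mathfrak{t}=\log_{\mathfrak{q}}\!\left(\epsilon\left(\mathfrak{a}\|\bm{\eta}_{\underline{k}_{\text{O}}|\underline{k}_{\text{O}}}\|-\frac{\hat{c}\mathfrak{a}}{1-\mathfrak{q}}\right)^{-1}\right)$, and since $0<\mathfrak{q}<1$ makes $\log_{\mathfrak{q}}(\cdot)$ decreasing while increasing $\|\bm{\eta}_{\underline{k}_{\text{O}}|\underline{k}_{\text{O}}}\|$ decreases the bracketed argument, the composition is increasing. Hence imposing \eqref{eq:time_condition} with $\mathfrak{t}$ evaluated at the inherited upper bound guarantees the condition also holds for the true (smaller) initial error, so Proposition~\ref{prop_o_bound} applies, and substituting the upper bound into the monotone terminal and uniform bounds closes the induction. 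A minor point to verify throughout is that the switching covariances $\Gamma_{\underline{k}_{\text{U}}^r|\underline{k}_{\text{U}}^r}$ and $\Gamma_{\underline{k}_{\text{O}}^r|\underline{k}_{\text{O}}^r}$ need not be controlled uniformly in $r$, since they enter the claimed estimates only as explicit arguments.
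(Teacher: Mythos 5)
Your proposal is correct and follows essentially the same route as the paper's own proof: chaining Proposition~\ref{prop_u_bound} and Proposition~\ref{prop_o_bound} across the shared interval endpoints $\bar{k}_{\text{O}}^{r-1}=\underline{k}_{\text{U}}^{r}$ and $\bar{k}_{\text{U}}^{r}=\underline{k}_{\text{O}}^{r}$, inducting on the cycle index $r$, and seeding the induction with the physical bound $\|\bm{\eta}_{0|0}\|\le\sqrt{n}\varrho_{\text{m}}$. Your explicit verification that $\mathfrak{t}(\delta,\cdot,\Gamma)$ is non-decreasing in its second argument (so that the residence-time condition \eqref{eq:time_condition}, stated at the inherited upper bounds, implies the hypothesis of Proposition~\ref{prop_o_bound} at the true, smaller error) is a bookkeeping step the paper leaves implicit, and it is a welcome addition rather than a deviation.
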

\begin{proof}
The proof is done by combining Proposition \ref{prop_u_bound} and Proposition \ref{prop_o_bound}, which is detailed in Appendix \ref{ap:prop_switch}.
\end{proof}

\begin{remark}\label{RM:dwelltime}
The minimum residence time in Proposition \ref{prop:switch} shares a similar concept with the definition of (average) dwell time (e.g. \cite{liberzon2012switching,xie2013exponential}), in the sense that both impose conditions on sufficiently long time spent in modes that are globally asymptotically stable (or observable in our case). However, several main differences between the two exist. For example, there is no condition imposed in this work regarding the ratio between the total time spent in observable and unobservable modes, while the analysis using an average dwell time (e.g., \cite{xie2013exponential}) requires a sufficient large ratio between the total time spent in stable and unstable modes. Moreover, since this work derives switching conditions to ensure bounded estimation error provided by an online filter, the minimum residence times are also computed online, which depend on the estimation error covariances at the beginning of the observable time intervals. This also differs from the stability analysis based on the (average) dwell time where the timing conditions on the switching sequences are computed offline.
\end{remark}

\section{Numerical experiments}\label{sec: NumericalExperiments}

%In this section, we assess the performance of the DLKCF under different scenarios. The GAS property of the error dynamic of the DLKCF under observable modes is shown in \cite{SunWorkCDC14}. Here we first show that in unobservable sections the estimates given by the DLKCF are ultimately bounded, while the estimates of a Luenberger observer may take physically unreasonable values. We also illustrate the relationship between the sum of the errors and the submodes of the true and the estimated states. Then we evaluate the DLKCF by comparing it with the performance when dropping the consensus term and individual local KFs without inter-agent communication, and show that the DLKCF has less disagreement on estimates among neighbors, less overall estimation error, and more robustness to low quality sensors or agents. Finally we show a considerable reduction of the runtime per agent of the DLKCF compared to a central KF to perform the same estimation task.
\subsection{Effect of inter-agent communication and filter consistency}\label{sub:Individual_DLKCF}
In this section, we show the critical role the consensus term plays in reducing the disagreement between agents, and validate the consistency of the DLKCF using the NEES measure. The network is a stretch of highway divided into 136 cells and 7 sections.
%The network setup and the communication topology for the DLKCF is shown in Figure \ref{Fig:DLKCF_DLKCF_network}a-b.
We apply normalized parameters for the triangular fundamental diagram. The true solution is set to be a combination of an expansion fan and a shock propagating upstream, with a sinusoidal upstream boundary condition (Figure \ref{Fig:DLKCF_DLKCF_network}a), which is computed based on the CTM. Parameter values and elements of the experimental setup not detailed here can be found in the README documentation for the supplementary source code https://github.com/yesun/DLKCF.
\begin{figure}
\begin{center}
\includegraphics[width=240pt]{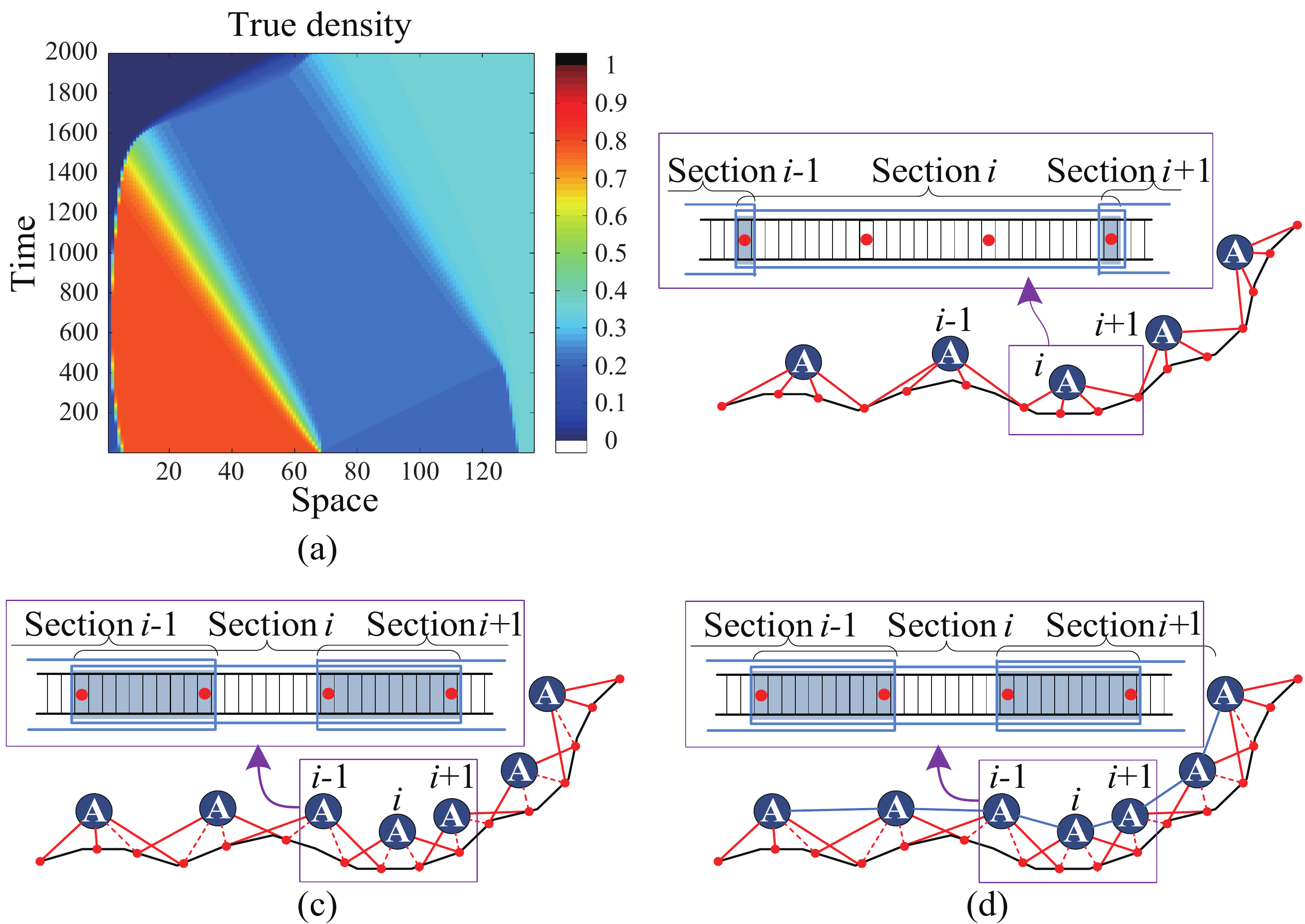}
\caption{(a) True Solution; (b-d) Freeway network setup and communication topology for: (b) the LKF; (c) the DLKCF-0; and (d) the DLKCF. The red solid lines represent the direct connection between agents (labeled A in circles) and sensors (red dots), and the red dashed lines represent connection between agents and sensors obtained through receiving shared measurements and sensor models from neighbors. The blue lines stand for the existence of consensus terms between agents. In the zoomed-in parts, the freeway is discretized by cells (small rectangles) and localized by sections (blocks). Overlapping regions are represented by the blue shaded cells, and sensor locations are represented by red dots in the cells.}\label{Fig:DLKCF_DLKCF_network}
\end{center}
\vspace{-0.2in}
\end{figure}

Disagreement and error on state estimates can be generated for various reasons, here we consider the combining effects of the following two causes: (\textit{i}) \textit{heterogeneous sensors} (HS), with some of the sensors having large measurement errors; (\textit{ii}) \textit{inconsistent agents} (IA), with some agents assuming incorrect (too small) noise models for the low quality sensors. In this experiment, we put a large-error sensor (with the measurement error standard deviation of 0.3, compared to 0.03 for all other sensors) once every three sensors starting from the downstream sensor of the first section. Moreover, agents associated with sections indexed by even numbers are unable to recognize the large-error sensors they are directly connected to (thus still applying 0.03 as the measurement error standard deviation for these sensors). We also apply perturbations of 10-20\% on the model parameters (i.e., $\varrho_{\text{m}}$, $\varrho_{\text{c}}$, and $v_{\text{m}}$) on different sections.

We explore the effects of the above two causes on the disagreement and error of estimates for (\textit{i}) the \textit{local KF} (LKF), where each local agent runs the KF described in Section \ref{subsec:KF} independently based on measurements from the sensors it is directly connected to (e.g., $z_{i,k}^i$ for agent $i$), without sharing measurements or estimates; (\textit{ii}) the \textit{DLKCF with zero consensus gain} (DLKCF-0), where the prediction and correction steps are given by \eqref{eq:DLKCFforecast}-\eqref{eq:DLKCFAnalysis} (i.e., neighboring agents share sensor data and sensor models) with consensus gains set to zero (i.e., $C_{i,k}^j=\bm{0}_{n_i,n_{i,j}}$ for all $i$, $j\in\mathcal{N}_i$ and $k$); and (\textit{iii}) the DLKCF with consensus gain as given in \eqref{eq:DLKCFconsensusgain} (where $\gamma^j_{i,k}=0.99\min\{\gamma^*_{i,k},\gamma^*_{j,k},\hat{\gamma}_{i,k}^j\}$ with $\hat{c}=0.01$). Figure \ref{Fig:DLKCF_DLKCF_network} shows the network setup and the communication topology for the LKF (Figure \ref{Fig:DLKCF_DLKCF_network}b), the DLKCF-0 (Figure \ref{Fig:DLKCF_DLKCF_network}c) and DLKCF (Figure \ref{Fig:DLKCF_DLKCF_network}d). At time $k$, the average disagreement $\tilde{u}_k$ of the posterior estimate is computed by $\tilde{u}_k=\frac{1}{N-1}\sum_{i=1}^{N-1}\frac{\parallel \tilde{u}^{i+1}_{i,k}\parallel_2^2}{n_{i,i+1}}$ with $\tilde{u}^{i+1}_{i,k}=\hat{I}_{j,i}\eta_{j,k|k}-\hat{I}_{i,j}\eta_{i,k|k}$, and the average estimation error is given by $\eta_k=\frac{1}{N}\sum_{i=1}^{N}\frac{\parallel \eta_{i,k|k}\parallel_2^2}{n_i}$.

\begin{table}
\centering
\caption{Disagreement and error of estimate$^{1}$}
\label{tb:LOD_check}
\vspace{-0.1in}
\begin{IEEEeqnarraybox}[\IEEEeqnarraystrutmode\IEEEeqnarraystrutsizeadd{1pt}{1pt}]{c/c/c/c/c/c/c/c}
\hline
\IEEEeqnarraymulticol{2}{t}{Causes}&\IEEEeqnarraymulticol{3}{t}{Disagreement $\tilde{u}$ ($\times 10^{-2}$)}&\IEEEeqnarraymulticol{3}{t}{Error $\eta$ ($\times 10^{-2}$)}\\
\textrm{HS}&\textrm{IA}&\textrm{LKF}&\textrm{DLKCF-0}&\textrm{DLKCF}&\textrm{LKF}&\textrm{DLKCF-0}&\textrm{DLKCF}\\
\hline
\textrm{False}&--&--&0.294&0.119&0.423&0.349&0.308\\
\textrm{True}&\textrm{False}&--&0.336&0.119&0.562&0.503&0.468\\
\textrm{True}&\textrm{True}&--&7.361&4.664&2.941&2.670&2.633\\
\hline
\end{IEEEeqnarraybox}
\vspace{-0.25in}
\end{table}

Table \ref{tb:LOD_check} reports the disagreement and estimation error of the three filters, where $\tilde{u}=\sum_{k=1}^{k_{\max}}\tilde{u}_k$ and $\eta=\sum_{k=1}^{k_{\max}}\eta_k$ with $k_{\max}$ denoting the total number of time steps. Since the neighboring sections in the LKF have no overlapping cells except the shared boundary cells with sensor measurements, the neighbor disagreement for the LKF is not considered. It is shown that the estimation accuracy of the LKF is vulnerable to inconsistent error models, since the inconsistent agents can never identify the high-error sensors they are connected to, while in the DLKCF-0 and DLKCF some of the inconsistent agents apply the correct measurement error covariance matrices when they share sensor data and sensor models with neighbors. Moreover, compared to the DLKCF-0, adding the consensus term in the DLKCF considerably reduces the neighbor disagreement (regardless of the existence of heterogeneous sensors or inconsistent agents). Hence, the DLKCF outperforms the other two filters with respect to agreement and accuracy on estimates (further illustrations of the performance of the filters is given in \cite{DLKCFthesis}).
%\begin{figure}
%\begin{center}
%\includegraphics[width=260pt]{LOD_Error_v6}
%\caption{Disagreement $\tilde{u}_k$ and error $\eta_k$ for the DLKCF, the DLKF, and independent local KFs.}\label{Fig:compare}
%\end{center}
%\end{figure}
%\begin{table}
%\centering
%\caption{Number of times (out of 2000) NEES surpasses the 95\% region.}
%\label{NEES}
%\vspace{-0.1in}
%\begin{IEEEeqnarraybox}[\IEEEeqnarraystrutmode\IEEEeqnarraystrutsizeadd{1pt}{1pt}]{c/c/c/c/c/c/c/c}
%\hline
%\IEEEeqnarraymulticol{1}{t}{}&\IEEEeqnarraymulticol{7}{t}{Section index}\\
%& 0 & 1 & 2 & 3 & 4 & 5 & 6\\
%\hline
%\textrm{NEES$>$upper limit}&31&5&11&49&35&14&8\\
%\textrm{NEES$<$lower limit}&8&36&24&11&16&10&19\\
%\hline
%\end{IEEEeqnarraybox}
%\vspace{-0.25in}
%\end{table}

As stated in Remark \ref{RM:Optimality}, we remove the existence of inconsistent agents and perform an NEES check \cite{BarShalom2001} of the DLKCF across 50 Monte Carlo runs, thus accessing the validity of dropping the cross-correlations among different agents in the estimation error covariance. The average percentage of time steps across all the sections that the NEES measure surpasses the two-sided 95\% probability concentration region $[1.484, 2.6]$ is 1.98\%. Among all the sections, the maximum (resp. minimum) percentage of time steps that the NEES is greater than the upper limit (resp. smaller than the lower limit) is 2.45\% (resp. 1.8\%). This indicates that the filter-calculated error covariance matches the mean square error of the DLKCF.

\subsection{Computational complexity}\label{sub:complexity}
\begin{table}
\centering
\caption{Runtime comparison of the central KF and DLKCF (per agent)}
\label{Runtime}
%\vspace{-0.1in}
\begin{IEEEeqnarraybox}[\IEEEeqnarraystrutmode\IEEEeqnarraystrutsizeadd{1pt}{1pt}]{c/c/c/c/c/c/c/c}
\hline
\IEEEeqnarraymulticol{2}{t}{Central KF}&\IEEEeqnarraymulticol{1}{t}{$\quad$}&\IEEEeqnarraymulticol{5}{t}{DLKCF}\\
n&\textrm{runtime $t_{\text{c}}$ (sec)}&\quad&n&n_l&\hat{n}&N&\textrm{runtime $t_{\text{d}}$ (sec)}\\
\hline
100&104&\quad&100&28&10&5&6.2\\
210&512&\quad&210&50&10&5&24.6\\
210&512&\quad&210&58&20&5&42.2\\
\hline
\end{IEEEeqnarraybox}
\vspace{-0.25in}
\end{table}

For simplicity, let $n_i=n_l$ for all $i$, and denote as $\hat{n}$ the uniform size of the overlapping regions. The computational complexity of the DLKCF for the $i^{\textrm{th}}$ local agent is dominated by $O(n_l^3+(|\mathcal{N}_i|\hat{n})^3)$ at each time step, where $|\mathcal{N}_i|$ is the number of neighbors of agent $i$. This implies that we need $\hat{n}<n_l|\mathcal{N}_i|^{-1}$ to have a consensus term with computational complexity less than the local KF. Table \ref{Runtime} reports the runtime per agent of the DLKCF and the central KF to complete 2000 estimation steps tracking a shockwave on a stretch of freeway, which we denote as $t_{\text{d}}$ and $t_{\text{c}}$, respectively. It is evident that compared to the central KF, the runtime of the DLKCF is considerably reduced. Moreover, given a fixed network dimension and a fixed number of agents, the computation load increases with the size of the overlapping regions.

\section{Conclusion and future work}\label{sec:Conclusion}
In this article a distributed local Kalman consensus filter is designed for large-scale multi-agent traffic estimation. The DLKCF is applied to the SMM to monitor traffic on a road network partitioned into local sections, with overlapping regions between neighbors introduced to allow for information exchange on measurements and estimates. We prove that the mean error dynamics of the DLKCF is GAS when all sections switch among observable modes of the SMM. For an unobservable section, we show that the mean estimates are ultimately bounded inside the physically meaningful interval. We also prove that the 2-norm of the mean error for any given section is upper bounded under switches among observable and unobservable modes, as long as the section remains observable for a minimum residence time after switching to an observable mode from an unobservable one. Numerical experiments illustrate the effect of the DLKCF on reducing the overall estimation error, compared to the LKF, as well as promoting agreement among different agents. The numerical results also show a considerable reduction on the runtime of the DLKCF compared to a central KF.

In order to apply the DLKCF in the field, extension of the observability results to freeway networks with merging and diverging junctions is necessary but straightforward. Moreover, the development and incorporation of robustness results on the detection of sensing/computing outliers and model mismatches in each local agent can further improve estimation accuracy.

\section*{Acknowledgement}
The authors thank Prof. C. Canudas de Wit and the reviewers for constructive suggestions that improved this manuscript. This material is based upon work supported by the National Science Foundation under Grant No. CMMI-1351717.

\newpage
\appendix
\renewcommand{\theequation}{A.\arabic{equation}}

\subsection{Destabilizing Effect of the Cross-correlation Terms under Non-observability}\label{ap:crosscov_unstb}
Let $Q^j_{i,k}=\mathbb{E}[w_{i,k}w^T_{j,k}]$ and $R^j_{i,k}=\mathbb{E}[v_{i,k}v^T_{j,k}]$ be the cross-covariance of the model and measurement noise between agent $i$ and $j$, and denote as $\Gamma^j_{i,k|k}=\mathbb{E}[\eta_{i,k|k}\eta^T_{j,k|k}|\mathcal{Z}_k]$ and $\Gamma^j_{i,k|k-1}=\mathbb{E}[\eta_{i,k|k-1}\eta^T_{j,k|k-1}|\mathcal{Z}_{k-1}]$. The optimal DLKCF is given by
\begin{flalign}
&\textrm{Prediction:}\notag\\
&\left\{ \begin{array}{lc}
\rho_{i, k|k-1}=A_{i,k-1}\rho_{i,k-1|k-1}\\
\Gamma^j_{i,k|k-1}=A_{i,k-1}\Gamma^j_{i,k-1|k-1}A_{j,k-1}^{T}+Q^j_{i,k-1},
\end{array}\right.&\label{eq:opDLKCFforecast}
\end{flalign}
\begin{flalign}
&\textrm{Correction:}\notag\\
&\left\{ \begin{array}{lc}
\rho_{i,k|k}=\rho_{i,k|k-1}+K_{i,k}\left(z_{i,k}-H_{i,k}\rho_{i,k|k-1}\right)+\sum_{j\in \mathcal{N}_i}C^{j}_{i,k}\left(\hat{I}_{j,i}\rho_{j,k|k-1}-\hat{I}_{i,j}\rho_{i,k|k-1}\right)\\
\Gamma^{j}_{i,k|k}=F_{i,k}\Gamma^{j}_{i,k|k-1}F^T_{j,k}+K_{i,k}R^j_{i,k}K^T_{j,k}+F_{i,k}\sum_{q\in\mathcal{N}_j}\left(\Gamma^{q}_{i,k|k-1}\hat{I}^T_{q,j}-\Gamma^j_{i,k|k-1}\hat{I}^T_{j,q}\right)\left(C^q_{j,k}\right)^T\\
\quad\quad\quad\quad+\sum_{p\in\mathcal{N}_i}C^p_{i,k}\left(\hat{I}_{p,i}\Gamma^{j}_{p,k|k-1}-\hat{I}_{i,p}\Gamma^{j}_{i,k|k-1}\right)F^T_{j,k}+\Omega^j_{i,k}\\
K_{i,k}=\left(\Gamma_{i,k|k-1}+C^p_{i,k}\sum_{p\in\mathcal{N}_i}\left(\hat{I}_{p,i}\Gamma^{i}_{p,k|k-1}-\hat{I}_{i,p}\Gamma_{i,k|k-1}\right)\right)H_{i,k}^T\left(R_{i,k}+H_{i,k}\Gamma_{i,k|k-1}H_{i,k}^T\right)^{-1},\\
\end{array}\right.&\label{eq:opDLKCFAnalysis}
\end{flalign}
where $F_{i,k}=I-K_{i,k}H_{i,k}$ and
\begin{align}
\Omega^j_{i,k}=\sum_{p\in\mathcal{N}_i,q\in\mathcal{N}_j}C^p_{i,k}\left(\hat{I}_{p,i}\Gamma^{q}_{p,k|k-1}\hat{I_{q,j}^T}-\hat{I}_{p,i}\Gamma^{j}_{p,k|k-1}\hat{I}_{j,q}^T-\hat{I}_{i,p}\Gamma^{q}_{i,k|k-1}\hat{I}_{q,j}^T+\hat{I}_{i,p}\Gamma^{j}_{i,k|k-1}\hat{I}_{j,q}^T\right)\left(C^q_{j,k}\right)^T.\notag
\end{align}
The next example shows that the estimation error covariance of a section which is always observable can end up being unbounded when adding the cross-correlation terms.
\begin{example}
Consider a freeway segment partitioned into two sections with same dimension $n$. It is assumed that $q_1I<Q_{i,k}<q_2I$, $r_1I<R_{i,k}<r_2I$, and $\left\|Q^j_{i,k}\right\|<q_2$ for all $k>0$ and $i,j\in\{1,2\}$. Suppose section 2 always stays in the CC mode, i.e., $\sigma_2(k)=\text{CC}$ for all $k>0$. However, section 1 is unobservable before switching to the CC mode at time step $k_0$, i.e., $\sigma_1(k)\in\{\text{FC1, FC2}\}$ for $k\in[0,k_0)$ and $\sigma_1(k)=\text{CC}$ for $k\ge k_0$. For all $\bar{\varpi}>0$, there exists $k_0>0$ such that $\left\|\Gamma_{2,k_0|k_0}\right\|>\bar{\varpi}$, independent of the initial error covariance $\Gamma_{2,0|0}$.
\end{example}
\begin{proof}
As provided in \eqref{eq:opDLKCFAnalysis}, when section 1 switches to the CC mode at time step $k_0$, the error covariance update of section 2 is given by
\begin{equation}\label{eq:errCov}
\begin{split}
\Gamma_{2,k_0|k_0}=&F_{2,k_0}\Gamma_{2,k_0|k_0-1}F^T_{2,k_0}+K_{2,k_0}R_{2,k_0}K^T_{2,k_0}+F_{2,k_0}\left(\Gamma^1_{2,k_0|k_0-1}\hat{I}^T_{1,2}-\Gamma_{2,k_0|k_0-1}\hat{I}^T_{2,1}\right)\left(C^1_{2,k_0}\right)^T\\
&+C^1_{2,k_0}\left(\hat{I}_{1,2}\Gamma^{2}_{1,k_0|k_0-1}-\hat{I}_{2,1}\Gamma_{2,k_0|k_0-1}\right)F^T_{2,k_0}+\Omega^2_{2,k_0},\\
\end{split}
\end{equation}
where
\begin{equation}
\begin{split}\label{eq:f_k0}
K_{2,k_0}&=\left(\Gamma_{2,k_0|k_0-1}+C^1_{2,k_0}\left(\hat{I}_{1,2}\Gamma^{2}_{1,k_0|k_0-1}-\hat{I}_{2,1}\Gamma_{2,k_0|k_0-1}\right)\right)H_{2,k_0}^T\left(R_{2,k_0}+H_{2,k_0}\Gamma_{2,k_0|k_0-1}H_{2,k_0}^T\right)^{-1},\\
F_{2,k_0}&=I-K_{2,k_0}H_{2,k_0},
\end{split}
\end{equation}
and
\begin{equation}
\begin{split}\label{eq:omega}
\Omega^2_{2,k_0}=&C^1_{2,k_0}\left(\hat{I}_{1,2}\Gamma_{1,k_0|k_0-1}\hat{I}_{1,2}^T\right)\left(C^1_{2,k_0}\right)^T\\
&-C^1_{2,k_0}\left(\hat{I}_{1,2}\Gamma^{2}_{1,k_0|k_0-1}\hat{I}_{2,1}^T+\hat{I}_{2,1}\Gamma^{1}_{2,k_0|k_0-1}\hat{I}_{1,2}^T-\hat{I}_{2,1}\Gamma_{2,k_0|k_0-1}\hat{I}_{2,1}^T\right)\left(C^1_{2,k_0}\right)^T.
\end{split}
\end{equation}
\noindent \textbf{Step 1}: In this step we show that $\Gamma_{2,k_0|k_0-1}$ and $\Gamma^1_{2,k_0|k_0-1}=\left(\Gamma^2_{1,k_0|k_0-1}\right)^T$ are bounded for all $k_0>0$.

As shown in Lemma \ref{prop:GammaBound}, $\Gamma_{2,k|k}$ is bounded for all $k<k_0$, i.e., there exists $\nu_{\Gamma}>0$ such that $\left\|\Gamma_{2,k|k}\right\|<\nu_{\Gamma}$ for all $k<k_0$, and this $\nu_{\Gamma}$ is independent of $k_0$. This indicates that no matter when section 1 switches to the CC mode, the error covariance of section 2 always satisfies $\left\|\Gamma_{2,k|k}\right\|<\nu_{\Gamma}$ before section 1 becomes observable. The above explanation applies to all the scenarios in this proof when we state that a matrix is bounded for all $k<k_0$. The prior error covariance of section 2 at time $k_0$ is given by
\begin{align*}
\Gamma_{2,k_0|k_0-1}=A_{2,k_0-1}\Gamma_{2,k_0-1|k_0-1}A_{2,k_0-1}^{T}+Q_{2,k_0-1},
\end{align*}
where $A_{2,k_0-1}=A_{\text{CC}}$. Hence, the prior error covariance $\Gamma_{2,k_0|k_0-1}$ is bounded for all $k_0$ due to the assumption that $\left\|Q_{2,k_0-1}\right\|<q_2$.

When $k<k_0$, the consensus gain $C_{2,k}^1=\bm{0}$ since section 1 is not observable and the consensus term is turned off. For simplicity let $R_{2,k}^1=\bm{0}$ for $k<k_0$, which means that sections 1 and 2 do not share sensor data when $k<k_0$, thus fusing measurements obtained from non-intersecting sensor sets, and the noise models for different sensors are independent. In this case, the evolution of the cross-covariance $\Gamma^1_{2,k|k-1}$ is given by
\begin{equation}\label{eq:gamma12}
\Gamma^{1}_{2,k|k}=F_{2,k}\Gamma^{1}_{2,k|k-1}F^T_{1,k}=F_{2,k}A_{2,k-1}\Gamma^{1}_{2,k-1|k-1}A_{1,k-1}^T F^T_{1,k}+F_{2,k}Q^{1}_{2,k-1}F^T_{1,k}, \quad \text{for $k<k_0$,}
\end{equation}
where
\begin{equation}\label{eq:fk12}
F_{i,k}=I-K_{i,k}H_{i,k},\quad \text{and }K_{i,k}=\Gamma_{i,k|k-1}H_{i,k}^T\left(R_{i,k}+H_{i,k}\Gamma_{i,k|k-1}H_{i,k}^T\right)^{-1}\quad \text{for $i\in\{1,2\}$.}
\end{equation}

As shown in Proposition \ref{prop_o_bound} (cf. \eqref{eq:norm_exp}), when section 2 is observable and the consensus term is set to zero we have
\begin{equation}\label{eq:ftozero}
\left\|\prod_{\kappa=k}^1 F_{2,\kappa}A_{2,\kappa-1}\right\|\propto\tilde{q}^k, \quad \text{where $\tilde{q}<1$}.
\end{equation}
For section 1, the error dynamics when $k<k_0$ is given by
\begin{equation*}
\bm{\eta}_{1,k|k}=\left(\prod_{\kappa=k}^1 F_{1,\kappa}A_{1,\kappa-1}\right)\bm{\eta}_{1,0|0}.
\end{equation*}
As shown in Proposition \ref{prop_u_bound}, the estimation error $\bm{\eta}_{1,k|k}$ is bounded for $k<k_0$, hence $\left\|\prod_{\kappa=1}^k A_{1,\kappa-1}^T F_{1,\kappa}^T\right\|=\left\|\prod_{\kappa=k}^1 F_{1,\kappa}A_{1,\kappa-1}\right\|$ is bounded for all $k<k_0$ (provided that $\bm{\eta}_{1,k|k}\neq \bm{0}$). Moreover, since $\Gamma_{2,k|k}$ is bounded for all $k<k_0$, the prior error covariance $\Gamma_{2,k|k-1}$ is also bounded for $k<k_0$. Hence, it is concluded based on \eqref{eq:fk12} that $K_{2,k}$ and thus $F_{2,k}$ are bounded for all $k<k_0$. As for the unobservable section, it is shown in Lemma \ref{lem_u_bound_k} that the Kalman gain $K_{1,k}$ is bounded for $k<k_0$, thus $F_{1,k}$ is also bounded for $k<k_0$ according to \eqref{eq:fk12}. Consequently, the second term in \eqref{eq:gamma12} is bounded for $k<k_0$, combining this with \eqref{eq:ftozero} and the boundedness of $\left\|\prod_{\kappa=1}^k A_{2,\kappa-1}^T F_{2,\kappa}^T\right\|$, it is concluded that $\Gamma^{1}_{2,k|k}$ is bounded for all $k<k_0$. The cross-covariance $\Gamma^1_{2,k_0|k_0-1}$ is expressed as
\begin{align*}
\Gamma^1_{2,k_0|k_0-1}=A_{2,k_0-1}\Gamma^1_{2,k_0-1|k_0-1}A_{1,k_0-1}^{T}+Q^1_{2,k_0-1},
\end{align*}
which yields the boundedness of $\Gamma^1_{2,k_0|k_0-1}$ for all $k_0$.

\noindent \textbf{Step 2}: In this step we show that the first four terms in \eqref{eq:errCov} is bounded for all $k_0>0$.

As stated in equation (12) of \cite{SunWorkCONES2016}, the consensus gain is defined as $C_{2,k_0}^1=\gamma_{2,k_0}^1\Gamma_{2,k_0|k_0-1}\hat{I}^T_{2,1}$. According to Proposition \ref{Prop:ObservableGUAS}, the scaling factor $\gamma_{2,k_0}^1$ is bounded for all $k_0>0$, thus $C_{2,k_0}^1$ is bounded for all $k_0>0$. Based on \eqref{eq:f_k0}, we conclude the boundedness of $K_{2,k_0}$ for all $k_0>0$ given the boundedness of $\Gamma_{2,k_0|k_0-1}$, $\Gamma^1_{2,k_0|k_0-1}=\left(\Gamma^2_{1,k_0|k_0-1}\right)^T$ and $C_{2,k_0}^1$. As a consequence, $F_{2,k_0}$ is also bounded for all $k_0>0$ due to its relationship with $K_{2,k_0}$ shown in \eqref{eq:f_k0}.

Now we have illustrated that $\Gamma_{2,k_0|k_0-1}$, $\Gamma^1_{2,k_0|k_0-1}=\left(\Gamma^2_{1,k_0|k_0-1}\right)^T$, $C_{2,k_0}^1$, $K_{2,k_0}$ and $F_{2,k_0}$ are bounded for all $k_0>0$. Combining this with \eqref{eq:errCov}, it follows that the first four terms in \eqref{eq:errCov} is bounded for all $k_0>0$. Specifically, there exists $\nu_{1}>0$ such that
\begin{equation*}
\begin{split}
&\left\|F_{2,k_0}\Gamma_{2,k_0|k_0-1}F^T_{2,k_0}+K_{2,k_0}R_{2,k_0}K^T_{2,k_0}+F_{2,k_0}\left(\Gamma^1_{2,k_0|k_0-1}\hat{I}^T_{1,2}-\Gamma_{2,k_0|k_0-1}\hat{I}^T_{2,1}\right)\left(C^1_{2,k_0}\right)^T\right.\\
&\quad \left.+C^1_{2,k_0}\left(\hat{I}_{2,1}\Gamma^{2}_{1,k_0|k_0-1}-\hat{I}_{2,1}\Gamma_{2,k_0|k_0-1}\right)F^T_{2,k_0}\right\|<\nu_{1}, \quad \text{for all $k_0>0$.}
\end{split}
\end{equation*}

\noindent \textbf{Step 3}: In this step we show that the last term in \eqref{eq:errCov} can be unbounded as $k_0\rightarrow \infty$

First notice that based on Step 1 and Step 2, the second term in \eqref{eq:omega} is bounded for all $k_0>0$. Specifically, there exists $\nu_{2}>0$ such that
\begin{equation*}
\begin{split}
\left\|C^1_{2,k_0}\left(\hat{I}_{1,2}\Gamma^{2}_{1,k_0|k_0-1}\hat{I}_{2,1}^T+\hat{I}_{2,1}\Gamma^{1}_{2,k_0|k_0-1}\hat{I}_{1,2}^T-\hat{I}_{2,1}\Gamma_{2,k_0|k_0-1}\hat{I}_{2,1}^T\right)\left(C^1_{2,k_0}\right)^T\right\|<\nu_{2},\quad \text{for all $k_0>0$.}
\end{split}
\end{equation*}
Since section 1 is unobservable for $k\in[0,k_0)$, the error covariance $\left\|\Gamma_{1,k_0|k_0}\right\|\rightarrow \infty$ as $k_0\rightarrow \infty$. Hence, for all $\varpi_{\Gamma}>0$ there exists $k_0>0$ and $l\in\{2,\cdots,n-1\}$ such that the $l^{\text{th}}$ diagonal entry of $\Gamma_{1,k_0|k_0}$ is greater than $\varpi_{\Gamma}$. Let $n-n_{1,2}<l\le n-1$, it follows that $\left\|\hat{I}_{1,2}\Gamma_{1,k_0|k_0-1}\hat{I}_{1,2}^T\right\|>\varpi_{\Gamma}$. The consensus gain of section 2 satisfies
\begin{equation*}
C_{2,k_0}^1=\gamma_{2,k_0}^1\Gamma_{2,k_0|k_0-1}\hat{I}^T_{2,1},\quad \text{where}\quad \Gamma_{2,k_0|k_0-1}>Q_{k_0-1}>q_1 I,
\end{equation*}
it follows that
\begin{align*}
\left\|C^1_{2,k_0}\left(\hat{I}_{1,2}\Gamma_{1,k_0|k_0-1}\hat{I}_{1,2}^T\right)\left(C^1_{2,k_0}\right)^T\right\|>O(\varpi_{\Gamma}).
\end{align*}
Consequently, the error covariance $\Gamma_{2,k_0|k_0}$ satisfies
\begin{align*}
\left\|\Gamma_{2,k_0|k_0}\right\|>O(\varpi_{\Gamma})-\nu_1-\nu_2.
\end{align*}
Since $\varpi_{\Gamma}$ can be any positive value, it follows that for all $\bar{\varpi}>0$, there exists $k_0>0$ such that $\left\|\Gamma_{2,k_0|k_0}\right\|>\bar{\varpi}$, which is independent of the initial error covariance $\Gamma_{2,0|0}$. This completes the proof.
\end{proof}

\newpage
For the remainder of the Appendix, the section index $i$ is dropped for notational simplicity.

\subsection{Proof of Lemma \ref{lem:GammaBound}}\label{ap:proof_lemma_GammaBound}

The proof can be done by showing the uniform complete observability and controllability of the filter under switches among the observable modes. When a freeway section switches among the observable modes of the SMM, the information matrix for time interval $k\in [k_0,k_1]$ is defined as
\begin{align*}
\mathcal{I}_{k_1,k_0}=\sum_{k=k_0}^{k_1}\Xi^T_{k,k_1}H_{k}^TR_{k}^{-1}H_{k}\Xi_{k,k_1},
\end{align*}
where
\begin{align}\label{eq:xi}
\Xi_{k,k_1}=\prod_{\kappa=k}^{k_1-1}A^{-1}_{\kappa},\quad \text{and}\quad \Xi_{k_1,k}=\Xi^{-1}_{k,k_1}=\prod_{\kappa=k_1-1}^{k}A_{\kappa},
\end{align}
with
\begin{align}\notag
A_{\kappa}\in\mathcal{A}_{\text{O}}, \quad \text{for $k_0\le \kappa<k_1$.}
\end{align}
The controllability matrix for time interval $k\in [k_0,k_1]$ is defined as
\begin{align*}
\mathcal{C}_{k_1,k_0}=\sum_{k=k_0}^{k_1-1}\Xi_{k_1,k+1}Q_{k+1}\Xi_{k_1,k+1}^T.
\end{align*}

\noindent \textbf{Step 1}: Deriving the uniform complete observability of the filter. In order to show the uniform complete observability, we first need to find a finite integer $\bar{k}>0$ such that $\text{rank}\left(\mathcal{I}_{k_1,k_0}\right)=n$ for all $k_1-k_0=\bar{k}$.
The observability grammian matrix is defined as
\begin{align*}
\mathcal{G}_{k_1,k_0}=\sum_{k=k_0}^{k_1}\Xi^T_{k,k_0}H_k^TR_k^{-1}H_k\Xi_{k,k_0}=\Xi_{k_1,k_0}^T\mathcal{I}_{k_1,k_0}\Xi_{k_1,k_0}.
\end{align*}
Hence, the observability matrix and the information matrix have the same rank, and it is sufficient to find a $\bar{k}$ such that $\text{rank}\left(\mathcal{G}_{k_1,k_0}\right)=n$ for all $k_1-k_0=\bar{k}$.
Since at least the traffic densities at the boundary cells are measured, the output matrix has the following formula:
\begin{equation}\notag
\begin{split}
H_k=&\left(
  \begin{array}{c}
    H_{\text{b}}\\
    \bar{H}_{k}
  \end{array}
\right),
\end{split}
\end{equation}
where $H_{\text{b}}$ is defined in \eqref{eq:h_b}. Hence the observability grammian matrix satisfies
\begin{equation}\label{eq:g_gtilde}
\mathcal{G}_{k_1,k_0}=\sum_{k=k_0}^{k_1}\Xi^T_{k,k_0}H_k^TR_k^{-1}H_k\Xi_{k,k_0}>r_2^{-1}\sum_{k=k_0}^{k_1}\Xi^T_{k,k_0}H_k^TH_k\Xi_{k,k_0}\ge r_2^{-1} \sum_{k=k_0}^{k_1}\Xi^T_{k,k_0}H_{\text{b}}^TH_{\text{b}}\Xi_{k,k_0}=r_2^{-1}\tilde{\mathcal{G}}_{k_1,k_0}^T\tilde{\mathcal{G}}_{k_1,k_0},
\end{equation}
where
\begin{equation}\label{eq:tilde_mathcal_G}
\begin{split}
\tilde{\mathcal{G}}_{k_1,k_0}=&\left(
  \begin{array}{c}
    H_{\text{b}}\\
    H_{\text{b}}A_{k_0}\\
    H_{\text{b}}A_{k_0+1}A_{k_0}\\
    \vdots\\
    H_{\text{b}}\prod_{k=k_1-1}^{k_0}A_{k}
  \end{array}
\right).
\end{split}
\end{equation}
It can be shown after some basic linear algebra that
\begin{align}
H_{\text{b}}\prod_{k=k_0+\iota-1}^{k_0}A_{k}=\left(
  \begin{array}{ccccccc}
    \tilde{g}^{\iota}_{1,1}&\cdots& \tilde{g}^{\iota}_{1,\iota_1}&0&0&\cdots&0\\
    0&\cdots&0& 0&\tilde{g}^{\iota}_{2,n-\iota_2+1}&\cdots& \tilde{g}^{\iota}_{2,n}
  \end{array}
\right)\in\mathbb{R}^{2\times n},\label{eq:tilde_mathcal_G_row}
\end{align}
where $\iota_1\le \iota+1$, $\iota_2\le \iota+1$ and $\iota_{1}+\iota_{2} \geq \iota+2$, the elements $\tilde{g}^{\iota}_{\cdot,\cdot}$ are functions of $A_k\in \mathcal{A}_{\text{O}}$ for $k \in [k_0,k_0+\iota)$. Recall that $M(r,c)$ is the $(r,c)$-th entry of matrix $M$. Since
\begin{align*}
\sum_{c=1}^{n} M(r,c) \in \left\{1, \frac{v_{\text{m}} \Delta t}{\Delta x}, \frac{w\Delta t}{\Delta x}\right\} \quad\textrm{for all $M\in \mathcal{A}_{\text{O}}$,}
\end{align*}
we have
\begin{align}\label{eq:tilde_g_lb}
\underline{\theta}^{\iota}\le\tilde{g}^{\iota}_{1,\kappa_1}\le \bar{\theta},\quad \underline{\theta}^{\iota} \le \tilde{g}_{2,n-\kappa_2+1}\le \bar{\theta} \quad\textrm{for $\kappa_1\in\{1,\cdots,\iota_1\}$ and $\kappa_2\in\{1,\cdots,\iota_2\}$,}
\end{align}
where $\underline{\theta}=\min\{v_{\text{m}}\frac{\Delta t}{\Delta x}, 1-v_{\text{m}}\frac{\Delta t}{\Delta x}, w\frac{\Delta t}{\Delta x}, 1-w\frac{\Delta t}{\Delta x}\}$ and $\bar{\theta}=\max_{r,c\in\{1,\cdots,n\}}\left\{M(r,c)\left|M\in\mathcal{A}_{\text{O}} \right.\right\}=1$.
Consequently, the rank of the observability grammian matrix and the rank of the information matrix satisfy
\begin{align}
\textrm{rank}\left(\tilde{\mathcal{G}}_{k_1,k_0}\right)=n=\textrm{rank}\left(\mathcal{G}_{k_1,k_0}\right)=\textrm{rank}\left(\mathcal{I}_{k_1,k_0}\right), \quad \textrm{when $k_1-k_0\geq T_1=\max\left\{1, n-2\right\} $.}\notag
\end{align}
Hence, it can be concluded that $\mathcal{I}_{k,k-T_1}>\bm{0}$ for all $k\ge T_1$. Consequently,
\begin{align}\label{eq:infor_bound_a}
\mathcal{I}_{k,k-T_1}=\sum_{\iota=k-T_1}^{k}\Xi^T_{\iota,k}H_{\iota}^TR_{\iota}^{-1}H_{\iota}\Xi_{\iota,k}> r_2^{-1}\sum_{\iota=k-T_1}^{k}\Xi^T_{\iota,k}H_{\text{b}}^TH_{\text{b}}\Xi_{\iota,k}\ge a_{\mathcal{I}}I>\bm{0} ,\quad \text{for $k \ge T_1$,}
\end{align}
where $a_{\mathcal{I}}>0$ is defined in \eqref{eq:a_b_I}.

As stated in Section \ref{subsec:DLKCF} of \cite{SunWorkCONES2016}, the sensors are spatially distributed and measure the traffic densities at their locations. Hence, the $p^{\text{th}}$ row of the output matrix $H_k\in\mathbb{R}^{m\times n}$ is given by:
\begin{equation*}
H_k(p,:)=\left(0,\cdots,0,1,0,\cdots,0\right),
\end{equation*}
where the location of the entry 1 is the same as the location where the $p^{\text{th}}$ sensor is placed, i.e., if the $p^{\text{th}}$ sensor is located at the $l^{\text{th}}$ cell of the section, than the $l^{\text{th}}$ column of $H_k(r,:)$ is 1. Consequently, the output matrix satisfies
\begin{equation}\label{eq:H_k_ub}
H_k^TH_k<I^TI=I,
\end{equation}
where $H_k=I$ corresponds to the scenario when every cell of the section is measured. Consequently, the information matrix satisfies
\begin{align}\label{eq:infor_bound_b}
\mathcal{I}_{k,k-T_1}=\sum_{\iota=k-T_1}^{k}\Xi^T_{\iota,k}H_{\iota}^TR_{\iota}^{-1}H_{\iota}\Xi_{\iota,k}< r_1^{-1}\sum_{\iota=k-T_1}^{k}\Xi^T_{\iota,k}\Xi_{\iota,k}\le b_{\mathcal{I}}I, \quad \text{for $k \ge T_1$,}
\end{align}
where $b_{\mathcal{I}}>0$ is defined in \eqref{eq:a_b_I}. Combining \eqref{eq:infor_bound_a} and \eqref{eq:infor_bound_b}, we obtain
\begin{align*}
a_{\mathcal{I}}I < \mathcal{I}_{k,k-T_1} < b_{\mathcal{I}}I, \quad \text{for all $k \ge T_1=\max\left\{1, n-2\right\}$.}
\end{align*}

\noindent \textbf{Step 2}: Deriving the uniform complete controllability of the filter. For $k \ge T_1=\max\left\{1, n-2\right\}$, the controllability matrix is given by
\begin{align*}
\mathcal{C}_{k,k-T_1}=\sum_{\iota=k-T_1}^{k-1}\Xi_{k,\iota+1}Q_{\iota+1}\Xi_{k,\iota+1}^T>\bm{0}.
\end{align*}
It follows that
\begin{align*}
\bm{0}<a_{\mathcal{C}}I < \mathcal{C}_{k,k-T_1} < b_{\mathcal{C}}I, \quad \text{for all $k \ge T_1=\max\left\{1, n-2\right\}$,}
\end{align*}
with $0<a_{\mathcal{C}}< b_{\mathcal{C}}$ defined in \eqref{eq:a_b_C}.

\noindent \textbf{Step 3}: Deriving the bounds for the inverse of the error covariance. Let $a=\min\{a_{\mathcal{I}}, a_{\mathcal{C}}\}>0$, and $b=\max\{b_{\mathcal{I}}, b_{\mathcal{C}}\}> a>0$, combining Steps 1 and 2 we obtain
\begin{align*}
a I < \mathcal{I}_{k,k-T_1} < b I, \quad a I < \mathcal{C}_{k,k-T_1} < bI, \quad \text{for all $k \ge T_1=\max\left\{1, n-2\right\}$.}
\end{align*}
Based on the assumption that $\Gamma_{0|0}>\bm{0}$, it is concluded according to Lemma 7.1 and Lemma 7.2 in \cite{Jazwinski1970} that
\begin{align*}
\bm{0}<\left(\frac{a}{1+ab}\right) I < \Gamma_{k|k} < \left(\frac{1+ab}{a}\right) I, \quad \text{for all $k \ge T_1=\max\left\{1, n-2\right\}$.}
\end{align*}
Let $c_1=\frac{a}{1+ab}>0$ and $c_2=\frac{1+ab}{a} > c_1 >0$, the inverse of the error covariance satisfies
\begin{align*}
\bm{0}<c_1 I < \Gamma_{k|k}^{-1} < c_2 I, \quad \text{for all $k \ge T_1=\max\left\{1, n-2\right\}$,}
\end{align*}
which concludes the proof.

\subsection{Proof of Corollary \ref{cor:bound_gamma_analytical}}\label{ap:cor_bound}
\noindent \textbf{Step 1}: In this step, we derive a lower bound for $a_{\mathcal{C}}$, and an upper bound for $b_{\mathcal{C}}$. Define
\begin{equation*}
\bar{e}_{\iota}=\max_{M_{\kappa} \in \mathcal{A}_{\text{O}}}\left\{\sigma_{\max}\left(\prod_{\kappa=\iota-1}^{0}M_{\kappa}\prod_{\kappa=0}^{\iota-1}M^T_{\kappa}\right)\right\}, \quad \underline{e}_{\iota}=\min_{M_{\kappa} \in \mathcal{A}_{\text{O}}}\left\{\sigma_{\min}\left(\prod_{\kappa=\iota-1}^{0}M_{\kappa}\prod_{\kappa=0}^{\iota-1}M^T_{\kappa}\right)\right\}, \quad \text{for all $\iota \in\mathbb{Z}^+$},
\end{equation*}
where $\sigma_{\max}(\cdot)$ and $\sigma_{\min}(\cdot)$ are the maximum and minimum singular values of a matrix. Hence
\begin{align}\label{eq:b_c_ub}
a_{\mathcal{C}} \ge q_1\left( 1+\sum _{\iota=1}^{T_1-1}\underline{e}_{\iota}\right), \quad b_{\mathcal{C}} \le q_2\left( 1+\sum _{\iota=1}^{T_1-1}\bar{e}_{\iota}\right),
\end{align}
where $T_1=\max\left\{1, n-2\right\}$.
Due to the facts that
\begin{align*}
\sum_{c=1}^{n}M(r,c)\le 1,\quad\textrm{for all $M\in\mathcal{A}_{\text{O}}$ and $r\in\{1,2,\cdots,n\}$,}
\end{align*}
and
\begin{align*}
M(r,c)=0,\quad\textrm{for all $M\in\mathcal{A}_{\text{O}}$ when $|r-c|>1$,}\label{eq:sumAone01}
\end{align*}
it is concluded that the $(r,c)$-th entry of $\prod_{\kappa=\iota-1}^{0}M_{\kappa}$ satisfies
\begin{equation*}
\left(\prod_{\kappa=\iota-1}^{0}M_{\kappa}\right) \left(r,c\right) \left\{ \begin{array}{ll}
=0, & \textrm{if $|r-c|>\iota+1$,}\\
\le \bar{\theta}=1, & \textrm{otherwise.}
\end{array} \right.
\end{equation*}
Hence the diagonal entries of $\prod_{\kappa=\iota-1}^{0}M_{\kappa}\prod_{\kappa=0}^{\iota-1}M^T_{\kappa}$ satisfies
\begin{equation*}
\left(\prod_{\kappa=\iota-1}^{0}M_{\kappa}\prod_{\kappa=0}^{\iota-1}M^T_{\kappa}\right) \left(r,r\right) \le \left(2\iota+1\right)\bar{\theta}^2=2\iota+1,\quad\textrm{for all $r\in\{1,2,\cdots,n\}$,}
\end{equation*}
thus
\begin{equation}\label{eq:sigma_max_e}
\sigma_{\max}\left(\prod_{\kappa=\iota-1}^{0}M_{\kappa}\prod_{\kappa=0}^{\iota-1}M^T_{\kappa}\right) < 2(2\iota+1).
\end{equation}
Consequently, it is concluded that
\begin{equation}\label{eq:bar_e_ub}
\bar{e}_{\iota}<2(2\iota+1).
\end{equation}
We express a lower bound for $\underline{e}_{\iota}$ using the result in \cite{piazza2002upper}, i.e.,
\begin{equation}\label{eq:sigma_min_lb}
\begin{split}
\sigma_{\min}\left(M\right)\ge \frac{\left|\det M\right|}{2^{\left(n-2\right)/2}\left\|M\right\|_{\text{F}}},\quad \text{for $M\in \mathbb{R}^{n \times n}$ with positive singular values,}
\end{split}
\end{equation}
where $\|\cdot\|_{\text{F}}$ is the Frobenius norm. The determinant of $\prod_{\kappa=\iota-1}^{0}M_{\kappa}\prod_{\kappa=0}^{\iota-1}M^T_{\kappa}$ satisfies
\begin{equation}\label{eq:det_e_lb}
\det \left(\prod_{\kappa=\iota-1}^{0}M_{\kappa}\prod_{\kappa=0}^{\iota-1}M^T_{\kappa}\right)=\left(\prod_{\kappa=\iota-1}^{0}\det M_{\kappa}\right)^2 \ge \tilde{\theta}^{2\iota n},
\end{equation}
and its Frobenius norm satisfies
\begin{equation}\label{eq:frob_e_lb}
\left\|\prod_{\kappa=\iota-1}^{0}M_{\kappa}\prod_{\kappa=0}^{\iota-1}M^T_{\kappa}\right\|^2_{\text{F}}\le n \sigma^2_{\max}\left(\prod_{\kappa=\iota-1}^{0}M_{\kappa}\prod_{\kappa=0}^{\iota-1}M^T_{\kappa}\right)<4n(2\iota+1)^2,
\end{equation}
where the last inequality is due to \eqref{eq:sigma_max_e}. Substituting \eqref{eq:det_e_lb} and \eqref{eq:frob_e_lb} into \eqref{eq:sigma_min_lb}, we obtain
\begin{equation}\label{eq:bar_e_lb}
\underline{e}_{\iota}>\frac{\tilde{\theta}^{2\iota n}}{\sqrt{4n\left(2\iota+1\right)^22^{n-2}}}.
\end{equation}
Substituting \eqref{eq:bar_e_lb} and \eqref{eq:bar_e_ub} into \eqref{eq:b_c_ub} yields
\begin{equation}\label{eq:a_c_lb1}
\begin{split}
&a_{\mathcal{C}} \ge q_1\left(1+\sum_{\iota=1}^{T_1-1}\underline{e}_{\iota}\right)>q_1\left(1+\sum_{\iota=1}^{T_1-1}\frac{\tilde{\theta}^{2\iota n}}{\sqrt{4n\left(2\iota+1\right)^22^{n-2}}}\right)>q_1\left(1+\sum_{\iota=1}^{T_1-1}\frac{\tilde{\theta}^{2\iota n}}{\sqrt{4n\left(2T_1-1\right)^22^{n-2}}}\right)\\
&\quad =q_1\left(1+\frac{\tilde{\theta}^{2n}\left(1-\tilde{\theta}^{2n\left(T_1-1\right)}\right)}{\left(1-\tilde{\theta}^{2n}\right)\sqrt{4n\left(2T_1-1\right)^22^{n-2}}}\right),\\
&b_{\mathcal{C}} < q_2\left(2T_1^2-1\right).
\end{split}
\end{equation}

\noindent \textbf{Step 2}: We now derive a lower bound for $a_{\mathcal{I}}$, and an upper bound for $b_{\mathcal{I}}$. Due to the definition of $a_{\mathcal{I}}$ in \eqref{eq:a_b_I},
\begin{align}\label{eq:grammian_information}
a_{\mathcal{I}}\ge \lambda_{\min}\left( r_2^{-1}\Xi^T_{k-T_1,k}\tilde{\mathcal{G}}_{k,k-T_1}^T\tilde{\mathcal{G}}_{k,k-T_1}\Xi_{k-T_1,k}\right)\ge r_2^{-1}\lambda_{\min}\left(\tilde{\mathcal{G}}_{k,k-T_1}^T\tilde{\mathcal{G}}_{k,k-T_1}\right)\Xi_{k-T_1,k}^T\Xi_{k-T_1,k},
\end{align}
hence, we study the lower bounds for $\tilde{\mathcal{G}}_{k,k-T_1}^T\tilde{\mathcal{G}}_{k,k-T_1}$ and $\Xi_{k-T_1,k}^T\Xi_{k-T_1,k}$ individually, and then combine them together.

%Note that
%\begin{equation}\label{eq:grammian_r_lambda_ub}
%\begin{split}
%\mathcal{G}_{k,k-n+2}<r_1^{-1}\tilde{\mathcal{G}}_{k,k-n+2}^T\tilde{\mathcal{G}}_{k,k-n+2} \le r_1^{-1}\lambda_{\max}\left(\tilde{\mathcal{G}}_{k,k-n+2}^T\tilde{\mathcal{G}}_{k,k-n+2}\right) I,
%\end{split}
%\end{equation}
%it is necessary to find an upper bound for $\lambda_{\max}\left(\tilde{\mathcal{G}}_{k,k-n+2}^T\tilde{\mathcal{G}}_{k,k-n+2}\right)$. Given the structure of $\tilde{\mathcal{G}}_{k,k-n+2}$ shown in \eqref{eq:tilde_mathcal_G} and \eqref{eq:tilde_mathcal_G_row}, and the values of its elements discussed in \eqref{eq:tilde_g_lb}, it is concluded that
%\begin{equation}
%\lambda_{\max}\left(\tilde{\mathcal{G}}_{k,k-n+2}^T\tilde{\mathcal{G}}_{k,k-n+2}\right) \le 2\left(\left(n-2\right)\bar{\theta}^2+1\right)=2(n-1),
%\end{equation}
%hence
%\begin{equation}\label{eq:grammian_r_lambda_ub1}
%\begin{split}
%\mathcal{G}_{k,k-n+2}< 2r_1^{-1}\left(n-1\right)I.
%\end{split}
%\end{equation}

The lower bound for $\mathcal{G}_{k,k-T_1}$ can be derived as follows. Given the structure of $\tilde{\mathcal{G}}_{\cdot,\cdot}$ shown in \eqref{eq:tilde_mathcal_G} and \eqref{eq:tilde_mathcal_G_row}, we can pick a subset of $n$ rows of $\tilde{\mathcal{G}}_{k,k-T_1}$, i.e.,  $\mathcal{R}\subset \left\{1,2,3\cdots,2\left(T_1+1\right)\right\}$ and $|\mathcal{R}|=n$, such that removing the rows of $\tilde{\mathcal{G}}_{k,k-T_1}$ not in the selected subset yields a full rank matrix
\begin{equation*}
\begin{split}
\tilde{\mathcal{G}}_{k,k-T_1}\left(\mathcal{R},:\right)=&\left(
  \begin{array}{cc}
    \tilde{\mathcal{G}}^1_{k,k-T_1}&\\
    &\tilde{\mathcal{G}}^2_{k,k-T_1}
  \end{array}
\right)\in\mathbb{R}^{n \times n},
\end{split}
\end{equation*}
where $\tilde{\mathcal{G}}^1_{k,k-T_1}$ is a lower triangular matrix, and $\tilde{\mathcal{G}}^2_{k,k-T_1}$ is an upper triangular matrix. It follows that
\begin{equation*}
\begin{split}
\tilde{\mathcal{G}}_{k,k-T_1}^T\tilde{\mathcal{G}}_{k,k-T_1}\ge \tilde{\mathcal{G}}_{k,k-T_1}^T\left(\mathcal{R},:\right)\tilde{\mathcal{G}}_{k,k-T_1}\left(\mathcal{R},:\right)\ge \lambda_{\min}\left(\tilde{\mathcal{G}}_{k,k-T_1}^T\left(\mathcal{R},:\right)\tilde{\mathcal{G}}_{k,k-T_1}\left(\mathcal{R},:\right)\right)I.
\end{split}
\end{equation*}
Since $\tilde{\mathcal{G}}_{k,k-T_1}^T\left(\mathcal{R},:\right)\tilde{\mathcal{G}}_{k,k-T_1}\left(\mathcal{R},:\right)$ is a real symmetric matrix with non-zero eigenvalues, its singular values and eigenvalues coincide. We express a lower bound for $\lambda_{\min}\left(\tilde{\mathcal{G}}_{k,k-T_1}^T\left(\mathcal{R},:\right)\tilde{\mathcal{G}}_{k,k-T_1}\left(\mathcal{R},:\right)\right)$ using \eqref{eq:sigma_min_lb}. Since $\tilde{\mathcal{G}}^1_{k,k-T_1}$ and $\tilde{\mathcal{G}}^2_{k,k-T_1}$ are triangular matrices with elements satisfying \eqref{eq:tilde_g_lb},
\begin{equation}\label{eq:tilde_G_det}
\det\left(\tilde{\mathcal{G}}_{k,k-T_1}^T\left(\mathcal{R},:\right)\tilde{\mathcal{G}}_{k,k-T_1}\left(\mathcal{R},:\right)\right)=\left(\det\left(\tilde{\mathcal{G}}_{k,k-T_1}\left(\mathcal{R},:\right)\right)\right)^2\ge \left(\prod_{\iota=1}^{T_1}\underline{\theta}^{\iota}\right)^2=\underline{\theta}^{T_1\left(T_1+1\right)}.
\end{equation}
Now we express the upper bound for the Frobenius norm of $\tilde{\mathcal{G}}_{k,k-T_1}^T\left(\mathcal{R},:\right)\tilde{\mathcal{G}}_{k,k-T_1}\left(\mathcal{R},:\right)$. One may note that
%\Ye{explain the first row of $\tilde{\mathcal{G}}_{k,k-n+2}^T\left(\mathcal{R},:\right)\tilde{\mathcal{G}}_{k,k-n+2}\left(\mathcal{R},:\right)$ is $(1,0,\cdots,0)$}
\begin{equation*}
\lambda_{\max}\left(\tilde{\mathcal{G}}_{k,k-T_1}^T\left(\mathcal{R},:\right)\tilde{\mathcal{G}}_{k,k-T_1}\left(\mathcal{R},:\right)\right)\le \lambda_{\max}\left(\tilde{\mathcal{G}}_{k,k-T_1}^T\tilde{\mathcal{G}}_{k,k-T_1}\right) \le 2\left(T_1\bar{\theta}^2+1\right)=2\left(T_1+1\right),
\end{equation*}
where the last inequality is due to the structure of $\tilde{\mathcal{G}}_{k,k-T_1}$ shown in \eqref{eq:tilde_mathcal_G} and \eqref{eq:tilde_mathcal_G_row}, and the values of its entries discussed in \eqref{eq:tilde_g_lb}. It follows that
\begin{equation}\label{eq:tilde_G_F}
\left\|\tilde{\mathcal{G}}_{k,k-T_1}^T\left(\mathcal{R},:\right)\tilde{\mathcal{G}}_{k,k-T_1}\left(\mathcal{R},:\right)\right\|^2_{\text{F}}\le n \lambda_{\max}^2\left(\tilde{\mathcal{G}}_{k,k-T_1}^T\left(\mathcal{R},:\right)\tilde{\mathcal{G}}_{k,k-T_1}\left(\mathcal{R},:\right)\right)\le 4n(T_1+1)^2.
\end{equation}
Substituting \eqref{eq:tilde_G_det} and \eqref{eq:tilde_G_F} into \eqref{eq:sigma_min_lb}, we obtain
\begin{equation}\label{eq:lambda_min_tilde_G}
\begin{split}
\lambda_{\min}\left(\tilde{\mathcal{G}}_{k,k-T_1}^T\tilde{\mathcal{G}}_{k,k-T_1}\right)&\ge \lambda_{\min}\left(\tilde{\mathcal{G}}_{k,k-T_1}^T\left(\mathcal{R},:\right)\tilde{\mathcal{G}}_{k,k-T_1}\left(\mathcal{R},:\right)\right)\\
&=\sigma_{\min}\left(\tilde{\mathcal{G}}_{k,k-T_1}^T\left(\mathcal{R},:\right)\tilde{\mathcal{G}}_{k,k-T_1}\left(\mathcal{R},:\right)\right)\ge  \frac{\underline{\theta}^{T_1\left(T_1+1\right)}}{\sqrt{4n\left(T_1+1\right)^22^{n-2}}}.
\end{split}
\end{equation}
Given \eqref{eq:bar_e_ub}, it holds that
\begin{align}\label{eq:xi_lb}
\frac{1}{2\left(2T_1+1\right)}I< \bar{e}_{T_1}^{-1}I\le\left(\Xi_{k,k-T_1}\Xi_{k,k-T_1}^T\right)^{-1}=\Xi_{k-T_1,k}^T\Xi_{k-T_1,k}.
\end{align}
Substituting \eqref{eq:lambda_min_tilde_G} and \eqref{eq:xi_lb} into \eqref{eq:grammian_information}, we obtain
\begin{align*}
a_{\mathcal{I}}>\frac{r_2^{-1}\underline{\theta}^{T_1\left(T_1+1\right)}}{2\left(2T_1+1\right)\sqrt{4n\left(T_1+1\right)^22^{n-2}}}.
\end{align*}

Now we derive an upper bound for $b_{\mathcal{I}}$. One may note that
\begin{align*}
\left(\prod_{\kappa=\iota}^{T_1}M_{\kappa}^{-1}\right)^T\left(\prod_{\kappa=\iota}^{T_1}M_{\kappa}^{-1}\right)=\left(\prod_{\kappa=T_1}^{\iota}M_{\kappa}\prod_{\kappa=\iota}^{T_1}M_{\kappa}^{T}\right)^{-1}<\underline{e}_{T_1-\iota+1}^{-1}, \quad{\text{for $l\in\{1,2,\cdots, T_1\}$,}}
\end{align*}
it follows that
\begin{equation*}
\begin{split}
b_{\mathcal{I}}\le r_1^{-1}\left(1+\sum_{\iota=1}^{T_1}\underline{e}_{\iota}^{-1}\right)&<r_1^{-1}\left(1+\sum_{\iota=1}^{T_1} \frac{\sqrt{4n\left(2\iota+1\right)^22^{n-2}}}{\tilde{\theta}^{2\iota n}}\right)\\
&<r_1^{-1}\left(1+\sum_{\iota=1}^{T_1} \frac{\left(2\iota+1\right)\sqrt{4n2^{n-2}}}{\tilde{\theta}^{2T_1n}}\right)=r_1^{-1}\left(1+ \frac{T_1\left(T_1+2\right)\sqrt{4n2^{n-2}}}{\tilde{\theta}^{2T_1n}}\right),
\end{split}
\end{equation*}
where the second inequality is due to \eqref{eq:bar_e_lb}. Hence, the lower bound for $a_{\mathcal{I}}$ and the upper bound for $b_{\mathcal{I}}$ read:
\begin{align}\label{eq:a_I_lb1}
a_{\mathcal{I}}> \frac{r_2^{-1}\underline{\theta}^{T_1\left(T_1+1\right)}}{2\left(2T_1+1\right)\sqrt{4n\left(T_1+1\right)^22^{n-2}}}, \quad b_{\mathcal{I}}<r_1^{-1}\left(1+ \frac{T_1\left(T_1+2\right)\sqrt{4n2^{n-2}}}{\tilde{\theta}^{2T_1n}}\right).
\end{align}

\noindent \textbf{Step 3}: Based on \eqref{eq:a_c_lb1}, \eqref{eq:a_I_lb1} and recall that $a=\min\{a_{\mathcal{I}},a_{\mathcal{C}}\}$, $b=\max\{b_{\mathcal{I}},b_{\mathcal{C}}\}$, we have
\begin{align*}
&a>\underline{a}=\min\left\{q_1\left(1+\frac{\tilde{\theta}^{2n}\left(1-\tilde{\theta}^{2n\left(T_1-1\right)}\right)}{\left(1-\tilde{\theta}^{2n}\right)\sqrt{4n\left(2T_1-1\right)^22^{n-2}}}\right),\quad \frac{r_2^{-1}\underline{\theta}^{T_1\left(T_1+1\right)}}{2\left(2T_1+1\right)\sqrt{4n\left(T_1+1\right)^22^{n-2}}}\right\},\\
&b<\bar{b}=\max\left\{q_2\left(2T_1^2-1\right),\quad r_1^{-1}\left(1+ \frac{T_1\left(T_1+2\right)\sqrt{4n2^{n-2}}}{\tilde{\theta}^{2T_1n}}\right)\right\}.
\end{align*}
Consequently,
\begin{equation*}
\frac{\underline{a}}{1+\underline{a}\bar{b}}I<\frac{a}{1+ab}I=c_1 I < \Gamma_{k|k}^{-1} < c_2 I=\frac{1+ab}{a}I < \frac{1+\underline{a}\bar{b}}{\underline{a}}I, \quad \text{for all $k \ge T_1=\max\left\{1, n-2\right\}$,}
\end{equation*}
which completes the proof.

\subsection{Proof of Lemma \ref{lem:GammaBound1}}\label{ap:GammaBound1_proof}

\noindent \textbf{Step 1}: In this step, we derive an upper bound of $\Gamma_{k|k}$ for $k\in\left[0,\max\left\{1,n-2\right\}\right)$. When $0\le k < \max\left\{1,n-2\right\}$, the error covariance matrix satisfies
\begin{equation*}
\begin{split}
\Gamma_{k|k}& \le \Gamma_{k|k-1}=A_{k-1}\Gamma_{k-1|k-1}A_{k-1}^T+Q_{k-1}\\
& \le \left(\prod_{\kappa=k-1}^{0}A_{\kappa}\right)\Gamma_{0|0} \left(\prod_{\kappa=0}^{k-1}A_{\kappa}^T\right)+Q_{k-1}+\sum_{\iota=1}^{k-1}\left(\prod_{\kappa=k-1}^{\iota}A_{\kappa}\right)Q_{\iota-1} \left(\prod_{\kappa=\iota}^{k-1}A_{\kappa}^T\right)\\
&=\Xi_{k,0}\Gamma_{0|0}\Xi_{k,0}^T+Q_{k-1}+\sum_{\iota=1}^{k-1}\Xi_{k,\iota}Q_{\iota-1}\Xi_{k,\iota}^T\\
&< \left(\bar{e}_{k}\left\|\Gamma_{0|0}\right\|+q_2+\sum_{\iota=1}^{k-1}\bar{e}_{k-\iota}q_2\right)I\\
&< \left(2\left(2k+1\right)\left\|\Gamma_{0|0}\right\|+q_2+2\sum_{\iota=1}^{k-1} \left(2\left(k-\iota\right)+1\right)q_2\right)I\quad \text{(due to \eqref{eq:bar_e_ub})}\\
&= \left(2\left(2k+1\right)\left\|\Gamma_{0|0}\right\|+\left(2k^2-1\right)q_2\right)I.
\end{split}
\end{equation*}
where $\Xi_{\cdot,\cdot}$ is defined in \eqref{eq:xi}. Hence, the error covariance satisfies
\begin{equation*}
\Gamma_{k|k}\le \left\{ \begin{array}{ll}
\left(2\left(2n-5\right)\left\|\Gamma_{0|0}\right\|+\left(2\left(n-3\right)^2-1\right)q_2\right)I & \textrm{if $n\ge 4$}\\
\left\|\Gamma_{0|0}\right\|I & \textrm{if $2\le n<4$.}
\end{array} \right.
\end{equation*}
\textbf{Step 2}: In this step, we derive a lower bound of $\Gamma_{k|k}$ for $k\in\left[0,\max\left\{1,n-2\right\}\right)$. Consider a matrix sequence computed as follows
\begin{align*}
\breve{\Gamma}_{k+1|k}=\breve{A}_k\left(\breve{\Gamma}_{k|k-1}-\breve{\Gamma}_{k|k-1}\breve{H}_k^{T}(\breve{H}_k\breve{\Gamma}_{k|k-1}\breve{H}_k^{T}+\breve{R}_k)^{-1}\breve{H}_k\breve{\Gamma}_{k|k-1}\right)\breve{A}_{k}^{T}+\check{Q}_k,
\end{align*}
where
\begin{align*}
\breve{A}_k=A_k,\quad \breve{H}_k=H_k,\quad \breve{R}_k=R_k, \quad \breve{Q}_k=Q_k, \quad \text{for all $k\ge 0$,}\quad \text{and $\breve{\Gamma}_{0|-1}=\bm{0}$.}
\end{align*}
Due to Lemma 6.2 in \cite{ChuiChenKF}, it holds that $\Gamma_{k+1|k}>\breve{\Gamma}_{k+1|k}$ if $\Gamma_{k|k-1}>\breve{\Gamma}_{k|k-1}$. By definition we have $\Gamma_{0|-1}\ge \Gamma_{0|0}>\breve{\Gamma}_{0|-1}=\bm{0}$, which implies that $\breve{\Gamma}_{k+1|k}<\Gamma_{k+1|k}$ for all $k\ge 0$. Also due to Section 4.4 in \cite{Anderson1979}, we have $\breve{\Gamma}_{k+1|k}>\breve{\Gamma}_{k|k-1}$ for all $k\ge 0$ since $\breve{\Gamma}_{0|-1}=\bm{0}$. This yields
\begin{align*}
\Gamma_{k|k-1}>\breve{\Gamma}_{1|0}=\breve{A}_0\breve{\Gamma}_{0|0}\breve{A}_0^T+\breve{Q}_0>q_1I,\quad \text{for all $k\ge 1$,}
\end{align*}
where the second inequality is due to the fact that $\breve{\Gamma}_{0|0}=\bm{0}$ given $\breve{\Gamma}_{0|-1}=\bm{0}$, and $\breve{Q}_{0}>q_1I$.
It follows that
\begin{equation*}
\begin{split}
\Gamma_{k|k}=\left(\Gamma^{-1}_{k|k-1}+H_k^TR_k^{-1}H_k\right)^{-1}&>\left(\breve{\Gamma}^{-1}_{1|0}+H_k^TR_k^{-1}H_k\right)^{-1} \\
&>\left(q_1^{-1}I+r_1^{-1}H_k^TH_k\right)^{-1}>\left(q_1^{-1}+r_1^{-1}\right)^{-1}I,\quad \text{for all $k \ge 1$,}
\end{split}
\end{equation*}
where the last inequality is due to \eqref{eq:H_k_ub}.
It follows that
\begin{equation*}
\Gamma_{k|k}\ge \min\left\{\lambda_{\min}\left(\Gamma_{0|0}\right),\left(q_1^{-1}+r_1^{-1}\right)^{-1}\right\}I, \quad \text{for all $0\le k < \max\left\{1,n-2\right\}$.}
\end{equation*}
\textbf{Step 3}: Combining Steps 1 and 2, we obtain that when $0\le k < \max\left\{1,n-2\right\}$, the inverse of the error covariance satisfies
\begin{equation*}
\Gamma_{k|k}^{-1}\le \max\left\{\lambda^{-1}_{\min}\left(\Gamma_{0|0}\right),q_1^{-1}+r_1^{-1}\right\}I,
\end{equation*}
and
\begin{equation*}
\Gamma^{-1}_{k|k}\ge \left\{ \begin{array}{ll}
\left(2\left(2n-5\right)\left\|\Gamma_{0|0}\right\|+\left(2\left(n-3\right)^2-1\right)q_2\right)^{-1}I  & \textrm{if $n\ge 4$}\\
\left\|\Gamma_{0|0}\right\|^{-1}I & \textrm{if $2\le n<4$.}
\end{array} \right.
\end{equation*}
Hence, it is concluded that
\begin{equation*}
\bm{0}<\tilde{\mathfrak{c}}_1\left(\Gamma_{0|0}\right)I\le \Gamma^{-1}_{k|k} \le \tilde{\mathfrak{c}}_2\left(\Gamma_{0|0}\right)I,\quad \text{for all $0\le k < \max\left\{1,n-2\right\}$,}
\end{equation*}
which concludes the proof.

\subsection{Convergence rate of the common Lyapunov function with respect to neighbor disagreements}\label{ap:deltaV}
Due to the proof of Proposition 1, the one-step change of common Lyapunov function satisfies
\begin{align}
\Delta V_k<-2\sum_{\hat{i}=1}^{N-1}\gamma_{\hat{i},k}^{\hat{i}+1}\bm{\hat{\eta}}^T_{\hat{i},k|k-1}\hat{L}_{\hat{i}}\bm{\hat{\eta}}_{\hat{i},k|k-1},\label{eq:deltaV}
\end{align}
where $\hat{i}$ is the index for the overlapping regions. Define
\begin{align}
\bm{u}_{\hat{i},k}=\left(\left(\bm{u}^{\hat{i}+1}_{\hat{i},k}\right)^T\quad \left(\bm{u}^{\hat{i}}_{\hat{i}+1,k}\right)^T\right)^T=\left(\left(\bm{u}^{\hat{i}+1}_{\hat{i},k}\right)^T\quad -\left(\bm{u}^{\hat{i}+1}_{\hat{i},k}\right)^T\right)^T,\notag
\end{align}
also note that
\begin{align}
\hat{I}_{j,i}\bm{\eta}_{j,k|k-1}-\frac{1}{2}\left(\hat{I}_{j,i}\bm{\eta}_{j,k|k-1}+\hat{I}_{i,j}\bm{\eta}_{i,k|k-1}\right)=\frac{1}{2}\left(\hat{I}_{j,i}\bm{\eta}_{j,k|k-1}-\hat{I}_{i,j}\bm{\eta}_{i,k|k-1}\right)=\frac{1}{2}\bm{u}^j_{i,k},\notag
\end{align}
thus \eqref{eq:deltaV} becomes
\begin{align}
\Delta V_k<-\frac{1}{2}\sum_{\hat{i}=1}^{N-1}\gamma_{\hat{i},k}^{\hat{i}+1}\bm{u}_{\hat{i},k}^T\hat{L}_{\hat{i}}\bm{u}_{\hat{i},k}\leq -\sum_{\hat{i}=1}^{N-1}\gamma_{\hat{i},k}^{\hat{i}+1}\left\|\bm{u}_{\hat{i},k}\right\|^2=-\sqrt{2}\sum_{\hat{i}=1}^{N-1}\gamma_{\hat{i},k}^{\hat{i}+1}\left\|\bm{u}^{\hat{i}+1}_{\hat{i},k}\right\|^2, \quad\textrm{when $\bm{u}_{\hat{i},k}\neq 0$.}\notag
\end{align}
This indicates that $V_k$ strictly decreases at the rate proportional to $\gamma_{\hat{i},k}^{\hat{i}+1}$ and the 2-norm of the neighbor disagreement $\bm{u}^{\hat{i}+1}_{\hat{i},k}$ until the neighboring disagreements on all the overlapping regions converge to zero.

\subsection{Observable and unobservable subsystems in the unobservable modes}\label{ap:subsystems}
Generally in the SMM, $A_{k}\in\mathbb{R}^{n\times n}$ takes the following form in an unobservable mode:
\begin{align}
A_{k}=\left(
  \begin{array}{ccc}
\hat{\Theta}_{d_1}&\bm{0}_{d_1+1,1}&\bm{0}_{d_1+1,\bar{d_2+1}}\\
\left(
  \begin{array}{cc}
  \bm{0}_{1,d_1}&\frac{v_{\text{m}}\Delta t}{\Delta x}
  \end{array}
\right)&1&\left(
  \begin{array}{cc}
  \frac{w\Delta t}{\Delta x}&\bm{0}_{1,d_2}
  \end{array}
\right)\\
\bm{0}_{d_2+1,d_1+1}&\bm{0}_{d_2+1,1}&\hat{\Delta}_{d_2}
\end{array}
\right),  \notag
\end{align}
where $d_1+d_2+3=n$. We transform the state vector as follows:
\begin{align*}
\check{\rho}_k=U\rho_k,
\end{align*}
where the $(r,c)^{\textrm{th}}$ entry of $U$ is defined as
\begin{align}\label{eq:def_U}
U(r,c)= \left\{ \begin{array}{ll}
1& \textrm{if $r=r+1$ and $i\geq 3$}\\
1& \textrm{if $r=c=1$}\\
1& \textrm{if $r=2$ and $c=n$}\\
0& \textrm{otherwise.}
\end{array} \right.
\end{align}
Basically, the transformation $U$ makes
\begin{align}
\check{\rho}_{k}^2=\rho_{k}^n, \quad \textrm{and $\check{\rho}_{k}^i=\rho_{k}^{i-1}$ for $i\in\{3,4,\cdots,n\}$.}\notag
\end{align}
Hence, the state vector is transformed according to the observable and unobservable subsystems, i.e.,  
\begin{align*}
\check{\rho}_k=U\rho_{k}=\left(
  \begin{array}{c}
\check{\rho}_k^{(1)}\\
\check{\rho}_k^{(2)}
  \end{array}
\right),
\end{align*}
where the observable subsystem $\check{\rho}_k^{(1)}$ consists of the densities of the first and last cells in the freeway section, and the unobservable subsystem $\check{\rho}_k^{(2)}$ is formed by the densities of the interior cells in the section. 
Meanwhile, $A_{k}$ is transformed to $\check{A}_{k}$, which reads
\begin{align}
\check{A}_{k}=UA_{k}U^{-1}=\left(
  \begin{array}{cc}
    \check{A}^{(1)}&\bm{0}\\
    \check{A}^{(21)}&\check{A}^{(2)}_k
  \end{array}
\right),\notag
\end{align}
where
\begin{align}
&\check{A}^{(1)}=\left(
  \begin{array}{cc}
    1&\\
    &1
  \end{array}
\right)\in\mathbb{R}^{2\times 2},\notag\\
&\check{A}^{(21)}=\left(
  \begin{array}{cc}
    \frac{v_{\text{m}}\triangle t}{\triangle x}&0\\
    0&0\\
    \vdots&\vdots\\
    0&0\\
    0&\frac{w\triangle t}{\triangle x}
  \end{array}
\right)\in\mathbb{R}^{(n-2)\times 2},\notag
\end{align}
and
\begin{align}
\check{A}^{(2)}_{k}&=\left(
  \begin{array}{ccc}
\Theta_{d_1}&\bm{0}_{d_1,1}&\bm{0}_{d_1,\bar{d_2}}\\
\left(
  \begin{array}{cc}
  \bm{0}_{1,d_1-1}&\frac{v_{\text{m}}\Delta t}{\Delta x}
  \end{array}
\right)&1&\left(
  \begin{array}{cc}
  \frac{w\Delta t}{\Delta x}&\bm{0}_{1,d_2-1}
  \end{array}
\right)\\
\bm{0}_{d_2,d_1}&\bm{0}_{d_2,1}&\Delta_{d_2}
\end{array}
\right)\in\mathbb{R}^{(n-2)\times(n-2)}.\notag
\end{align}

%Note that only $\check{A}^{(2)}_k$ is time-varying, and the Jordan blocks of $\check{A}^{(2)}_k$ are given as follows:
%\begin{equation}\label{A2Jordan}
%\begin{split}
%&J_{1}=\left(
%  \begin{array}{cccccc}
%    1-\frac{v_m\triangle t}{\triangle x}&1&&\\
%    &\ddots&\ddots&\\
%    &&1-\frac{v_m\triangle t}{\triangle x}&1\\
%    &&&1-\frac{v_m\triangle t}{\triangle x}\\
%  \end{array}
%\right)\in\mathbb{R}^{d_1\times d_1}\\
%&J_{2}=\left(
%  \begin{array}{cccccc}
%    1-\frac{w\triangle t}{\triangle x}&1&&\\
%    &\ddots&\ddots&\\
%    &&1-\frac{w\triangle t}{\triangle x}&1\\
%    &&&1-\frac{w\triangle t}{\triangle x}\\
%  \end{array}
%\right)\in\mathbb{R}^{d_2\times d_2} \quad J_{3}=1.
%\end{split}
%\end{equation}

When the densities of the boundary cells are measured, the output matrix is given by
\begin{align}
H=H_{\text{b}}=\left(
  \begin{array}{ccccc}
    1&0&\cdots&0&0\\
    0&0&\cdots&0&1
  \end{array}
\right),\notag
\end{align}
and the transformed observation matrix is given by
\begin{align}
\check{H}=\left(\check{H}^{(1)}\quad\bm{0}\right),\notag
\end{align}
where $\check{H}^{(1)}=I_2$.

Divide the prior estimation error covariance matrix based on the observable and unobservable subsystems as follows:
\begin{align}
\check{\Gamma}_{k|k-1}=\left(
  \begin{array}{cc}
    \check{\Gamma}^{(1)}_{k|k-1}&\left(\check{\Gamma}^{(12)}_{k|k-1}\right)^T\\
    \check{\Gamma}^{(21)}_{k|k-1}&\check{\Gamma}^{(2)}_{k|k-1}
  \end{array}
\right),\notag
\end{align}
where $\check{\Gamma}^{(1)}_{k|k-1}$ is of dimension 2, and $\check{\Gamma}^{(2)}_{k|k-1}$ is of dimension $n-2=d_1+d_2+1$.

In the DLKCF, the prior error covariance matrix is computed recursively by the Riccati equation
\begin{align}
\check{\Gamma}_{k+1|k}=&\check{A}_k\left(\check{\Gamma}_{k|k-1}-\check{\Gamma}_{k|k-1}\check{H}^{T}\left(\check{H}\check{\Gamma}_{k|k-1}\check{H}^{T}+R_k\right)^{-1}\check{H}\check{\Gamma}_{k|k-1}\right)\check{A}_{k}^{T}+\check{Q}_k,\notag
\end{align}
Define (recall that $\check{A}^{(1)}=I_2$)
\begin{align}
\check{\Upsilon}^{(1)}_{k}&=\check{A}^{(1)}-\check{A}^{(1)}\check{\Gamma}^{(1)}_{k|k-1}\left(\check{H}^{(1)}\right)^T\left(\check{H}\check{\Gamma}_{k|k-1}\check{H}^{T}+R_k\right)^{-1}\check{H}^{(1)}\notag\\
&=\check{A}^{(1)}-\check{A}^{(1)}\check{K}^{(1)}_{k}\check{H}^{(1)},\notag
\end{align}
and apply partition into observable and unobservable subsystems, we obtain the following two blocks of equations:
\begin{align}
\check{\Gamma}^{(1)}_{k+1|k}=\check{\Upsilon}^{(1)}_{k}\check{\Gamma}^{(1)}_{k|k-1}\left(\check{A}^{(1)}\right)^T+\check{Q}_k^{(1)},\label{eq:18}
\end{align}
\begin{equation}\label{eq:19}
\begin{split}
\check{\Gamma}^{(12)}_{k+1|k}=&\check{\Upsilon}^{(1)}_{k}\check{\Gamma}^{(12)}_{k|k-1}\left(\check{A}^{(2)}_k\right)^T+\check{\Upsilon}^{(1)}_{k}\check{\Gamma}^{(1)}_{k|k-1}\left(\check{A}^{(21)}\right)^T+\check{Q}_k^{(12)}.
\end{split}
\end{equation}
%Equation ($\ref{eq:18}$) is the Riccati equation for the observable and controllable (note that $\check{Q}^{(1)}>0$, which yields the controllability) subsystem, note that the model error covariance $\check{Q}^{(1)}$ is time invariant in \eqref{eq:18} since the model for the observable subsystem is time invariant. It follows from Section 4.4 in \cite{Anderson1979} that $\check{\Upsilon}^{(1)}_{k}\rightarrow\check{\Upsilon}^{(1)}$ with the spectral radius $\|\check{\Upsilon}^{(1)}\|_s<1$.

%In the next subsection, we show that the cross-covariance $\check{\Gamma}^{(12)}_{k|k}$ of the observable and unobservable subsystems is uniformly bounded.

\subsection{Proof of Lemma \ref{lem_u_bound_k}}\label{ap:lem_u_bound_k}
%The key is to extend the proof of Lemma 2 to show that $||\check{\Upsilon}^{(1)}_{k}||=||\check{A}^{(1)}-\check{A}^{(1)}\check{K}^{(1)}_{k}\check{H}^{(1)}||$ decreases at a uniform geometric rate.
As detailed in Appendix \ref{ap:subsystems}, the observable subsystem in an unobservable freeway section consists of the two boundary cells, and the remaining state variables form the unobservable subsystem. We transform the state vector according to observable and unobservable subsystems, i.e.,
\begin{align*}
\check{\rho}_k=U\rho_{k}=\left(
  \begin{array}{c}
\check{\rho}_k^{(1)}\\
\check{\rho}_k^{(2)}
  \end{array}
\right),
\end{align*}
where $\check{\rho}_k^{(1)}$ consists of the densities of the first and last cells in the freeway section, and $\check{\rho}_k^{(2)}$ is formed by the densities of the interior cells in the section. The transformation matrix $U$ is defined in \eqref{eq:def_U}.

The transformed Kalman gain is given by
\begin{align*}
\check{K}_k=UK_{k}=\left(
  \begin{array}{c}
\check{K}_k^{(1)}\\
\check{K}_k^{(21)}
  \end{array}
\right),
\end{align*}
where
\begin{align*}
\check{K}^{(1)}_k=\left(
  \begin{array}{cc}
K_{k}(1,1) & K_{k}(1,2)\\
K_{k}(n,1) & K_{k}(n,2)
  \end{array}
\right),\quad \text{and}\quad\check{K}^{(21)}_k=\left(
  \begin{array}{cc}
K_{k}(2,1) & K_{k}(2,2)\\
\vdots &\vdots\\
K_{k}(n-1,1) & K_{k}(n-1,2)
  \end{array}
\right).
\end{align*}
The proof consists of the following five steps. Step 1 derives an upper bound for $\left\|K_{\underline{k}_{\text{U}}+1}\right\|_{\infty}$. Step 2 derives an upper bound of $\left\|\check{K}^{(1)}_k\right\|_{\infty}$ for $k\in(\underline{k}_{\text{U}}+1,\bar{k}_{\text{U}}]$. In Step 3, we study the convergence rate of the error dynamics of the observable subsystem, which is also related to the boundedness of $\check{K}^{(21)}_k$. Based on the convergence rate obtained in Step 3, Step 4 derives an upper bound of $\left\|\check{K}^{(21)}_k\right\|_{\infty}$ for $k\in(\underline{k}_{\text{U}}+1,\bar{k}_{\text{U}}]$. Step 5 combines the above steps together and concludes the proof.

\noindent \textbf{Step 1}: At time step $\underline{k}_{\text{U}}+1$, the Kalman gain is computed as follows:
\begin{align*}
K_{\underline{k}_{\text{U}}+1}=\Gamma_{\underline{k}_{\text{U}}+1|\underline{k}_{\text{U}}}H_{\underline{k}_{\text{U}}+1}^T\left(R_{\underline{k}_{\text{U}}+1}+H_{\underline{k}_{\text{U}}+1}\Gamma_{\underline{k}_{\text{U}}+1|\underline{k}_{\text{U}}}H_{\underline{k}_{\text{U}}+1}^T\right)^{-1}, \quad \text{where $\Gamma_{\underline{k}_{\text{U}}+1|\underline{k}_{\text{U}}}=A_{\underline{k}_{\text{U}}}\Gamma_{\underline{k}_{\text{U}}|\underline{k}_{\text{U}}}A_{\underline{k}_{\text{U}}}^T+Q_{\underline{k}_{\text{U}}}$.}
\end{align*}
Given that $\left\|\Gamma_{\underline{k}_{\text{U}}|\underline{k}_{\text{U}}}\right\|_{\infty}\le\sqrt{n}\left\|\Gamma_{\underline{k}_{\text{U}}|\underline{k}_{\text{U}}}\right\|$, $\left\|Q_k\right\|_{\infty}<\sqrt{n}q_2$,  $\left\|A_k\right\|_{\infty}=1+\frac{\Delta t}{\Delta x}\left(w+v_{\text{m}}\right)$ and $\left\|A_k^T\right\|_{\infty}=1$ for $A_k\in\mathcal{A}_{\text{U}}$, the prior error covariance at time $\underline{k}_{\text{U}}+1$ satisfies
\begin{align*}
\left\|\Gamma_{\underline{k}_{\text{U}}+1|\underline{k}_{\text{U}}}\right\|_{\infty}\le \left\|A_{\underline{k}_{\text{U}}}\right\|_{\infty}\left\|\Gamma_{\underline{k}_{\text{U}}|\underline{k}_{\text{U}}}\right\|_{\infty}\left\|A_{\underline{k}_{\text{U}}}^T\right\|_{\infty}+\left\|Q_{\underline{k}_{\text{U}}}\right\|_{\infty}<\sqrt{n}\left\|\Gamma_{\underline{k}_{\text{U}}|\underline{k}_{\text{U}}}\right\|\left(1+\frac{\Delta t}{\Delta x}\left(w+v_{\text{m}}\right)\right)+\sqrt{n}q_2.
\end{align*}
Moreover, since
\begin{equation*}
\begin{split}
\left\|\left(R_{\underline{k}_{\text{U}}+1}+H_{\underline{k}_{\text{U}}+1}\Gamma_{\underline{k}_{\text{U}}+1|\underline{k}_{\text{U}}}H_{\underline{k}_{\text{U}}+1}^T\right)^{-1}\right\|_{\infty}&\le \sqrt{2} \left\|\left(R_{\underline{k}_{\text{U}}+1}+H_{\underline{k}_{\text{U}}+1}\Gamma_{\underline{k}_{\text{U}}+1|\underline{k}_{\text{U}}}H_{\underline{k}_{\text{U}}+1}^T\right)^{-1}\right\|\\
&= \sqrt{2} \left(\sigma_{\min}\left(R_{\underline{k}_{\text{U}}+1}+H_{\underline{k}_{\text{U}}+1}\Gamma_{\underline{k}_{\text{U}}+1|\underline{k}_{\text{U}}}H_{\underline{k}_{\text{U}}+1}^T\right)\right)^{-1}\\
& \le \sqrt{2} \left(\sigma_{\min}\left(R_{\underline{k}_{\text{U}}+1}\right)\right)^{-1}<\sqrt{2} r_1^{-1},
\end{split}
\end{equation*}
it follows that
\begin{equation*}
\begin{split}
\left\|K_{\underline{k}_{\text{U}}+1}\right\|_{\infty}& \le \left\|\Gamma_{\underline{k}_{\text{U}}+1|\underline{k}_{\text{U}}}\right\|_{\infty}\left\|H_{\underline{k}_{\text{U}}+1}^T\right\|_{\infty}\left\|\left(R_{\underline{k}_{\text{U}}+1}+H_{\underline{k}_{\text{U}}+1}\Gamma_{\underline{k}_{\text{U}}+1|\underline{k}_{\text{U}}}H_{\underline{k}_{\text{U}}+1}^T\right)^{-1}\right\|_{\infty}\\
&<\sqrt{2n}r_1^{-1}\left(\left\|\Gamma_{\underline{k}_{\text{U}}|\underline{k}_{\text{U}}}\right\|\left(1+\frac{\Delta t}{\Delta x}\left(w+v_{\text{m}}\right)\right)+q_2\right).
\end{split}
\end{equation*}

\noindent \textbf{Step 2}: For the error covariance of the observable subsystem (defined as $\check{\Gamma}^{(1)}_{k|k}$), one may note that
\begin{equation*}
\left(\check{\Gamma}_{k|k}^{(1)}\right)^{-1}=\left(\check{\Gamma}_{k|k-1}^{(1)}\right)^{-1}+\left(\hat{H}^{(1)}\right)^TR_k^{-1}\hat{H}^{(1)},
\end{equation*}
where $\hat{H}^{(1)}=I$ is the output matrix of the boundary measurements modeled in the observable subsystem, it follows that
\begin{equation}\label{eq:gamma_1_ub}
\check{\Gamma}_{k|k}^{(1)}<R_k<r_2I,\quad \text{for $k\in(\underline{k}_{\text{U}},\bar{k}_{\text{U}}]$.}
\end{equation}
For all $k\in(\underline{k}_{\text{U}}+1, \bar{k}_{\text{U}}]$, the Kalman gain associated with the observable subsystem is given by
\begin{align*}
\check{K}^{(1)}_{k}=\check{\Gamma}^{(1)}_{k|k-1}\left(\check{H}^{(1)}\right)^T\left(R_{k}+\check{H}^{(1)}\check{\Gamma}^{(1)}_{k|k-1}\left(\check{H}^{(1)}\right)^T\right)^{-1}, \quad \text{where $\check{\Gamma}^{(1)}_{k|k-1}=\check{A}^{(1)}\check{\Gamma}^{(1)}_{k-1|k-1}\left(\check{A}^{(1)}\right)^T+\check{Q}_{k-1}^{(1)}$.}
\end{align*}
Following the similar argument as in Step 1, we obtain
\begin{equation*}
\begin{split}
\left\|\check{K}^{(1)}_{k}\right\|_{\infty}<2r_1^{-1}\left(r_2+q_2\right), \quad \text{for all $k\in(\underline{k}_{\text{U}}+1,\bar{k}_{\text{U}}]$.}
\end{split}
\end{equation*}

\noindent \textbf{Step 3}: Denote as $\check{\mathcal{I}}^{(1)}_{\cdot,\cdot}$ and $\check{\mathcal{C}}^{(1)}_{\cdot,\cdot}$ the information and controllability matrix of the observable subsystem, it follows that
\begin{align*}
2 r_2^{-1}I< \check{\mathcal{I}}^{(1)}_{k,k-1}=R_{k-1}^{-1}+R_k^{-1}< 2r_1^{-1}I,\quad \text{and $q_1 I< \check{\mathcal{C}}^{(1)}_{k,k-1}=\check{Q}_{k}^{(1)}< q_2I$,} \quad \text{for all $k\in(\underline{k}_{\text{U}}, \bar{k}_{\text{U}}]$.}
\end{align*}
Define
\begin{align*}
\check{a}=\min\left\{2r_2^{-1},q_1\right\},\quad \check{b}=\max\left\{2r_1^{-1},q_2\right\},\quad \check{c}_1=\frac{\check{a}}{1+\check{a}\check{b}},\quad \check{c}_2=\frac{1+\check{a}\check{b}}{\check{a}},
\end{align*}
the error covariance of the observable subsystem satisfies
\begin{align*}
\check{c}_1 I< \left(\check{\Gamma}_{k|k}^{(1)}\right)^{-1}
< \check{c}_2 I, \quad \text{for all $k\in(\underline{k}_{\text{U}}, \bar{k}_{\text{U}}]$,}
\end{align*}
according to Lemma 7.1 and Lemma 7.2 in \cite{Jazwinski1970}.

Define the Lyapunov function of the observable subsystem as $\check{V}_k=\left(\check{\bm{\eta}}^{(1)}_{k|k}\right)^{T}\left(\check{\Gamma}^{(1)}_{k|k}\right)^{-1}\check{\bm{\eta}}^{(1)}_{k|k}$. According to Lemma 3 in \cite{Olfati-SaberCDC2009}, the one-step change of $\check{V}_k$ is given by (see also \eqref{eq:lyap_one_step} for a more detailed derivation)
\begin{equation}
\begin{split}\label{eq:lyap_one_step_o}
\Delta \check{V}_{k+1}&=\check{V}_{k+1}-\check{V}_k\\
&=-\left(\check{\bm{\eta}}^{(1)}_{k|k}\right)^{T}\left(\check{\Gamma}^{(1)}_{k|k}+\check{\Gamma}^{(1)}_{k|k}\left(\check{A}^{(1)}\right)^T\left(\check{Q}_{k}^{(1)}+\check{\Gamma}^{(1)}_{k+1|k}\left(\check{H}^{(1)}\right)^TR_{k+1}^{-1}\check{H}^{(1)}\check{\Gamma}^{(1)}_{k+1|k}\right)^{-1}\check{A}^{(1)}\check{\Gamma}^{(1)}_{k|k}\right)^{-1}\check{\bm{\eta}}^{(1)}_{k|k}\\
&\le -\left\|\check{\Gamma}^{(1)}_{k|k}+\check{\Gamma}^{(1)}_{k|k}\left(\check{A}^{(1)}\right)^T\left(\check{Q}_{k}^{(1)}+\check{\Gamma}^{(1)}_{k+1|k}\left(\check{H}^{(1)}\right)^TR_{k+1}^{-1}\check{H}^{(1)}\check{\Gamma}^{(1)}_{k+1|k}\right)^{-1}\check{A}^{(1)}\check{\Gamma}^{(1)}_{k|k}\right\|^{-1}\left\|\check{\bm{\eta}}^{(1)}_{k|k}\right\|^2,
\end{split}
\end{equation}
where
\begin{equation}\label{eq:lyap_one_step_1_o}
\begin{split}
&\left\|\check{\Gamma}^{(1)}_{k|k}+\check{\Gamma}^{(1)}_{k|k}\left(\check{A}^{(1)}\right)^T\left(\check{Q}_{k}^{(1)}+\check{\Gamma}^{(1)}_{k+1|k}\left(\check{H}^{(1)}\right)^TR_{k+1}^{-1}\check{H}^{(1)}\check{\Gamma}^{(1)}_{k+1|k}\right)^{-1}\check{A}^{(1)}\check{\Gamma}^{(1)}_{k|k}\right\|\\
\le & \left\|\check{\Gamma}^{(1)}_{k|k}\right\|+\left\|\check{\Gamma}^{(1)}_{k|k}\left(\check{A}^{(1)}\right)^T\left(\check{Q}_{k}^{(1)}+\check{\Gamma}^{(1)}_{k+1|k}\left(\check{H}^{(1)}\right)^TR_{k+1}^{-1}\check{H}^{(1)}\check{\Gamma}^{(1)}_{k+1|k}\right)^{-1}\check{A}^{(1)}\check{\Gamma}^{(1)}_{k|k}\right\|\\
< & \check{c}_1^{-1}+q_1^{-1}\left\|\check{\Gamma}^{(1)}_{k|k}\left(\check{A}^{(1)}\right)^T\check{A}^{(1)}\check{\Gamma}^{(1)}_{k|k}\right\|\\
\le & \check{c}_1^{-1}+q_1^{-1}\left\|\check{\Gamma}^{(1)}_{k|k}\right\|\left\|\left(\check{A}^{(1)}\right)^T\check{A}^{(1)}\right\|\left\|\check{\Gamma}^{(1)}_{k|k}\right\|\\
< & \check{c}_1^{-1}+q_1^{-1}\check{c}_1^{-2}\sigma_{\max}\left(\check{A}^{(1)}\right),
\end{split}
\end{equation}
It follows that the Lyapunov function $\check{V}_k$ satisfies
\begin{align*}
\check{c}_1\left\|\check{\bm{\eta}}^{(1)}_{k|k}\right\|^2<\check{V}_k<\check{c}_2\left\|\check{\bm{\eta}}^{(1)}_{k|k}\right\|^2,\quad \text{and }\check{V}_{k+1}-\check{V}_k<-\check{c}_3\left\|\check{\bm{\eta}}^{(1)}_{k|k}\right\|^2, \quad \text{for all $k\in(\underline{k}_{\text{U}}, \bar{k}_{\text{U}}]$,}
\end{align*}
where
\begin{align*}
\check{c}_3=\check{c}_1^{-1}+q_1^{-1}\check{c}_1^{-2}\sigma_{\max}\left(\check{A}^{(1)}\right)=\check{c}_1^{-1}+q_1^{-1}\check{c}_1^{-2}.
\end{align*}
Hence, the 2-norm of the mean estimation error of the observable subsystem satisfies
\begin{equation}\label{eq:normerror_o}
\begin{split}
\left\|\check{\bm{\eta}}^{(1)}_{k|k}\right\|<\left(\frac{\check{V}_k}{\check{c}_1}\right)^{\frac{1}{2}}&<\left(\frac{\check{V}_{\underline{k}_{\text{U}}+1}\left(1-\check{c}_3\check{c}_2^{-1}\right)^{k-\underline{k}_{\text{U}}-1}}{\check{c}_1}\right)^{\frac{1}{2}}< \left(\frac{\check{c}_2\left\|\check{\bm{\eta}}^{(1)}_{\underline{k}_{\text{U}}+1|\underline{k}_{\text{U}}+1}\right\|^2\left(1-\check{c}_3\check{c}_2^{-1}\right)^{k-\underline{k}_{\text{U}}-1}}{\check{c}_1}\right)^{\frac{1}{2}}\\
&=\left(\frac{\check{c}_2}{\check{c}_1}\right)^{\frac{1}{2}}\left\|\check{\bm{\eta}}^{(1)}_{\underline{k}_{\text{U}}+1|\underline{k}_{\text{U}}+1}\right\|\left(\left(1-\check{c}_3\check{c}_2^{-1}\right)^{\frac{1}{2}}\right)^{k-\underline{k}_{\text{U}}-1},\quad \text{for all $k\in(\underline{k}_{\text{U}}+1,\bar{k}_{\text{U}}]$}.
\end{split}
\end{equation}
Moreover, the mean estimation error of the observable subsystem is given as follows:
\begin{align}\label{eq:error_evolve_o}
\check{\bm{\eta}}^{(1)}_{k|k}=\prod_{\kappa=k}^{\underline{k}_{\text{U}}+2}\check{\Upsilon}^{(1)}_{\kappa}\check{\bm{\eta}}^{(1)}_{\underline{k}_{\text{U}}+1|\underline{k}_{\text{U}}+1},\quad \text{for all $k\in(\underline{k}_{\text{U}}+1,\bar{k}_{\text{U}}]$},
\end{align}
where $\check{\Upsilon}^{(1)}_{\kappa}=\check{\Gamma}^{(1)}_{\kappa|\kappa}\left(\check{\Gamma}_{\kappa|\kappa-1}^{(1)}\right)^{-1}\check{A}^{(1)}=\check{\Gamma}^{(1)}_{\kappa|\kappa}\left(\check{\Gamma}_{\kappa|\kappa-1}^{(1)}\right)^{-1}$. Combining \eqref{eq:normerror_o} and \eqref{eq:error_evolve_o}, it is concluded based on the definition of matrix induced norm that
\begin{align}\label{eq:error_evolve_rate}
\left\|\prod_{\kappa=k}^{\underline{k}_{\text{U}}+2}\check{\Upsilon}^{(1)}_{\kappa}\right\|\le \left(\frac{\check{c}_2}{\check{c}_1}\right)^{\frac{1}{2}}\left(\left(1-\check{c}_3\check{c}_2^{-1}\right)^{\frac{1}{2}}\right)^{k-\underline{k}_{\text{U}}-1}, \quad \text{for $k\in(\underline{k}_{\text{U}}+1,\bar{k}_{\text{U}}]$}.
\end{align}

\noindent \textbf{Step 4}:
Vectorizing both sides of \eqref{eq:19} yields
\begin{align*}
\textrm{vec}\left\{\check{\Gamma}^{(12)}_{k+1|k}\right\}=\left(\check{A}^{(2)}_k\otimes\check{\Upsilon}^{(1)}_{k}\right)\textrm{vec}\left\{\check{\Gamma}^{(12)}_{k|k-1}\right\}+\textrm{vec}\left\{\check{\Upsilon}^{(1)}_{k}\check{\Gamma}^{(1)}_{k|k-1}\left(\check{A}^{(21)}_k\right)^T\right\}+\textrm{vec}\left\{\check{Q}_k^{(12)}\right\},\text{for $k\in(\underline{k}_{\text{U}},\bar{k}_{\text{U}}]$,}
\end{align*}
which implies that
\begin{align}
\textrm{vec}\left\{\check{\Gamma}^{(12)}_{k+1|k}\right\}=\left(\prod_{\kappa=k}^{\underline{k}_{\text{U}}+2}\left(\check{A}^{(2)}_{\kappa}\otimes\check{\Upsilon}^{(1)}_{\kappa}\right)\right)\textrm{vec}\left\{\check{\Gamma}^{(12)}_{\underline{k}_{\text{U}}+2|\underline{k}_{\text{U}}+1}\right\}+\Phi_k,\quad\text{for $k\in(\underline{k}_{\text{U}}+1,\bar{k}_{\text{U}}]$,}\label{eq:Gamma12}
\end{align}
where
\begin{align}
\Phi_k=&\textrm{vec}\left\{\check{\Upsilon}^{(1)}_{k}\check{\Gamma}^{(1)}_{k|k-1}\left(\check{A}^{(21)}_k\right)^T+\check{Q}_k^{(12)}\right\}+\left(\check{A}^{(2)}_k\otimes\check{\Upsilon}^{(1)}_{k}\right)\textrm{vec}\left\{\check{\Upsilon}^{(1)}_{k-1}\check{\Gamma}^{(1)}_{k-1|k-2}\left(\check{A}^{(21)}_{k-1}\right)^T+\check{Q}_{k-1}^{(12)}\right\}\notag\\
&+\left(\check{A}^{(2)}_k\otimes\check{\Upsilon}^{(1)}_{k}\right)\left(\check{A}^{(2)}_{k-1}\otimes\check{\Upsilon}^{(1)}_{k-1}\right)\textrm{vec}\left\{\check{\Upsilon}^{(1)}_{k-2}\check{\Gamma}^{(1)}_{k-2|k-3}\left(\check{A}^{(21)}_{k-2}\right)^T+\check{Q}_{k-2}^{(12)}\right\}\notag\\
&+\cdots+\prod_{\kappa=k}^{\underline{k}_{\text{U}}+3}\left(\check{A}^{(2)}_{\kappa}\otimes\check{\Upsilon}^{(1)}_{\kappa}\right)\textrm{vec}\left\{\check{\Upsilon}^{(1)}_{\underline{k}_{\text{U}}+2}\check{\Gamma}^{(1)}_{\underline{k}_{\text{U}}+2|\underline{k}_{\text{U}}+1}\left(\check{A}^{(21)}_{\underline{k}_{\text{U}}+2}\right)^T+\check{Q}_{\underline{k}_{\text{U}}+2}^{(12)}\right\}.\notag
\end{align}
The explicit form of $\check{A}^{(2)}_{k}\otimes\check{\Upsilon}^{(1)}_{k}$ reads
\begin{align*}
\check{A}^{(2)}_{k}\otimes\check{\Upsilon}^{(1)}_{k}=\left(
  \begin{array}{ccc}
    \check{A}^{(2)}_{k}(1,1)\check{\Upsilon}^{(1)}_{k}&\cdots&\check{A}^{(2)}_{k}(1,n-2)\check{\Upsilon}^{(1)}_{k}\\
    \vdots&\ddots&\vdots\\
    \check{A}^{(2)}_{k}(n-2,1)\check{\Upsilon}^{(1)}_{k}&\cdots&\check{A}^{(2)}_{k}(n-2,n-2)\check{\Upsilon}^{(1)}_{k}\\
  \end{array}
\right),
\end{align*}
hence
\begin{align*}
\prod_{\kappa=k}^{\underline{k}_{\text{U}}+2}\left(\check{A}^{(2)}_{\kappa}\otimes\check{\Upsilon}^{(1)}_{\kappa}\right)=\left(
  \begin{array}{ccc}
    \vartheta_{k}(1,1)\prod_{\kappa=k}^{\underline{k}_{\text{U}}+2}\check{\Upsilon}^{(1)}_{\kappa}&\cdots&\vartheta_{k}(1,n-2)\prod_{\kappa=k}^{\underline{k}_{\text{U}}+2}\check{\Upsilon}^{(1)}_{\kappa}\\
    \vdots&\ddots&\vdots\\
    \vartheta_{k}(n-2,1)\prod_{\kappa=k}^{\underline{k}_{\text{U}}+2}\check{\Upsilon}^{(1)}_{\kappa}&\cdots&\vartheta_{k}(n-2,n-2)\prod_{\kappa=k}^{\underline{k}_{\text{U}}+2}\check{\Upsilon}^{(1)}_{\kappa}\\
  \end{array}
\right),
\end{align*}
where $\vartheta_{k}(r,c)$ is the $(r,c)^{\textrm{th}}$ element of $\prod_{\kappa=k}^{\underline{k}_{\text{U}}+2}\check{A}^{(2)}_{\kappa}$. Also since
\begin{align*}
\sum_{r=1}^{n-2}\check{A}^{(2)}_{k}(r,c)=1,\quad\textrm{for all $k\in(\underline{k}_{\text{U}}, \bar{k}_{\text{U}}]$ and $c\in\{1,2,\cdots,n-2\}$,}
\end{align*}
and
\begin{align*}
0\le \left|\check{A}^{(2)}_{k}(r,c)\right|\le 1,\quad\textrm{for all $k\in(\underline{k}_{\text{U}}, \bar{k}_{\text{U}}]$ and $r, c\in\{1,2,\cdots,n-2\}$,}
\end{align*}
it follows that $0\le \vartheta_{k}(r,c) \le 1$ for all $r$, $c$ and $k\in(\underline{k}_{\text{U}}, \bar{k}_{\text{U}}]$. Consequently
\begin{align}\label{eq:a_upsilon_infty_bound}
\left\|\prod_{\kappa=k}^{\underline{k}_{\text{U}}+2}\left(\check{A}^{(2)}_{\kappa}\otimes\check{\Upsilon}^{(1)}_{\kappa}\right)\right\|_{\infty}\le \left(n-2\right)\left\|\prod_{\kappa=k}^{\underline{k}_{\text{U}}+2}\check{\Upsilon}^{(1)}_{\kappa}\right\|_{\infty}\le \sqrt{2}\left(n-2\right) \left(\frac{\check{c}_2}{\check{c}_1}\right)^{\frac{1}{2}}\left(\left(1-\check{c}_3\check{c}_2^{-1}\right)^{\frac{1}{2}}\right)^{k-\underline{k}_{\text{U}}-1}=\bar{t}\bar{q}^{k-\underline{k}_{\text{U}}-1},
\end{align}
where the last inequality is due to \eqref{eq:error_evolve_rate}. Recall from \eqref{eq:gamma_1_ub} that $\check{\Gamma}_{k|k}^{(1)}<R_k<r_2I$ for $k\in(\underline{k}_{\text{U}},\bar{k}_{\text{U}}]$. Since
\begin{equation*}
\check{\Gamma}_{k|k-1}^{(1)}=\check{A}^{(1)}\check{\Gamma}^{(1)}_{k-1|k-1}\left(\check{A}^{(1)}\right)^T+\check{Q}^{(1)}_{k-1},
\end{equation*}
the prior error covariance of the observable subsystem satisfies
\begin{equation*}
q_1 I<\check{\Gamma}_{k|k-1}^{(1)}<\left(r_2+q_2\right) I,\quad \text{for $k\in(\underline{k}_{\text{U}},\bar{k}_{\text{U}}]$.}
\end{equation*}
As a consequence,
\begin{equation}\label{eq:bound_a_upsilon}
\begin{split}
\left\|\check{\Gamma}^{(1)}_{k|k}\left(\check{\Gamma}_{k|k-1}^{(1)}\right)^{-1}\right\|<r_2q_1^{-1},\quad \left\|\check{\Upsilon}^{(1)}_{k}\right\|_{\infty}\le\sqrt{2}\left\|\check{\Gamma}^{(1)}_{k|k}\left(\check{\Gamma}_{k|k-1}^{(1)}\right)^{-1}\right\|<\sqrt{2}r_2q_1^{-1},\quad \text{for $k\in(\underline{k}_{\text{U}},\bar{k}_{\text{U}}]$.}
\end{split}
\end{equation}
%where $\bar{q}$ is defined as
%\begin{align}
%\bar{q}=\sqrt{2}r_2q_1^{-1}\left(1+\frac{\Delta t}{\Delta x}\left(v_{\text{m}}+w\right)\right). \label{eq:bar_q}
%\end{align}
It follows that\footnote{Recall that for matrix $M\in\mathbb{R}^{p\times q}$, $\left\|M\right\|_{\max}\le \left\|M\right\|_{2}=\max_{1\le r\le p, 1 \le c\le q}\left|M(r,c)\right|$.}
\begin{equation}\label{eq:bound_additive}
\begin{split}
\left\|\textrm{vec}\left\{\check{\Upsilon}^{(1)}_{k}\check{\Gamma}^{(1)}_{k|k-1}\left(\check{A}^{(21)}_k\right)^T\right\}+\textrm{vec}\left\{\check{Q}_k^{(12)}\right\}\right\|_{\infty}&\le \left\|\check{\Upsilon}^{(1)}_{k}\check{\Gamma}^{(1)}_{k|k-1}\left(\check{A}^{(21)}_k\right)^T\right\|_{\infty}+\left\|\check{Q}_k^{(12)}\right\|_{\max}\\
& <\sqrt{2}r_2q_1^{-1}\frac{\Delta t}{\Delta x}\max\{v_{\text{m}},w\}\left\|\check{\Gamma}^{(1)}_{k|k-1}\right\|_{\infty}+q_2\\
& < 2r_2q_1^{-1}\frac{\Delta t}{\Delta x}\max\{v_{\text{m}},w\}\left(r_2+q_2\right)+q_2\\
& =\bar{p},\quad \text{for $k\in(\underline{k}_{\text{U}},\bar{k}_{\text{U}}]$}.
\end{split}
\end{equation}
Substituting \eqref{eq:a_upsilon_infty_bound} and \eqref{eq:bound_additive} into \eqref{eq:Gamma12}, we obtain
\begin{align*}
\left\|\textrm{vec}\left\{\check{\Gamma}^{(12)}_{k+1|k}\right\}\right\|_{\infty}\le \mathfrak{b}\left(k\right)\triangleq\bar{t}\bar{q}^{k-\underline{k}_{\text{U}}-1}\left\|\textrm{vec}\left\{\check{\Gamma}^{(12)}_{\underline{k}_{\text{U}}+2|\underline{k}_{\text{U}}+1}\right\}\right\|_{\infty}+\bar{p}+\bar{t}\bar{p}\sum_{\iota=1}^{k-\underline{k}_{\text{U}}-2}\bar{q}^{\iota},\quad \text{for $k\in(\underline{k}_{\text{U}}+1,\bar{k}_{\text{U}}]$,}
\end{align*}
where $\mathfrak{b}(k)$ is either a non-increasing or a non-decreasing function of $k$.
Hence, we obtain that for $k\in(\underline{k}_{\text{U}},\bar{k}_{\text{U}}]$,
\begin{align*}
\left\|\textrm{vec}\left\{\check{\Gamma}^{(12)}_{k+1|k}\right\}\right\|_{\infty}&\le\max\left\{\left\|\textrm{vec}\left\{\check{\Gamma}^{(12)}_{\underline{k}_{\text{U}}+2|\underline{k}_{\text{U}}+1}\right\}\right\|_{\infty},\quad \mathfrak{b}(\underline{k}_{\text{U}}+2),\quad \lim_{k\rightarrow \infty}\mathfrak{b}(k)\right\}\\
&\le\max\left\{\left\|\textrm{vec}\left\{\check{\Gamma}^{(12)}_{\underline{k}_{\text{U}}+2|\underline{k}_{\text{U}}+1}\right\}\right\|_{\infty},\quad \bar{t}\bar{q}\left\|\textrm{vec}\left\{\check{\Gamma}^{(12)}_{\underline{k}_{\text{U}}+2|\underline{k}_{\text{U}}+1}\right\}\right\|_{\infty}+\bar{p},\quad \frac{\bar{t}\bar{p}\bar{q}}{1-\bar{q}}+\bar{p}\right\},
\end{align*}
where
\begin{equation*}
\begin{split}
\left\|\textrm{vec}\left\{\check{\Gamma}^{(12)}_{\underline{k}_{\text{U}}+2|\underline{k}_{\text{U}}+1}\right\}\right\|_{\infty}&\le \left\|\check{\Gamma}^{(12)}_{\underline{k}_{\text{U}}+2|\underline{k}_{\text{U}}+1}\right\|_{\infty}<\sqrt{n}\left\|\Gamma_{\underline{k}_{\text{U}}+1|\underline{k}_{\text{U}}+1}\right\|\left(1+\frac{\Delta t}{\Delta x}\left(w+v_{\text{m}}\right)\right)+\sqrt{n}q_2\\
& \le \sqrt{n}\left\|\Gamma_{\underline{k}_{\text{U}}+1|\underline{k}_{\text{U}}}\right\|\left(1+\frac{\Delta t}{\Delta x}\left(w+v_{\text{m}}\right)\right)+\sqrt{n}q_2\\
&< n\sqrt{n}\left\|\Gamma_{\underline{k}_{\text{U}}|\underline{k}_{\text{U}}}\right\|\left(1+\frac{\Delta t}{\Delta x}\left(w+v_{\text{m}}\right)\right)^2+n\sqrt{n}q_2\left(1+\frac{\Delta t}{\Delta x}\left(w+v_{\text{m}}\right)\right)+\sqrt{n}q_2\\
& =\bar{\gamma}.
\end{split}
\end{equation*}
Also since
\begin{align*}
\check{K}_k^{(21)}=\left(\check{\Gamma}_{k|k-1}^{(12)}\right)^T\left(\check{H}^{(1)}\right)^T\left(R_k+\check{H}^{(1)}\check{\Gamma}_{k|k-1}^{(1)}\left(\check{H}^{(1)}\right)^T\right)^{-1},
\end{align*}
it follows that
\begin{align*}
\left\|\check{K}_k^{(21)}\right\|_{\infty}&\le \sqrt{2}r_1^{-1}\left\|\left(\check{\Gamma}_{k|k-1}^{(12)}\right)^T\right\|_{\infty}\le 2\sqrt{2}r_1^{-1}\max\left\{\bar{\gamma},\quad \bar{t}\bar{q}\bar{\gamma}+\bar{p},\quad \frac{\bar{t}\bar{p}\bar{q}}{1-\bar{q}}+\bar{p}\right\}, \quad\text{for $k\in(\underline{k}_{\text{U}}+1,\bar{k}_{\text{U}}]$.}
\end{align*}

\noindent \textbf{Step 5}: Combining Steps 1, 2 and 4, it can be concluded that for $k\in (\underline{k}_{\text{U}},\bar{k}_{\text{U}}]$
\begin{align*}
&\left\|K_k\right\|_{\infty}=\left\|U^{-1}\check{K}_k\right\|_{\infty}=\left\|\check{K}_k\right\|_{\infty}\\
\le&\sqrt{2}r_1^{-1}\max\left\{\sqrt{n}\left(\left\|\Gamma_{\underline{k}_{\text{U}}|\underline{k}_{\text{U}}}\right\|\left(1+\frac{\Delta t \left(w+v_{\text{m}}\right)}{\Delta x}\right)+q_2\right), \sqrt{2}\left(r_2+q_2\right),  2\bar{\gamma}, 2\left(\bar{t}\bar{q}\bar{\gamma}+\bar{p}\right), 2\left(\frac{\bar{t}\bar{p}\bar{q}}{1-\bar{q}}+\bar{p}\right)\right\},
\end{align*}
which completes the proof.

\subsection{Proof of Proposition \ref{Prop:UltimateBoundedness}}\label{ap:prop_ub_proof}

The proof is by induction. For all $\epsilon>0$, since the upstream cell is in the observable subsystem, we have $\bm{\rho}^1_{k|k}\rightarrow\rho^1_{k}$, where $\rho^1_{k}\geq0$. Hence a finite time $T_1(\epsilon)$ exists such that $\bm{\rho}^1_{k|k}>-\frac{\epsilon}{n}$ for all $k>T_1(\epsilon)$.

Suppose $\bm{\rho}^{l-1}_{k|k}>-\frac{(l-1)\epsilon}{n}$. For all $l\in\{2,\cdots,n\}$, if $\bm{\rho}^{l}_{k|k}<-\frac{(l-1)\epsilon}{n}$, we obtain from \eqref{eq:qflow} that
\begin{align}
&\mathfrak{f}\left(\bm{\rho}^{l-1}_{k|k},\bm{\rho}^{l}_{k|k}\right)=v_{\text{m}}\bm{\rho}^{l-1}_{k|k}>-v_{\text{m}}\frac{(l-1)\epsilon}{n},\label{UBProofqIn}\\
&\mathfrak{f}\left(\bm{\rho}^{l}_{k|k},\bm{\rho}^{l+1}_{k|k}\right)\leq v_{\text{m}}\bm{\rho}^{l}_{k|k}.\label{UBProofqOut}
\end{align}
Combining \eqref{UBProofqIn} and \eqref{UBProofqOut} with \eqref{eq:DiscretizedLWR}, and adding an information update term from the analysis step yields
\begin{align}
\bm{\rho}^{l}_{k+1|k+1}>\bm{\rho}^{l}_{k|k}+\frac{v_{\text{m}}\Delta t}{\Delta x}\left|\bm{\rho}^{l}_{k|k}+\frac{(l-1)\epsilon}{n}\right|-\bar{c}\left\|\check{\bm{\eta}}^{(1)}_{k|k}\right\|_{\infty},\notag
\end{align}
where $\bar{c}>0$ is a finite scalar whose existence is guaranteed by the boundedness of Kalman gain, and we denote $\check{\bm{\eta}}^{(1)}_{k|k}=\left(\bm{\eta}^{1}_{k|k},\bm{\eta}^{n}_{k|k}\right)^T$ as the posterior estimation error of the upstream and downstream cells, which form an observable subsystem, hence $\left\|\check{\bm{\eta}}^{(1)}_{k|k}\right\|_{\infty}\rightarrow0$ as $k\rightarrow\infty$. Thus a class $\mathcal{K}$ function $\mathfrak{w}_0(\cdot)$ and a continuous positive definite function $\mathfrak{w}\left(|\cdot|\right)$ on $\mathbb{R}$ exist such that $\bm{\rho}^{l}_{k+1|k+1}-\bm{\rho}^{l}_{k|k}>\mathfrak{w}\left(\left|\bm{\rho}^{l}_{k|k}+\frac{(l-1)\epsilon}{n}\right|\right)$ for all $\left|\bm{\rho}^{l}_{k|k}+\frac{(l-1)\epsilon}{n}\right|\geq\mathfrak{w}_0\left(\left\|\check{\bm{\eta}}^{(1)}_{k|k}\right\|_{\infty}\right)$. This indicates that the one-step change of the estimates is always positive, and large enough so that a finite time $T_l(\epsilon)$ exists such that $\bm{\rho}^{l}_{k|k}>-\frac{l\epsilon}{n}$ for all $k>T_l(\epsilon)$ \cite{Khalil}.

By induction we conclude that if $\bm{\rho}^{n-1}_{k|k}>-\frac{(n-1)\epsilon}{n}$, a finite time $T_n(\epsilon)$ exists such that $\bm{\rho}^{n}_{k|k}>-\epsilon$ for all $k>T_n(\epsilon)$. Letting $T(\epsilon)=\max_{l}\{T_l(\epsilon)\}=T_n(\epsilon)$, we obtain $\bm{\rho}^{l}_{k|k}>-\epsilon$ for all $k>T(\epsilon)$ and $l\in\{
1,2,\cdots,n\}$. This proves the ultimate lower bound of the estimates. The proof for an ultimate upper bound is similar, with a variation that the induction is conducted from $n$ to 1.

\subsection{Proof of Lemma \ref{lem:Lyapunov}}\label{ap:Proof_Lyap}
Consider the following linear system:
\begin{align}
\zeta_{k}=F_{k}A_{k-1}\zeta_{k-1}, \quad\text{for $\underline{k}_{\text{O}}< k \le \bar{k}_{\text{O}}$},\label{eq:sys_zeta0}
\end{align}
it follows that
\begin{align}
\zeta_{k}=\left(\prod_{\kappa=k-1}^{\underline{k}_{\text{O}}}F_{\kappa+1}A_{\kappa}\right)\zeta_{\underline{k}_{\text{O}}}, \quad\text{for $\underline{k}_{\text{O}}< k \le \bar{k}_{\text{O}}$}.\label{eq:sys_zeta0_1}
\end{align}
Let $V_{\zeta,k}=\zeta_{k}^T\Gamma_{k|k}^{-1}\zeta_{k}$ be the Lyapunov function candidate of system \eqref{eq:sys_zeta0}. According to Lemma 3 in \cite{Olfati-SaberCDC2009}, the one-step change of $V_{\zeta,k}$ is given by
\begin{equation}
\begin{split}\label{eq:lyap_one_step}
\Delta V_{\zeta,k+1}&=V_{\zeta,k+1}-V_{\zeta,k}=-\zeta_{k}^T\left(A_{k}^TF_{k+1}^T\Gamma_{k+1|k+1}^{-1}F_{k+1}A_{k}-\Gamma_{k|k}^{-1}\right)\zeta_{k}\\
&=-\zeta_{k}^T\left(\Gamma_{k|k}^{-1}-A_{k}^T\left(A_{k}\Gamma_{k|k}A_{k}^T+Q_{k}+\Gamma_{k+1|k}H_{k+1}^TR_{k+1}^{-1}H_{k+1}\Gamma_{k+1|k}\right)^{-1}A_{k}\right)\zeta_{k}\\
&=-\zeta_{k}^T\Gamma_{k|k}^{-1}\left(\Gamma_{k|k}^{-1}+A_{k}^T\left(Q_{k}+\Gamma_{k+1|k}H_{k+1}^TR_{k+1}^{-1}H_{k+1}\Gamma_{k+1|k}\right)^{-1}A_{k}\right)^{-1}\Gamma_{k|k}^{-1}\zeta_{k}\\
&=-\zeta_{k}^T\left(\Gamma_{k|k}+\Gamma_{k|k}A_{k}^T\left(Q_{k}+\Gamma_{k+1|k}H_{k+1}^TR_{k+1}^{-1}H_{k+1}\Gamma_{k+1|k}\right)^{-1}A_{k}\Gamma_{k|k}\right)^{-1}\zeta_{k}\\
&\le -\left\|\Gamma_{k|k}+\Gamma_{k|k}A_{k}^T\left(Q_{k}+\Gamma_{k+1|k}H_{k+1}^TR_{k+1}^{-1}H_{k+1}\Gamma_{k+1|k}\right)^{-1}A_{k}\Gamma_{k|k}\right\|^{-1}\left\|\zeta_{k}\right\|^2,
\end{split}
\end{equation}
where
\begin{equation}\label{eq:lyap_one_step_1}
\begin{split}
&\left\|\Gamma_{k|k}+\Gamma_{k|k}A_{k}^T\left(Q_{k}+\Gamma_{k+1|k}H_{k+1}^TR_{k+1}^{-1}H_{k+1}\Gamma_{k+1|k}\right)^{-1}A_{k}\Gamma_{k|k}\right\|\\
\le & \left\|\Gamma_{k|k}\right\|+\left\|\Gamma_{k|k}A_{k}^T\left(Q_{k}+\Gamma_{k+1|k}H_{k+1}^TR_{k+1}^{-1}H_{k+1}\Gamma_{k+1|k}\right)^{-1}A_{k}\Gamma_{k|k}\right\|\\
< & d_1^{-1}+q_1^{-1}\left\|\Gamma_{k|k}A_{k}^TA_{k}\Gamma_{k|k}\right\|\\
\le & d_1^{-1}+q_1^{-1}\left\|\Gamma_{k|k}\right\|\left\|A_{k}^TA_{k}\right\|\left\|\Gamma_{k|k}\right\|\\
\le & d_1^{-1}+q_1^{-1}d_1^{-2}
\max_{M\in\mathcal{A}_{\text{O}}}\sigma^2_{\max}\left(M\right),
\end{split}
\end{equation}
Combining \eqref{eq:lyap_one_step} and \eqref{eq:lyap_one_step_1}, we obtain
\begin{equation*}
\begin{split}
\Delta V_{k+1}< -\left(d_1^{-1}+q_1^{-1}d_1^{-2}
\max_{M\in\mathcal{A}_{\text{O}}}\sigma^2_{\max}\left(M\right)\right)^{-1}\left\|\zeta_{k}\right\|^2=-\mathfrak{d}\left(d_1,d_2\right)\left\|\zeta_{k}\right\|^2,  \quad \text{for all $\underline{k}_{\text{O}} < k\le \bar{k}_{\text{O}}$.}
\end{split}
\end{equation*}
Consequently, the 2-norm of $\zeta_{k}$ satisfies
\begin{equation}\label{eq:normerror_zeta0}
\begin{split}
\left\|\zeta_{k}\right\|&\le \left(\frac{V_{\zeta,k}}{d_1}\right)^{\frac{1}{2}}<\left(\frac{V_{\zeta,\underline{k}_{\text{O}}}\left(1-\mathfrak{d}\left(d_1,d_2\right)d_2^{-1}\right)^{k-\underline{k}_{\text{O}}}}{d_1}\right)^{\frac{1}{2}}\\
&\le\left(\frac{d_2\left\|\zeta_{\underline{k}_{\text{O}}}\right\|^2\left(1-\mathfrak{d}\left(d_1,d_2\right)d_2^{-1}\right)^{k-\underline{k}_{\text{O}}}}{d_1}\right)^{\frac{1}{2}}\\
&=\left(d_2d_1^{-1}\right)^{\frac{1}{2}}\left(\left(1-\mathfrak{d}\left(d_1,d_2\right)d_2^{-1}\right)^{\frac{1}{2}}\right)^{k-\underline{k}_{\text{O}}}\left\|\zeta_{\underline{k}_{\text{O}}}\right\|, \quad\text{for $\underline{k}_{\text{O}}< k \le \bar{k}_{\text{O}}$}.
\end{split}
\end{equation}
Combining \eqref{eq:sys_zeta0_1} and \eqref{eq:normerror_zeta0}, we obtain
\begin{equation*}
\begin{split}
\left\|\prod_{\kappa=k-1}^{\underline{k}_{\text{O}}}F_{\kappa+1}A_{\kappa}\right\|\le \hat{a}\hat{q}^{k-\underline{k}_{\text{O}}},\quad\text{for $\underline{k}_{\text{O}}< k \le \bar{k}_{\text{O}}$}.
\end{split}
\end{equation*}

Moreover, since for all $M\in\mathcal{A}_{\text{O}}$, the diagonal element of $M^TM$ is no greater than $1+\left(\frac{\Delta t}{\Delta x}\max\left\{v_{\text{m}},w\right\}\right)^2$, thus $\left\|M^TM\right\|<2\left(1+\left(\frac{\Delta t}{\Delta x}\max\left\{v_{\text{m}},w\right\}\right)^2\right)$ since $M^TM>\bm{0}$. It follows that
\begin{align*}
\mathfrak{d}\left(d_1,d_2\right)> \left(d_1^{-1}+2q_1^{-1}d_1^{-2}\left(1+\left(\frac{\Delta t}{\Delta x}\max\left\{v_{\text{m}},w\right\}\right)^2\right)\right)^{-1}.
\end{align*}

\subsection{Proof of Proposition \ref{prop:switch}}\label{ap:prop_switch}

The proof can be done by a straightforward application of the results in Propositions \ref{prop_u_bound} and \ref{prop_o_bound}. Note that when the section is unobservable at time $0$ (i.e., $\underline{k}_{\text{U}}^{1}=0$), we have $\underline{k}_{\text{U}}^{r+1}=\bar{k}_{\text{O}}^{r}$ and $\bar{k}_{\text{U}}^{r}=\underline{k}_{\text{O}}^{r}$ for all $r \in \mathbb{Z}^{+}$. When the section is observable at time $0$ (i.e., $\underline{k}_{\text{O}}^{1}=0$), we have $\underline{k}_{\text{O}}^{r+1}=\bar{k}_{\text{U}}^{r}$ and $\bar{k}_{\text{O}}^{r}=\underline{k}_{\text{U}}^{r}$ for all $r \in \mathbb{Z}^{+}$.

\noindent \textbf{Step 1}: When $r\ge 2$.

\noindent{\textit{(a) For the $r^{\text{th}}$ unobservable time interval $k\in(\underline{k}_{\emph{\text{U}}}^{r},\bar{k}_{\emph{\text{U}}}^{r}]$}:} When the observable time interval right before $(\underline{k}_{\text{U}}^{r},\bar{k}_{\text{U}}^{r}]$ is sufficiently long such that condition \eqref{eq:time_condition} is satisfied, the estimation error at time $\underline{k}_{\text{U}}^{r}$ satisfies (note that the observable time interval right before $(\underline{k}_{\text{U}}^{r},\bar{k}_{\text{U}}^{r}]$ can also be written as $(\bar{k}_{\text{U}}^{r-1},\underline{k}_{\text{U}}^{r}]$)
\begin{align*}
\left\|\bm{\eta}_{\underline{k}_{\text{U}}^{r}|\underline{k}_{\text{U}}^{r}}\right\|\le \delta +\hat{c}+\frac{\hat{c}\mathfrak{a}\left(\Gamma_{\bar{k}_{\text{U}}^{r-1}|\bar{k}_{\text{U}}^{r-1}}\right)\mathfrak{q}\left(\Gamma_{\bar{k}_{\text{U}}^{r-1}|\bar{k}_{\text{U}}^{r-1}}\right)}{1-\mathfrak{q}\left(\Gamma_{\bar{k}_{\text{U}}^{r-1}|\bar{k}_{\text{U}}^{r-1}}\right)}
\end{align*}
based on Proposition \ref{prop_o_bound}. As a consequence, Proposition \ref{prop_u_bound} gives
\begin{equation*}
\begin{split}
\bm{\rho}_{k|k}^{l}&>-\left(\delta +\hat{c}+\frac{\hat{c}\mathfrak{a}\left(\Gamma_{\bar{k}_{\text{U}}^{r-1}|\bar{k}_{\text{U}}^{r-1}}\right)\mathfrak{q}\left(\Gamma_{\bar{k}_{\text{U}}^{r-1}|\bar{k}_{\text{U}}^{r-1}}\right)}{1-\mathfrak{q}\left(\Gamma_{\bar{k}_{\text{U}}^{r-1}|\bar{k}_{\text{U}}^{r-1}}\right)}\right)\left(c_0+\left(n-2\right)\mathfrak{c}\left(\Gamma_{\underline{k}_{\text{U}}^{r}|\underline{k}_{\text{U}}^{r}}\right)\right),\\
\bm{\rho}_{k|k}^{l}&<\varrho_{\text{m}}+\left(\delta +\hat{c}+\frac{\hat{c}\mathfrak{a}\left(\Gamma_{\bar{k}_{\text{U}}^{r-1}|\bar{k}_{\text{U}}^{r-1}}\right)\mathfrak{q}\left(\Gamma_{\bar{k}_{\text{U}}^{r-1}|\bar{k}_{\text{U}}^{r-1}}\right)}{1-\mathfrak{q}\left(\Gamma_{\bar{k}_{\text{U}}^{r-1}|\bar{k}_{\text{U}}^{r-1}}\right)}\right)\left(c_0+\left(n-2\right)\mathfrak{c}\left(\Gamma_{\underline{k}_{\text{U}}^{r}|\underline{k}_{\text{U}}^{r}}\right)\right),
\end{split}
\end{equation*}
for all $l\in\{1,\cdots,n\}$ and $k\in(\underline{k}_{\text{U}}^{r},\bar{k}_{\text{U}}^{r}]$. Consequently, the estimation error satisfies $\left\|\bm{\eta}_{k|k}\right\|\le\mathfrak{e}\left(\delta,\Gamma_{\bar{k}_{\text{U}}^{r-1}|\bar{k}_{\text{U}}^{r-1}},\Gamma_{\underline{k}_{\text{U}}^{r}|\underline{k}_{\text{U}}^{r}}\right)$ for all $k\in(\underline{k}_{\text{U}}^{r},\bar{k}_{\text{U}}^{r}]$.

\noindent{\textit{(b) For the $r^{\text{th}}$ observable time interval $k\in(\underline{k}_{\emph{\text{O}}}^{r},\bar{k}_{\emph{\text{O}}}^{r}]$}:} Note that the unobservable time interval right before $(\underline{k}_{{\text{O}}}^{r},\bar{k}_{{\text{O}}}^{r}]$ is written as $(\underline{k}_{{\text{U}}}^{r},\bar{k}_{{\text{U}}}^{r}]=(\bar{k}_{{\text{O}}}^{r-1},\underline{k}_{\text{O}}^{r}]$ when the section is unobservable at time 0, and is written as $(\underline{k}_{{\text{U}}}^{r-1},\bar{k}_{{\text{U}}}^{r-1}]=(\bar{k}_{{\text{O}}}^{r-1},\underline{k}_{\text{O}}^{r}]$ when the section is observable at time $0$. Similar to Case (\textit{a}) in Step 1, when $\bar{k}_{{\text{O}}}^{r-1}-\underline{k}_{{\text{O}}}^{r-1}$ satisfies condition \eqref{eq:time_condition}, the estimation error at time $\underline{k}_{\text{O}}^{r}$ satisfies
\begin{align*}
\left\|\bm{\eta}_{\underline{k}_{\text{O}}^{r}|\underline{k}_{\text{O}}^{r}}\right\|\le\mathfrak{e}\left(\delta,\Gamma_{\underline{k}_{\text{O}}^{r-1}|\underline{k}_{\text{O}}^{r-1}},\Gamma_{\bar{k}_{\text{O}}^{r-1}|\bar{k}_{\text{O}}^{r-1}}\right).
\end{align*}
Applying Proposition \ref{prop_o_bound}, it is concluded that for $k\in(\underline{k}_{\text{O}}^{r},\bar{k}_{\text{O}}^{r}]$,
\begin{align*}
\left\|\bm{\eta}_{k|k}\right\|\le\max\left\{\hat{c}+\mathfrak{a}\left(\Gamma_{\underline{k}_{\text{O}}^{r}|\underline{k}_{\text{O}}^{r}}\right)\mathfrak{q}\left(\Gamma_{\underline{k}_{\text{O}}^{r}|\underline{k}_{\text{O}}^{r}}\right) \mathfrak{e}\left(\delta,\Gamma_{\underline{k}_{\text{O}}^{r-1}|\underline{k}_{\text{O}}^{r-1}},\Gamma_{\bar{k}_{\text{O}}^{r-1}|\bar{k}_{\text{O}}^{r-1}}\right), \hat{c}+\frac{\hat{c}\mathfrak{a}\left(\Gamma_{\underline{k}_{\text{O}}^{r}|\underline{k}_{\text{O}}^{r}}\right)\mathfrak{q}\left(\Gamma_{\underline{k}_{\text{O}}^{r}|\underline{k}_{\text{O}}^{r}}\right)}{1-\mathfrak{q}\left(\Gamma_{\underline{k}_{\text{O}}^{r}|\underline{k}_{\text{O}}^{r}}\right)} \right\}.
\end{align*}

\noindent \textbf{Step 2}: When $r=1$ and $\underline{k}_{{\text{U}}}^1=0$.

\noindent{\textit{(a) For the $1^{\text{st}}$ unobservable time interval $k\in(\underline{k}_{\emph{\text{U}}}^{1},\bar{k}_{\emph{\text{U}}}^{1}]$}:} In this case, the section is unobservable at time $0$. Since $\bm{\rho}^{l}_{0|0}\in [0,\varrho_{\text{m}}]$, the initial estimation error satisfies $\left\|\bm{\eta}_{0|0}\right\|\le \sqrt{n}\varrho_{\text{m}}$. According to Proposition \ref{prop_u_bound}, we have
\begin{equation*}
\begin{split}
\bm{\rho}_{k|k}^{l}&>-\sqrt{n}\varrho_{\text{m}}\left(c_0+\left(n-2\right)\mathfrak{c}\left(\Gamma_{0|0}\right)\right)\\
\bm{\rho}_{k|k}^{l}&<\varrho_{\text{m}}+\sqrt{n}\varrho_{\text{m}}\left(c_0+\left(n-2\right)\mathfrak{c}\left(\Gamma_{0|0}\right)\right),
\end{split}
\end{equation*}
for all $l\in\{1,\cdots,n\}$ and $k\in(\underline{k}_{\text{U}}^{1},\bar{k}_{\text{U}}^{1}]$. It follows that
\begin{align*}
\left\|\bm{\eta}_{k|k}\right\|\le \sqrt{n}\left(\sqrt{n}\varrho_{\text{m}}\left(c_0+\left(n-2\right)\mathfrak{c}\left(\Gamma_{0|0}\right)\right)+\varrho_{\text{m}} \right)=\mathfrak{e}_0\left(\Gamma_{0|0}\right), \quad \text{for $k\in(\underline{k}_{\text{U}}^{1},\bar{k}_{\text{U}}^{1}]$.}
\end{align*}

\noindent{\textit{(b) For the $1^{\text{st}}$ observable time interval $k\in(\underline{k}_{\emph{\text{O}}}^{1},\bar{k}_{\emph{\text{O}}}^{1}]$}:} When the section switches from an unobservable mode at time $\underline{k}_{\text{O}}^{1}$ to an observable mode at time $\underline{k}_{\text{O}}^{1}+1$, it is shown in Case (\textit{a}) of Step 2 that the mean error is upper bounded by
\begin{align*}
\left\|\eta_{\underline{k}_{\text{O}}^{1}|\underline{k}_{\text{O}}^{1}}\right\|\le \mathfrak{e}_0\left(\Gamma_{0|0}\right).
\end{align*}
Applying Proposition \ref{prop_o_bound}, it follows that
\begin{align*}
\left\|\bm{\eta}_{k|k}\right\|
\le\max\left\{\hat{c}+\mathfrak{a}\left(\Gamma_{\underline{k}_{\text{O}}^{1}|\underline{k}_{\text{O}}^{1}}\right)\mathfrak{q}\left(\Gamma_{\underline{k}_{\text{O}}^{1}|\underline{k}_{\text{O}}^{1}}\right) \mathfrak{e}_0\left(\Gamma_{0|0}\right), \hat{c}+\frac{\hat{c}\mathfrak{a}\left(\Gamma_{\underline{k}_{\text{O}}^{1}|\underline{k}_{\text{O}}^{1}}\right)\mathfrak{q}\left(\Gamma_{\underline{k}_{\text{O}}^{1}|\underline{k}_{\text{O}}^{1}}\right)}{1-\mathfrak{q}\left(\Gamma_{\underline{k}_{\text{O}}^{1}|\underline{k}_{\text{O}}^{1}}\right)} \right\},\quad \text{for $k\in(\underline{k}_{\text{O}}^{1},\bar{k}_{\text{O}}^{1}]$.}
\end{align*}

\noindent \textbf{Step 3}: When $r=1$ and $\underline{k}_{{\text{O}}}^1=0$.

\noindent{\textit{(a) For the $1^{\text{st}}$ unobservable time interval $k\in(\underline{k}_{{\emph{\text{U}}}}^{1},\bar{k}_{\emph{\text{U}}}^{1}]$}:} In this case, the section is observable at time $0$. When $\bar{k}_{{\text{O}}}^{1}-\underline{k}_{{\text{O}}}^{1}$ is larger than the third residence time listed in \eqref{eq:time_condition}, the estimation error at time $\bar{k}_{\text{O}}^{1}$ satisfies
\begin{align*}
\left\|\bm{\eta}_{\bar{k}_{\text{O}}^{1}|\bar{k}_{\text{O}}^{1}}\right\|\le \delta +\hat{c}+\frac{\hat{c}\mathfrak{a}\left(\Gamma_{0|0}\right)\mathfrak{q}\left(\Gamma_{0|0}\right)}{1-\mathfrak{q}\left(\Gamma_{0|0}\right)}
\end{align*}
based on Proposition \ref{prop_o_bound}. As a consequence, Proposition \ref{prop_u_bound} gives
\begin{equation*}
\begin{split}
\bm{\rho}_{k|k}^{l}&>-\left(\delta +\hat{c}+\frac{\hat{c}\mathfrak{a}\left(\Gamma_{0|0}\right)\mathfrak{q}\left(\Gamma_{0|0}\right)}{1-\mathfrak{q}\left(\Gamma_{0|0}\right)}\right)\left(c_0+\left(n-2\right)\mathfrak{c}\left(\Gamma_{\underline{k}_{\text{U}}^{1}|\underline{k}_{\text{U}}^{1}}\right)\right),\\
\bm{\rho}_{k|k}^{l}&<\varrho_{\text{m}}+\left(\delta +\hat{c}+\frac{\hat{c}\mathfrak{a}\left(\Gamma_{0|0}\right)\mathfrak{q}\left(\Gamma_{0|0}\right)}{1-\mathfrak{q}\left(\Gamma_{0|0}\right)}\right)\left(c_0+\left(n-2\right)\mathfrak{c}\left(\Gamma_{\underline{k}_{\text{U}}^{1}|\underline{k}_{\text{U}}^{1}}\right)\right),
\end{split}
\end{equation*}
for all $l\in\{1,\cdots,n\}$ and $k\in(\underline{k}_{\text{U}}^{1},\bar{k}_{\text{U}}^{1}]$. Consequently, the estimation error satisfies $\left\|\bm{\eta}_{k|k}\right\|\le\mathfrak{e}\left(\delta,\Gamma_{0|0},\Gamma_{\underline{k}_{\text{U}}^{1}|\underline{k}_{\text{U}}^{1}}\right)$ for all $k\in(\underline{k}_{\text{U}}^{1},\bar{k}_{\text{U}}^{1}]$.

\noindent{\textit{(b) For the $1^{\text{st}}$ observable time interval $k\in(\underline{k}_{\emph{\text{O}}}^{1},\bar{k}_{\emph{\text{O}}}^{1}]$}:} Since the section is observable at time $0$, it holds that $\underline{k}_{{\text{O}}}^{1}=0$. In this case, we have $\Gamma_{\underline{k}_{{\text{O}}}^{1}|\underline{k}_{{\text{O}}}^{1}}=\Gamma_{0|0}$ and $\left\|\bm{\eta}_{\underline{k}_{{\text{O}}}^{1}|\underline{k}_{{\text{O}}}^{1}}\right\|=\left\|\bm{\eta}_{0|0}\right\|\le\sqrt{n}\varrho_{\text{m}}$. Then we can directly apply Proposition \ref{prop_o_bound} and conclude that
\begin{align*}
\left\|\bm{\eta}_{k|k}\right\|
\le\max\left\{\hat{c}+\mathfrak{a}\left(\Gamma_{0|0}\right)\mathfrak{q}\left(\Gamma_{0|0}\right) \sqrt{n}\varrho_{\text{m}},\hat{c}+\frac{\hat{c}\mathfrak{a}\left(\Gamma_{0|0}\right)\mathfrak{q}\left(\Gamma_{0|0}\right)}{1-\mathfrak{q}\left(\Gamma_{0|0}\right)} \right\},
\end{align*}
for $k\in(\underline{k}_{\text{O}}^{1},\bar{k}_{\text{O}}^{1}]$.

We conclude the proof by combining the above three steps.

% if have a single appendix:
%\appendix[Proof of the Zonklar Equations]
% or
%\appendix  % for no appendix heading
% do not use \section anymore after \appendix, only \section*
% is possibly needed

% use appendices with more than one appendix
% then use \section to start each appendix
% you must declare a \section before using any
% \subsection or using \label (\appendices by itself
% starts a section numbered zero.)
%

% Can use something like this to put references on a page
% by themselves when using endfloat and the captionsoff option.
\ifCLASSOPTIONcaptionsoff
  \newpage
\fi

% trigger a \newpage just before the given reference
% number - used to balance the columns on the last page
% adjust value as needed - may need to be readjusted if
% the document is modified later
%\IEEEtriggeratref{8}
% The "triggered" command can be changed if desired:
%\IEEEtriggercmd{\enlargethispage{-5in}}

% references section

% can use a bibliography generated by BibTeX as a .bbl file
% BibTeX documentation can be easily obtained at:
% http://www.ctan.org/tex-archive/biblio/bibtex/contrib/doc/
% The IEEEtran BibTeX style support page is at:
% http://www.michaelshell.org/tex/ieeetran/bibtex/
%\bibliographystyle{IEEEtran}
% argument is your BibTeX string definitions and bibliography database(s)
%\bibliography{IEEEabrv,../bib/paper}
%
% <OR> manually copy in the resultant .bbl file
% set second argument of \begin to the number of references
% (used to reserve space for the reference number labels box)

\bibliographystyle{IEEEtran}
\bibliography{KCF_proof}

% biography section
%
% If you have an EPS/PDF photo (graphicx package needed) extra braces are
% needed around the contents of the optional argument to biography to prevent
% the LaTeX parser from getting confused when it sees the complicated
% \includegraphics command within an optional argument. (You could create
% your own custom macro containing the \includegraphics command to make things
% simpler here.)
%\begin{IEEEbiography}[{\includegraphics[width=1in,height=1.25in,clip,keepaspectratio]{mshell}}]{Michael Shell}
% or if you just want to reserve a space for a photo:
\vspace{-0.4in}

% if you will not have a photo at all:
%\begin{IEEEbiographynophoto}{John Doe}
%Biography text here.
%\end{IEEEbiographynophoto}

% insert where needed to balance the two columns on the last page with
% biographies
%\newpage

%\begin{IEEEbiographynophoto}{Jane Doe}
%Biography text here.
%\end{IEEEbiographynophoto}

% You can push biographies down or up by placing
% a \vfill before or after them. The appropriate
% use of \vfill depends on what kind of text is
% on the last page and whether or not the columns
% are being equalized.

%\vfill

% Can be used to pull up biographies so that the bottom of the last one
% is flush with the other column.
%\enlargethispage{-5in}

% that's all folks
\end{document}